\DeclareMathAlphabet{\mathpzc}{OT1}{pzc}{m}{it}
\DeclareMathOperator{\diag}{diag}
\newtheorem{claim}{Claim}[section]
\newtheorem{lemma}[claim]{Lemma}
\newtheorem{conjecture}[claim]{Conjecture}
\newtheorem{theorem}{Theorem}
\newtheorem{definition}[claim]{Definition}
\newcommand{\ra}[1]{\renewcommand{\arraystretch}{#1}}
\def\<{\langle}
\def\>{\rangle}
\def\eps{{\varepsilon}}
\def\Esp{E^{\rm sp}}
\def\Et{E^{\rm t}}
\def\prob{{\mathbb P}}
\def\integers{{\mathbb Z}}
\def\naturals{{\mathbb N}}
\def\E{{\mathbb E}} %expectation
\def\Var{{\textup{Var}}}
\def\const{C}
\def\Ev{{\sf E}}
\def\Re{\operatorname{Re}}
\def\one{\mathbf{1}}
\def\reals{\mathbb{R}}
\def\Z{{\mathbb{Z}}}
\def\const{{\kappa}}
\def\normal{{\sf N}}
\def\ve{\varepsilon}
\def\cS{{\cal S}}
\def\cA{{\cal A}}
\def\Var{{\rm Var}}
\def\cL{{\cal L}}
\def\de{{\rm d}}
\def\oD{\overline{B}}
\def\oF{\overline{F}}
\def\oW{\overline{W}}
\def\oY{\overline{Y}}
\def\oZ{\overline{Z}}
\def\oU{\overline{U}}
\def\Ftail{{\sf F}}
\def\etot{\varepsilon_{\rm tot}}
\def\K{{\cal K}}
\def\cR{\mathcal{R}}
\def\S{\mathcal{S}}
\def\Var{{\rm Var}}
\def\Ev{{\sf E}}
\def\ind{\mathbb{I}}
\newcommand\B{\rule[-1ex]{0pt}{0pt}}
\newcommand{\bd}{\mathbf}
\def\F{{\mathcal F}}
\def\Z{{\mathcal Z}}
\def\B{{\mathcal B}}
\def\W{{\mathcal W}}
\def\Y{{\mathcal Y}}
\def\one{{\bf 1}}
\def\hV{{\widehat{V}}}
\def\hE{{\widehat{E}}}
\def\hF{{\widehat{\F}}}
\def\hC{{\widehat{C}}}
\def\bb{{\bf b}}
\begin{document}

\title{Distributed Storage for Intermittent Energy Sources:\\ Control Design and Performance Limits}

\author{Yashodhan Kanoria,\; Andrea Montanari, \;\; David Tse, \; Baosen Zhang \\
\hspace{0.3cm} Stanford University \hspace{2.5cm} U.C. Berkeley
}

\maketitle

\begin{abstract}
One of the most important challenges in the integration of renewable
energy sources into the power grid lies in their `intermittent'
nature. The power output of sources like wind and solar varies with
time  and location due to factors that cannot be controlled by the
provider. Two strategies have been proposed to hedge against this
variability: 1) use energy storage systems to effectively average
the produced power over time; 2) exploit distributed generation to
effectively average production over location. We introduce a network
model to study the optimal use of storage and transmission resources
in the presence of random energy sources. We propose a
Linear-Quadratic based methodology to design control
strategies, and we show that these strategies are asymptotically
optimal for some simple network topologies. For these topologies,
the dependence of optimal performance on storage and transmission
capacity is explicitly quantified.

\end{abstract}

\section{Introduction}

It is widely advocated that future power grids should facilitate
the integration of a significant amount of renewable energy sources.
Prominent examples of renewable sources are
wind and solar. These differ substantially from traditional
sources in terms of two important qualitative  features:

\noindent {\bf Intrinsically distributed.} The power generated by these sources
is typically proportional to the surface occupied by the corresponding
generators. For instance, the solar power reaching ground is of the order of $2$ kWh
per day per  square meter. The wind power at ground level  is of the order
of  $0.1$ kWh per day per square meter \cite{MacKayBook}.
These constraints on renewable power generation have
important engineering implications. If a significant part of energy
generation is to be covered by renewables, generation is argued to be
distributed over large geographical areas.

\noindent {\bf Intermittent}. The output of renewable sources varies with  time
and  locations because of exogenous factors. For instance, in the case of wind and
solar energy, the power output is ultimately determined by
meteorological conditions.
One can roughly distinguish two sources of variability: \emph{predictable
variability}, e.g. related to the day-night cycle, or to seasonal
differences; \emph{unpredictable variability}, which is most
conveniently modeled as a random process.

Several ideas have been put forth to meet the challenges posed by
intermittent production. The first one is to leverage
geographically  distributed production. The  output of distinct
generators is likely to be independent or weakly dependent over large distances and
therefore the total production of a large number of well separated
generators should stay approximately constant,
by a law-of-large-number effect. This averaging effect should be
enhanced by the integration of different types of generators.

The second approach is to use energy storage to take
advantage of over-production at favorable times, and cope with
shortages at unfavorable times. Finally, a third idea is  `demand response',
which aims at scheduling in optimal ways some
time-insensitive energy demands. In several cases, this can be
abstracted as some special  form of energy storage  (for instance,
when energy is demanded for interior  heating, deferring a demand is
equivalent to exploiting the energy stored as hot air inside the building).

These approaches hedge against the energy source variability by
averaging over location, or by averaging over time.
Each of them requires specific infrastructures:
a power grid with sufficient transmission capacity in the first case,
and sufficient energy storage infrastructure in the second one.
Further, these two directions are in fact intimately related. With
current technologies, it is unlikely that centralized energy storage
can  provide effective time averaging of
--say-- wind power production, in a renewables-dominated scenario.
In a more realistic scheme, storage is distributed at the consumer
level, for instance leveraging electric car batteries
(a scenario known as vehicle-to-grid or V2G).  Distributed storage
implies, in turn, substantial changes of the demand on the transmission system.

The use of storage devices to average out intermittent renewables
production is well established. A substantial research effort has
been devoted to its design, analysis and optimization (see, for
instance,
\cite{Korpaas2003a,Korpaas2003b,Jukka2005,Brown2008,Nyamdasha2010,Su2011}).
In this line of work, a large renewable power generator is typically
coupled with a storage system in order to average out its power
production. Proper sizing, response time, and efficiency of the
storage system are the key concerns.

If, however, we assume that storage will be mainly distributed, the key
design questions change. It is easy to understand that
both storage and transmission capacity will have a significant effect on the ability of the network to average out the energy source variability. For example, shortfalls at a node can be compensated by either withdrawals from local storage or extracting power from the rest of the network, or a combination of both. The main goal of this paper is to understand the optimal way of utilizing simultaneously these two resources and to quantify the impact of these two resources on performance. Our contributions are:
\begin{itemize}
\item a simple model capturing key features of the problem;
\item a Linear-Quadratic(LQ) based methodology for the systematic design of control strategies;
\item a proof of optimality of the LQ control strategies in simple network topologies such as the 1-D and 2-D grids and in certain asymptotic regimes.
\item a quantification of how the performance depends on key parameters such as storage and transmission capacities.
\end{itemize}
The reader interested in getting an overview of the conclusions
without the technical details can read Sections \ref{sec:Model} and
\ref{sec:Discussion} only. Some details are omitted and deferred to the
journal version of this paper.

%%In the next section we define a mathematical model for studying this problem. Our %%objective is to build the simplest model that captures
%%the fundamental features of the problem. In Section \ref{sec:LQG_design}, we %%approximate the problem as a Linear-Quadratic-Gaussian (LQG) control problem which %%provides a methodology for the systematic design of control strategies.
%%In Section \ref{sec:performance}, we
%%evaluate the performance of the resulting control strategies on
%%certain simple network topologies such as 1-dimensional and
%%2-dimensional grid. In Section \ref{sec:lower}, we show that for these
%%networks, the LQG control strategies have certain optimality
%%properties. For these network topologies, the optimal performance is
%%explicitly quantified in terms of key parameters such as storage and
%%transmission capacities.

\section{Model and Problem Formulation}
\label{sec:Model}

The power grid is modeled as a  weighted graph $G$ with vertices
(buses or nodes) $V$, edges (lines) $E$. Time is slotted and will be
indexed by $t\in\{0,1,2,\dots\}$. In slot $t$, at each node $i\in V$
a quantity of energy $E_{{\rm p},i}(t)$ is generated from a
renewable source, and a demand of a quantity $E_{{\rm
d},i}(t)$ is received. For our purposes, these quantities only
enter the analysis through the net generation $\Z_i(t) = E_{{\rm
p},i}(t)-E_{{\rm d},i}(t)$. Let $\Z(t)$ be the vector of $\Z_i(t)$'s.
We will assume that $\{\Z(t)\}$ is a stationary process.

In order to average the variability in the energy supply, the system
makes use of storage and transmission. Storage is fully distributed: each node $i \in V$ has a device that
can store energy, with  capacity
$S_i$. We assume that stored energy can be fully recovered when needed
(i.e., no losses). At each time slot $t$, one can transfer an amount of energy $\Y_i(t)$ to storage at node $i$. If we denote by $\B_{i}(t)$ the amount of stored energy at node $i$ just before the beginning of time slot $t$, then:
\begin{equation}
\label{eq:storage_evo} \B_i(0) = 0, \qquad \B_i(t+1) = \B_i(t) +
\Y_i(t)
\end{equation}
where $\Y_i(t)$ is chosen under the constraint that $\B_i(t+1) \in [0,S_i]$.

We will also assume the availability at each node of a fast generation
source (such as a spinning reserve  or backup generator) which allows covering
 up of shortfalls. Let $\W_i(t)$ be the energy obtained from such a
 source at node i at time slot $t$. We will use the convention that
 $\W_i(t)$  negative means that energy is consumed from the fast
 generation source, and positive  means energy is dumped.
The cost of using fast generation energy sources is reflected in the
steady-state performance measure:
\begin{equation}
\label{eq:fast} \ve_{\W} \equiv \lim_{t\rightarrow \infty}
\frac{1}{|V|}\sum_{i\in V}\E\{\big(\W_i(t)\big)_-\}\
\end{equation}
The net amount of energy injection at node $i$ at time slot $t$ is:
$$ \Z_i(t) - \Y_i(t) - \W_i(t).$$
These injections have to be distributed across the transmission network, and the ability of the network to distribute the injections and hence to average the random energy sources over space is limited by the transmission capacity of the network. To understand this constraint, we need to relate the injections to the power flows on the transmission lines. To this end,  we adopt a `DC power flow' approximation model \cite{DCpowerflow}. \footnote{Despite the name, `DC flow' is an approximation to the AC flow}

Each edge in the network corresponds to a transmission line which is purely
inductive, i.e. with susceptance $-jb_e$, where $b_e\in \reals_+$. Hence, the network is lossless.
Node $i\in V$ is at voltage $V_{i}(t)$, with all the voltages assumed to have
the same magnitude, taken to be $1$ (by an appropriate choice of units). Let $V_{i}(t) = \, e^{j \phi_{i}(t)}$
denote the (complex) voltage at node $i$ in time slot $t$.
If $I_{i,k}(t) = -jb_{ik}(V_i(t)-V_k(t))$ is the electric current
from  $i$ to $k$, the corresponding power flow is then
$\F_{i,k}(t) = \Re[V_i(t)I_{i,k}(t)^*]=\Re [j b_{ik} (1 -
e^{j(\phi_i(t)- \phi_k(t))})] = b_{ik} \sin(\phi_i(t) -
\phi_k(t))$, where $\Re [\cdot]$ denotes the real part of a complex
number.

The DC flow approximation replaces $\sin(\phi_i(t)- \phi_k(t))$ by
$\phi_i(t) - \phi_k(t)$ in the above expression. This is usually a good approximation
since the phase angles at neighboring nodes are typically
maintained  close to each other to ensure that the generators at the
two ends remain in step.
This leads to the following relation between angles and power flow
$\F_{i,k}(t) =  b_{ik} (\phi_i(t)- \phi_{k}(t))$.
In matrix notation, we have
\begin{equation} \label{eq:nablaphi}
\F(t) = \nabla \phi(t),
\end{equation}
where $\F(t)$ is the vector of all power flows, $\phi(t)=(\phi_1(t),\dots,\phi_n(t))$ and $\nabla$ is a $|E| \times |V|$ matrix. $\nabla_{e,i}=b_e$ if $e=(i,k)$ for some $k$, $\nabla_{e,i}=-b_e$ if $e=(k,i)$ for some $k$, and $\nabla_{e,i}=0$ otherwise.
%\begin{equation}
%\nabla_{e,i}=
%\begin{cases}
%b_e & \mbox{ if } e=(i,k) \mbox{ for some } k \\
%-b_e & \mbox{ if } e=(k,i) \mbox{ for some } k \\
%\end{cases}.
%\end{equation}

Energy conservation at node $i$ also yields
%%(here we assume that injections at time $t$ are balanced at time $t+1$)
$$ \Z_i(t)-\Y_i(t)- \W_i(t) = \sum_k \F_{(i,k)}(t) = \big(\nabla^T
\bb^{-1}\F(t)\big)_i,$$ where $\bb=\diag(b_e)$ is an $|E|\times|E|$
diagonal matrix. Expressing $\F(t)$ in terms of $\phi(t)$, we get
\begin{equation} \label{eq:angle}
\Z(t)-\Y(t)-\W(t) = -\Delta \phi(t),
\end{equation}
where $\Delta = -\nabla^T \bb^{-1}\nabla$ is a  $|V| \times |V|$ symmetric matrix where $\Delta_{i,k}=- \sum_{l:(i,l) \in E} b_{il}$ if $i=k$, $\Delta_{i,k}=b_{ik}$ if $(i,k) \in E$ and $0$ otherwise.
%\begin{equation}
%\Delta_{i,k}=
%\begin{cases}
%- \sum_{l:(i,l) \in E} b_{il} & \mbox{ if } i =k \\
% b_{ik} & \mbox{ if } (i,k) \in E, i\neq k \\
% 0 & \mbox{ if } (i,k) \notin E
%\end{cases}.
%\end{equation}
In graph theory, $\Delta$ is called the graph Laplacian matrix. In power engineering, it is simply the imaginary part of the bus admittance matrix of the network.  Note that if $b_e\ge 0$ for all edges $e$, then $-\Delta \succcurlyeq 0$
is positive semidefinite. If the network is connected (which we assume
throughout),
it has only one eigenvector with eigenvalue $0$, namely the vector
$\varphi_v= 1$ everywhere (hereafter we will denote this as the vector
$\one$). This fits the physical fact that if all phases are rotated by the same amount, the powers in the network are not changed.

With an abuse of notation, we denote by $\Delta^{-1}$ the matrix 
such that $\Delta^{-1}\one = -M\, \one$, 
and  $\Delta^{-1}$ is equal to the inverse of $\Delta$ on the
subspace orthogonal to $\one$. Here $M>0$ is arbitrary, and all of our
results are independent of this choice (conceptually, one can think of
$M$ as very large).
Explicitly, let $\Delta=-\bd{V} \alpha^2 \bd{V}^T$ be the eigenvalue decomposition
of $\Delta$, where $\alpha$ is a diagonal matrix with non-negative entries. Define
$\alpha^{\dag}$ to be the diagonal matrix with $\alpha^{\dag}_{ii} =
M$ if $\alpha_{ii}=0$ and $\alpha^{\dag}_{ii} = \alpha_{ii}^{-1}$
otherwise. Then $\Delta^{-1} = -\bd{V} (\alpha^{\dag})^2 \bd{V}^T$.

Since the total power injection in the network adds up to zero
(which must be true by energy conservation),  we can invert
\eqref{eq:angle} and obtain
\begin{equation}
\phi(t)= -\Delta^{-1} (\Z(t)-\Y(t)-\W(t))\, .
\end{equation}
Plugging this into \eqref{eq:nablaphi}, we have
\begin{equation}
\F(t)  =  -\nabla\Delta^{-1}\big(\Z(t)-\Y(t)-\W(t)\big)\, .
\label{eq:FlowConstruction}
\end{equation}

%Combining above, the power flows are actually determined by the injections through the following matrix relationship:
%\begin{equation}
%\F(t)  =  \nabla\Delta^{-1}\big(\Z(t)-\Y(t)-\W(t)\big)\, ,\label{eq:FlowConstruction}
%\end{equation}
%where
%\begin{eqnarray*}
%%
%\nabla :\reals^V& \to & \reals^E\, ,\\
%\varphi &\mapsto & (\nabla \varphi) \mbox{  s.t. }  (\nabla
%\varphi)_{u,v}= b_{(u,v)}(\varphi_v-\varphi_u)\\
%\Delta :\reals^V&\to&\reals^V\\
%\varphi&\mapsto&  (\Delta\varphi) \mbox{  s.t. }  (\Delta\varphi)_{v}=
%\sum_{u\in\dv}b_{(u,v)}
%(\varphi_u-\varphi_v)\, .
%%
%\end{eqnarray*}
%are the gradient and Laplacian operators respectively, and $\Z(t) = (\Z_1(t),\Z_2(t), \ldots )$, etc. Note that if $b_e\ge 0$ for all edges $e$, then $\Delta\preceq 0$
%is negative semidefinite. If the network is connected (which we assume
%throughout),
%it has only one eigenvector with eigenvalue $0$, namely the vector
%$\varphi_v= 1$ everywhere.

There is a capacity limit $C_e$ on the power flow along each edge
$e$; this capacity limit depends on the voltage magnitudes and the
maximum allowable phase differences between adjacent nodes, as well
as possible thermal line limits. We will measure violations of this
limit by defining
\begin{equation}
\label{eq:flow_vio} \ve_{\F}  \equiv \lim_{t\rightarrow \infty}
\frac{1}{|E|}\sum_{e\in E} \E\{(\F_e(t)-C_e)_++ (-C_e-\F_e(t))_+\}\, .
\end{equation}

We are now ready to state the design problem:

\emph{For the dynamic system defined by equations
(\ref{eq:storage_evo}) and (\ref{eq:FlowConstruction}), design a
control strategy which, given the past and present random renewable
supplies and the storage states,
$$
\big\{(\Z(t),\B(t));\, (\Z(t-1),\B(t-1)),  \dots, \big \}
$$
choose the vector of energies $\Y(t)$ to put in storage and the vector
of fast generations $\W(t)$ such that the sum
 $\etot \equiv \ve_{\F}+ \ve_{\W}$, cf. Eq.~(\ref{eq:fast})  and (\ref{eq:flow_vio}),  is minimized.}

\section{Linear-Quadratic Design}
\label{sec:LQG_design}
In this section, we propose a design methodology that is based on
Linear-Quadratic (LQ) control theory.
\subsection{The Surrogate LQ Problem}
The difficulty of the control problem defined above stems from both
the nonlinearity of the dynamics due to the hard storage limits and
the piecewise linearity of the cost functions giving rise to the
performance parameters. Instead of attacking the problem directly,
we consider a surrogate LQ problem where the hard storage limits are
removed and the cost functions are quadratic:
\begin{eqnarray}
B_i(0)  =  \frac{-S_i}{2}, \qquad B_i(t+1)  =  B_i(t) + Y_i(t)\, ,\\
F(t)   =   -\nabla\Delta^{-1}\big(Z(t)-Y(t)-W(t)\big)\, ,
\end{eqnarray}
with performance parameters:
\begin{eqnarray}
 \ve^{\rm surrogate}_{W_i} & = &  \lim_{t \rightarrow \infty}
\E\{\big(W_i(t)\big)^2\}, \quad i \in V\, ,\\
\ve^{\rm surrogate}_{F_e} & =& \lim_{t\rightarrow \infty}
\E\{(F_e(t))^2\}, \quad e \in E\, ,\\
\ve^{\rm surrogate}_{B_i} & = & \lim_{t\rightarrow \infty}
\E\{(B_i(t))^2\}\, .
\end{eqnarray}
The process $B_i(t)$ can be interpreted as the deviation of a
virtual storage level process from the midpoint $S_i/2$, where the
virtual storage level process is no longer hard-limited but evolves
linearly. Instead, we penalize the deviation through a quadratic
cost function in the additional performance parameters $\ve^{\rm surrogate}_{B_i}$.

The virtual processes $B(t)$, $F(t)$, $W(t)$, $Y(t)$ and $Z(t)$ are
connected to the actual processes $\B(t)$, $\F(t)$, $\W(t)$, $\Y(t)$
and $\Z(t)$ via the mapping (where $[x]^b_a : = \max(\min(x,b),a)$ for $a \le b$):
\begin{align}
\Z_i(t) &= Z_i(t)\, , \;\; \F_{e}(t) = F_{e}(t)\, ,\label{eq:FlowMatch}\\
\B_i(t) &= \left [\, B_i(t) + S_i/2 \, \right ]_0^{S_i}\, ,
\label{eq:Bmapping}\\
\Y_i(t) &= \B_i(t+1) - \B_i(t)\, ,\label{eq:cBi}\\
\W_i(t) &= W_i(t) + Y_i(t) - \Y_i(t)\, .\label{eq:wi}
\end{align}
In particular, once we solve for the optimal control in the
surrogate LQ problem, (\ref{eq:cBi}) and (\ref{eq:wi}) tell us what
control to use in the actual system. Notice that the actual fast
generation control provides the fast generation in the virtual
system plus an additional term that keeps the actual storage level
process within the hard limit. Note also
\begin{align}
\label{eq:slashW_bound} \W_i(t) \geq W_i(t) - (B_i(t) - S_i/2)_+ -
(-B_i(t) - S_i/2)_+\, .
\end{align}
Hence the performance parameters $\ve_{\F}$, $\ve_{\W}$ can be
estimated from the corresponding ones for the virtual processes.

Now we turn to solving the surrogate LQ problem. First we formulate
it in standard state-space form. For simplicity, we will assume
$\{Z(t)\}_{t\ge 0}$ is an i.i.d. process 
(over time).\footnote{The case of a process $\{Z(t)\}_{t\ge 0}$ with
  memory can be in principle studied within the same framework, by
  introducing a linear state space model for $Z(t)$ and correspondingly
  augmenting the state space of the control problem.}
Hence $X(t):=[F(t-1)^T, B(t)^T]^T$ is
the state of the system. Also, $U(t):= [Y(t)^T, W(t)^T]^T$ is the control and $R(t) :=
[X(t)^T,Z(t)^T]^T$ is the observation vector available to the
controller. Then
\begin{eqnarray} \label{eq:state}
X(t+1) & = & \bd{A} X(t) + \bd{D} U(t) + \bd{E} Z(t),\\
\label{eq:observer} R(t) &=  & \bd{C} X(t) + \zeta(t)\, ,
\end{eqnarray}
where
\begin{eqnarray*}
\bd{A} \equiv \begin{bmatrix} 0 & 0 \\ 0 & \bd{I} \end{bmatrix},
\;\;\bd{D}\equiv \begin{bmatrix} -\nabla\Delta^{-1} & -\nabla\Delta^{-1} \\
  \bd{I} & 0 \end{bmatrix}\, ,\;\; \bd{E}\equiv \begin{bmatrix}
\nabla\Delta^{-1} \\ 0 \end{bmatrix}\, .
\end{eqnarray*}
and $\bd{C}=\begin{bmatrix} 0 & 0 \\ 0 & \bd{I} \end{bmatrix}$ and
$\zeta(t)=\begin{bmatrix} \bd{I} \\ 0 \end{bmatrix} Z(t)$.
We are interested in trading off between the performance parameters
$\ve_{F_e}, \ve_{W_i}$ and $\ve_{B_i}$'s. Therefore we introduce
weights $\gamma_e$'s , $\xi_i$'s, $\eta_i$'s and define the
Lagrangian
\begin{align}
\cL(t)&\equiv  \sum_{e=1}^{|E|} \gamma_e  \mathbb{E}\{F_e (t)^2\}+
\sum_{i=1}^{|V|} \xi_i \mathbb{E}\{B_i (t)^2\}+ \sum_{i=1}^{|V|}
\eta_i \mathbb{E} \{ W_i (t)^2\}
\nonumber\\
&=\E\big\{X(t)^T \bd{Q}_1 X(t) + U(t)^T \bd{Q}_2 U(t)\big\}\,
,\label{eq:cost}
\end{align}
where $\bd{Q}_1=\diag(\gamma_1,\dots,\gamma_{|E|},\xi_1,\dots,\xi_{|V|})$ and $\bd{Q}_2=\diag(0,\dots,0,\eta_1,\dots,\eta_{|E|})$.

%%For the sake of deriving optimal filters, we will assume
We will let $\E\{Z(t)\}=  \oZ$.
%%, and hence all of the virtual processes have $0$ mean.
%%If $Z(t)$ has a non-zero mean, this can be subtracted from the system
%%(\ref{eq:state}): the two systems are equivalent with respect to the
%%minimization of variances, which will be our focus\footnote{We notice
%%  in passing that, in the case of general (non-transitive) networks,
%%the appropriate surrogate LQ problem also involves the
%%  optimization of means. We defer this aspect to a future publication.} below.
We will also assume that $\Sigma_{Z}\equiv \E[Z(t)^T
Z(t)] = \bd{I}$, since if not, then we can define
$\bd{E}=[\nabla \Delta^{-1} \sqrt{\Sigma_{Z}}^{\,-1} 0]^T$, where $\sqrt{\Sigma_{Z}}$ is the symmetrical square root of $\Sigma_{Z}$.

An \emph{admissible control policy} is a mapping
$\{R(t),R(t-1),\dots,R(0)\}\mapsto U(t)$. The surrogate LQ problem
is defined as the problem of finding the mapping that minimizes the
stationary cost $\cL \equiv\lim_{t\to\infty}\cL(t)$.

Notice that the energy production-minus-consumption $Z(t)$ plays the
role both in the evolution equation (\ref{eq:state}) and the
observation (\ref{eq:observer}).
The case of correlated noise has been considered and solved for
general correlation structure in \cite{Kwong91}. Let $\bd{G}=
\mathbb{E} [ \zeta(t) Z(t)^T] = [\bd{I} \, 0 ]^T$, $R_1 (t) = [(Z(t)-\oZ)^T
, 0]^T$ and $R_2(t)=[ 0 , B(t)^T]^T$. Adapting the general result in \cite{Kwong91} to our special case, we have
\begin{lemma} \label{lem:lqg}
The optimal linear controller for the system in \eqref{eq:state} and \eqref{eq:observer} and the cost function in \eqref{eq:cost} is given by
\begin{equation}
U(t)=-(\bd{L} R_1(t) + \bd{K}^{-1} \bd{D}^T \bd{S} \bd{E} \bd{G}^T \bd{M}^{-1} R_2(t)) + \oU,
\end{equation}
where, letting $\eta \equiv \diag(\eta_1, \ldots, \eta_{|V|})$ and
$\gamma \equiv \diag(\gamma_1.\ldots,\gamma_{|V|})$:
\begin{eqnarray}
\oU & = & \left [\begin{array}{c} \oY \\ \oW \end{array} \right ] \nonumber \\
 & = & \left [\begin{array}{c} 0 \\ \left [I-\Delta(\nabla^T\gamma\nabla)^{-1}\Delta\eta\right]^{-1}\oZ
\end{array} \right], \label{eq:ubar}
\end{eqnarray}
and $\bd{S}$ is given by the algebraic Riccati equation
\begin{equation} \label{eq:S}
\bd{S}=\bd{A}^T \bd{S} \bd{A}+ \bd{Q}_1^T \bd{Q}_1 - \bd{L}^T \bd{K}\bd{L},
\end{equation}
where
$\bd{K} = \bd{D}^T \bd{S} \bd{D} + \bd{Q}_2^T \bd{Q}_2$,
$\bd{L} = \bd{K}^{-1} (\bd{D}^T \bd{S} \bd{A}+ \bd{Q}_2^T \bd{Q}_1) \label{eq:SL}$,
and
\begin{equation} \label{eq:M}
\bd{M}= \bd{C} \bd{J} \bd{C}^T+  \begin{bmatrix} \bd{I} & 0 \\ 0 &
  0 \end{bmatrix}\, ,
\end{equation}
where $\bd{J}$ satisfies the algebraic Riccati equation
$\bd{J}=\bd{A} \bd{J} \bd{A}^T + \bd{E} \bd{E}^T - \bd{O} \bd{M} \bd{O}^T$,
and
$\bd{O} = (\bd{A} \bd{J} \bd{C}^T + \bd{E} \bd{G}^T)\bd{M}^{-1}$.
\end{lemma}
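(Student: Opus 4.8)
The plan is to recognize the surrogate problem as a standard infinite-horizon discrete-time LQG problem with \emph{correlated} process and measurement noise and to obtain the stated controller by specializing the general solution of \cite{Kwong91}. I would first peel off the effect of the nonzero disturbance mean $\oZ$ by an affine shift: write $U(t)=\oU+\tilde U(t)$, $Z(t)=\oZ+\tilde Z(t)$, $X(t)=\overline{X}+\tilde X(t)$ and fix the operating point $(\overline{X},\oU)$ to be the steady-state fixed point $\overline{X}=\bd{A}\overline{X}+\bd{D}\oU+\bd{E}\oZ$ that minimizes the static cost $\overline{X}^T\bd{Q}_1\overline{X}+\oU^T\bd{Q}_2\oU$. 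Since the storage block of $\bd{A}$ is the identity (the integrator $B_i(t+1)=B_i(t)+Y_i(t)$), stationarity forces $\oY=0$ and leaves $\overline{B}$ free, so the quadratic penalty drives $\overline{B}=0$; the residual minimization over $\oW$ subject to $\overline{F}=\nabla\Delta^{-1}(\oZ-\oW)$ is an equality-constrained least-squares problem whose normal equations yield the stated closed form \eqref{eq:ubar}. The fluctuations then obey a zero-mean LQG problem with the same $(\bd{A},\bd{C},\bd{D},\bd{E})$.

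For the zero-mean problem I would apply the separation principle in its correlated-noise form. The control half is a linear-quadratic regulator: positing a quadratic cost-to-go $\tilde X^T\bd{S}\tilde X$, the Bellman stationarity condition is precisely the algebraic Riccati equation \eqref{eq:S}, with $\bd{K}=\bd{D}^T\bd{S}\bd{D}+\bd{Q}_2^T\bd{Q}_2$ and feedback gain $\bd{L}=\bd{K}^{-1}(\bd{D}^T\bd{S}\bd{A}+\bd{Q}_2^T\bd{Q}_1)$. The estimation half is a Kalman filter in which the process noise $\bd{E}Z(t)$ and the measurement noise $\zeta(t)=[\bd{I}\ 0]^T Z(t)$ are correlated through $\bd{G}=\E[\zeta(t)Z(t)^T]$; the one-step prediction-error covariance $\bd{J}$ then satisfies its own Riccati equation with the cross term $\bd{E}\bd{G}^T$ folded into the gain $\bd{O}=(\bd{A}\bd{J}\bd{C}^T+\bd{E}\bd{G}^T)\bd{M}^{-1}$, and $\bd{M}=\bd{C}\bd{J}\bd{C}^T+\diag(\bd{I},0)$ is the innovation covariance. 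Before invoking these I would check the standing hypotheses --- stabilizability of $(\bd{A},\bd{D})$ (the control reaches the integrator mode through $Y$) and detectability of $(\bd{A},\bd{Q}_1^{1/2})$ (the mode is penalized since $\xi_i>0$) --- so that both Riccati equations admit stabilizing solutions and $\cL$ is finite.

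The simplification peculiar to this model is that the observation $R(t)=\bd{C}X(t)+\zeta(t)$ equals $[\,Z(t)^T,\;B(t)^T]^T$, so it reveals $Z(t)$ and $B(t)$ exactly; together with the fact that $F(t-1)$ is a deterministic function of the already-observed $Z(t-1)$ and the past controls, the state is perfectly observed. I would verify directly that $\bd{J}=0$ solves the filtering Riccati equation (whence $\bd{M}=\diag(\bd{I},0)$ and $\bd{O}=\bd{E}\bd{G}^T\bd{M}^{-1}$, with $\bd{M}^{-1}$ read as a pseudo-inverse exactly as for $\Delta^{-1}$), that $R_1(t)=[(Z(t)-\oZ)^T,0]^T$ is precisely the innovation $R(t)-\E[R(t)\mid\mathcal{F}_{t-1}]$, and that $R_2(t)=[0,B(t)^T]^T$ is the part of the observation predictable from the state. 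Substituting the degenerate filter into the certainty-equivalent control and adding back $\oU$ then collapses the feedback to an explicit linear function of $R_1(t)$ and $R_2(t)$ with the two gains in the statement, the second of which carries the correlation correction $\bd{K}^{-1}\bd{D}^T\bd{S}\bd{E}\bd{G}^T\bd{M}^{-1}$.

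I expect the main obstacle to be the correlated-noise bookkeeping rather than any single hard estimate. In the uncorrelated case certainty equivalence gives simply $\tilde U=-\bd{L}\hat X$, but here $Z(t)$ is observed before $U(t)$ is committed and simultaneously drives the next state, so the optimal policy contains an anticipatory feedforward term; establishing that certainty equivalence still holds in this setting and that the extra term is exactly $\bd{K}^{-1}\bd{D}^T\bd{S}\bd{E}\bd{G}^T\bd{M}^{-1}R_2(t)$ is the crux. This is precisely the content imported from \cite{Kwong91}; the remaining labor is the routine, sign-sensitive verification that its general formulas reduce to our concrete $\bd{A},\bd{C},\bd{D},\bd{E},\bd{G}$ and to the singular $\bd{M}$.
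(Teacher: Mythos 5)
Your overall route is the paper's own: the paper proves this lemma only by adapting the general correlated-noise LQG result of \cite{Kwong91}, and your plan --- shift out the mean to obtain $\oU$ from a static least-squares/optimal power flow problem (with $\oY=0$ forced by the storage integrator), then treat the fluctuations as a zero-mean LQG problem with correlated process and measurement noise and specialize Kwong's controller, exploiting that the state is effectively perfectly observed so that $\bd{J}=0$ solves the filtering Riccati equation --- is exactly that adaptation, spelled out in more detail than the paper gives.

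However, the step you yourself call the crux does not go through as you state it, and this is a genuine gap. With your (correct) degenerate filter $\bd{J}=0$ one gets $\bd{M}=\diag(\bd{I},0)$, and with $\bd{M}^{-1}$ read as a pseudo-inverse, $\bd{G}^T\bd{M}^{-1}=[\bd{I}\;\;0]$ annihilates the $B$-slot of the observation; hence $\bd{K}^{-1}\bd{D}^T\bd{S}\bd{E}\bd{G}^T\bd{M}^{-1}R_2(t)\equiv 0$, i.e.\ the term you designate as carrying the correlation correction vanishes identically, and the controller you would obtain has no feedback on $B(t)$ at all --- contradicting the nontrivial filters $Y(t)=\bd{H}Z(t)-\bd{K}B(t)$, $W(t)=\bd{P}Z(t)+\bd{Q}B(t)$ that the paper derives from this very lemma. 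Carrying out your substitution carefully, the anticipatory feedforward must act on the innovation, which you correctly identified as $R_1(t)$: indeed $\bd{G}^T\bd{M}^{-1}\nu = \E\{Z(t)-\oZ \mid \nu\}$ for the innovation $\nu$, so the correlated-noise correction is $\bd{K}^{-1}\bd{D}^T\bd{S}\bd{E}\bd{G}^T\bd{M}^{-1}R_1(t)$, while the certainty-equivalence gain, acting through $\bd{A}=\diag(0,\bd{I})$, feeds back only the storage component, producing $\bd{L}R_2(t)$ (conversely $\bd{L}R_1(t)=0$, since $\bd{A}$ kills the $Z$-slot). So a correct verification attaches the two gains to the \emph{opposite} components from the pairing you assert (a pairing that follows the statement as printed); your write-up reproduces it while your own narrative --- the feedforward compensates the just-observed $Z(t)$ driving the next state --- implies the interchange, and you should either correct the bookkeeping or flag the statement's pairing as a transcription slip. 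Two smaller points your appeal to the general theorem must also cover: $\bd{J}=0$ is a fixed point but not a stabilizing solution ($\bd{A}-\bd{O}\bd{C}=\bd{A}$ retains the unit eigenvalue of the integrator), harmless here only because $B(0)$ is deterministic so the prediction error is identically zero; and $\bd{L}$ is a gain on the state space while $R_1(t),R_2(t)$ live in the observation space, so the expressions $\bd{L}R_1(t)$, $\bd{L}R_2(t)$ only make sense under an identification that your ``routine verification'' must make explicit.
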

Note that the optimal linear controller has a deterministic time-invariant component $\oU= [\oY^T,\oW^T]^T$ and an observation-dependent control $-(\bd{L} R_1(t)  + \bd{K}^{-1} \bd{D}^T \bd{S} \bd{E} \bd{G}^T \bd{M}^{-1} R_2(t)).$
 It is intuitive that $\oY =0$, since otherwise the storage process has a non-zero drift and will become unstable. The deterministic component $\oW$ for the fast generation can be seen to be the solution of a {\em static optimal power flow problem} with  deterministic net renewable generation $\oZ$ and cost function given by:
$$\cL = \sum_{e=1}^{|E|} \gamma_e  \oF_e ^2 + \sum_{i=1}^{|V|}
\eta_i  \oW_i^2.$$
On the other hand, the observation-dependent control is obtained by
solving the LQ problem with the net generations shifted to
zero-mean. Thus, the LQ design methodology naturally decomposes the
control problem into a static optimal power flow problem and a dynamic
problem of minimizing 
variances.

Notice that it might be also convenient to consider more general
surrogate costs in which a generic quadratic function of the means
$\oF_e$, $\oW_i$ is added to the second moment Lagrangian (\ref{eq:cost}).

\vspace{-0.3cm}
\subsection{Transitive Networks} \label{subsec:transitive_networks}
Lemma \ref{lem:lqg} gives an expression for the optimal linear controller. However, it  is difficult in
general to solve analytically  the Riccati equation. To
gain further insight, we consider the case of  transitive
networks.

An automorphism of a graph $G=(V,E)$ is a one-to-one mapping $f:V\to
V$ such that for any edge $e=(u,v)\in E$, we have
$e'=(f(u),f(v))\in E$. A graph is called \emph{transitive} if for
any two vertices $v_1$ and $v_2$, there is some automorphism $f: V
\rightarrow V$ such that $f(v_1)=v_2$. Intuitively, a graph is
transitive if it looks the same from the perspective of any of the
vertices. Given an electric network, we say the network is
transitive if it has a transitive graph structure, every bus has the
same associated storage, every line has the same capacity and
inductance, and $Z_i(t)$ is i.i.d. across the network. Without loss
of generality, we will assume $S_i=S$, $C_e=C$, $b_e=1$,
$\E[Z_i(t)]=\mu, \Var[Z_i(t)] =\sigma^2$. Since the graph is
transitive, it is natural to take the cost matrices as
$\bd{Q}_1=\diag(\gamma,\dots,\gamma,\xi,\dots,\xi)$ and $\bd{Q}_2=\diag(0,\dots,0,1,\dots,1)$.
%\begin{equation}
%\bd{Q}_1=\begin{bmatrix} \gamma \bd{I} & 0 \\ 0 & \xi \bd{I} \end{bmatrix}, \;\;\; \bd{Q}_2= \begin{bmatrix} 0 & 0 \\ 0 & \bd{I} \end{bmatrix}.
%\end{equation}
Moreover, it can be seen from Eq.~\eqref{eq:ubar} that $\oU =0$. Since the mean net production is the same at each node, the static optimal power flow problem is trivial with the mean flows being zero. We are left with the dynamic variance minimizing problem.

Recall that $\Delta= -\bd{V} \alpha^2 \bd{V}^T$ is the eigenvalue
decomposition of $\Delta$. Since $\Delta =- \nabla^T \nabla$, the
singular value decomposition of $\nabla$ is given by $\nabla =
\bd{U} \alpha \bd{V}^T$ for some orthogonal matrix $\bd{U}$. The
basic observation is that, with these choices of $\bd{Q}_1$ and
$\bd{Q}_2$,  the Riccati equations diagonalize in the bases given by
the columns of $\bd{V}$ (for vectors indexed by vertices) and
columns of $\bd{U}$ (for vectors indexed by edges).

A full justification of the diagonal ansatz amounts to rewriting the
Riccati equations in the new basis.
For the sake of space we limit ourselves to deriving the optimal
diagonal control.
We rewrite the linear relation from $X(t)$ to $U(t)$  as
\begin{eqnarray}
Y(t)  & = & \bd{H}Z(t)-\bd{K}B(t)\,,\\
W(t) & = &\bd{P}Z(t)+\bd{Q}B(t)\, .
\end{eqnarray}
Substituting in Eq.~(\ref{eq:state}), we get
\begin{align}
B(t+1) &= (\bd{I}-\bd{K})B(t) + \bd{H}Z(t)\, ,\label{eq:FilteredEvol1}\\
F(t+1) &= \nabla \Delta^{-1}\big\{(\bd{I}-\bd{H}-\bd{P})Z(t)+
% \phantom{\nabla \Delta^{-1}\big\{(\bd{I}}
(\bd{K}-\bd{Q})B(t)\big\}\, ,\label{eq:FilteredEvol2}\\
W(t) &= \bd{P}Z(t)+\bd{Q}B(t)\, . \label{eq:FilteredEvol3}
\end{align}
Denoting as above by $\oD$, $\oF$, $\oW$ the average quantities, it is easy
to see that, in a transitive network, we can take  $\oF=0$,
$\oW=\mu$ and hence $\oD=0$. In words, since all nodes are
equivalent, there is no average power flow ($\oF=0$), the average
overproduction is dumped locally ($\oW=\mu$), and the average
storage level is kept constant ($\oD=0$).

We work in the basis in which $\nabla = \bd{U} \alpha \bd{V}^T$ is diagonal.
We will index singular values by
$\theta\in\Theta$
 hence $\alpha = \diag(\{\alpha(\theta)\}_{\theta\in\Theta})$
 (omitting hereafter the singular value $\alpha=0$
since the relevant quantities have vanishing projection along this
direction.)
In the examples treated in the next sections, $\theta$ will be a
Fourier variable.
Since the optimal filter is diagonal in this basis, we write  $\bd{K}= {\rm
  diag}(k(\theta))$, $\bd{H}= {\rm  diag}(h(\theta))$ and
$\bd{P} = {\rm  diag}(p(\theta))$, $\bd{Q}={\rm diag}(q(\theta))$.

We let $b_{\theta}(t)$, $z_{\theta}(t)$, $f_{\theta}(t)$,
$w_{\theta}(t)$ denote the components
of $B(t)-\oD$, $Z(t)-\mu$, $F(t)-\oF$, $W(t)-\oW$ along in the same basis.
From Eqs.~(\ref{eq:FilteredEvol1}) to (\ref{eq:FilteredEvol3}), we
get the scalar equations
\begin{eqnarray}
b_{\theta}(t)&= & (1-k(\theta))b_{\theta}(t-1) + h(\theta)z_{\theta}(t)\, ,\\
f_{\theta}(t) &=&
-\alpha^{-1}(\theta)\big\{(1-h(\theta)-p(\theta))z_{\theta}(t)+\nonumber \\
&& \phantom{-\frac{1}{\alpha(\theta)}\big\{(1}(k(\theta)-q(\theta))b_{\theta}(t-1)\big\}\,
,\\
w_{\theta}(t) &=&p(\theta)z_{\theta}(t)+q(\theta)b_{\theta}(t-1)\,
.
\end{eqnarray}
We will denote by $\sigma_B^2(\theta)$, $\sigma_F^2(\theta)$,
$\sigma_W^2(\theta)$ the stationary variances of $b_{\theta}(t)$,
$f_{\theta}(t)$, $w_{\theta}(t)$. From the above, we  obtain
\begin{align}
\sigma_B^2(\theta) & = \frac{h^2}{1-(1-k)^2}\,
\sigma^2\, ,\label{eq:VarFilter1}\\
\sigma_F^2(\theta) & =
\frac{1}{\alpha^2}\left[(1-h-p)^2+\frac{h^2(k-q)^2}
{1-(1-k)^2}\right]\,
\sigma^2\, ,\label{eq:VarFilter2}\\
\sigma_W^2(\theta) & = \left[p^2+\frac{h^2q^2}{1-(1-k)^2}\right]\,
\sigma^2\, . \label{eq:VarFilter3}
\end{align}
(We omit here the argument $\theta$ on the right hand side.)

In order to find $h,k,p,q$, we minimize the Lagrangian
(\ref{eq:cost}).
Using Parseval's identity, this decomposes over $\theta$, and we can
therefore separately minimize for each $\theta\in\Theta$
\begin{eqnarray}
\cL(\theta) = \sigma_W(\theta)^2+\xi\,\sigma_B(\theta)^2+\gamma\,
\sigma_F(\theta)^2\, .\label{eq:Lagrangian}
\end{eqnarray}
A lengthy but straightforward calculus exercise yields the following
expressions.
\begin{table*}[t] \centering \ra{1.8}
{\scriptsize
\begin{tabular}{@{$\,$}l@{ }| l@{ }l@{}| l@{ }l@{}| l@{ }l@{}}
\hline & \multicolumn{2}{c|}{No transmission ($C=0$)} &
\multicolumn{2}{c|}{No storage ($S=0$)} & \multicolumn{2}{c}{Storage
and Transmission}
\\\hline
%&&&&\\
\multirow{2}{*}{1-D} & & & $\Theta(\frac{\sigma^2}{C})$ & for $\mu C
< \sigma^2$ &
$\sigma \exp\!\left \{ - \sqrt{\frac{ CS}{\sigma^2}}\right \}^\dagger$ & for $\mu = \exp\! \left \{ - \omega\Big(\sqrt{\frac{ C S}{\sigma^2}}\Big)\right \}$\\
&$\Theta(\frac{\sigma^2}{S})$ & for $\mu S < \sigma^2\,$ & $\sigma
\exp\!\left \{ - \frac{\mu C}{\sigma^2}\right \} $ & otherwise  &
$\sigma \exp\!\left \{ - \frac{ C S}{\sigma^2}\right \}$& for $\mu = \exp\! \left \{ - o\Big(\sqrt{\frac{ C S}{\sigma^2}}\Big)\right \}$\\[3pt] \cline{1-1} \cline{4-7}
\multirow{2}{*}{2-D} & $\sigma \exp\!\left \{ - \frac{\mu
S}{\sigma^2}\right \} $ & otherwise & $\sigma \exp\!\left \{ -
\frac{ C }{\sigma}\right \}^\dagger$ & for $\mu = \exp\!\left \{ -
\omega\Big(\frac{ C }{\sigma}\Big)\right \}$ &
\multicolumn{2}{l}{\multirow{2}{*}{$\sigma \exp\!\left \{ - \frac{
C\max(C,S)}{\sigma^2}\right \}$
}}\\
&&& $\sigma \exp\!\left \{ - \frac{ C^2}{\sigma^2}\right \}$ & for $\mu = \exp\!\left \{ - o\Big(\frac{ C }{\sigma}\Big)\right \}$ & \\[3pt]\hline%\multicolumn{2}{c}{$\leq \leq$}\\
%\bottomrule
\end{tabular}}
\caption{\small Asymptotically optimal $\eps_{\W} + \eps_{\F}$ in 1-D and
2-D grids. Logarithmic factors have been
neglected (also in the exponent). $\ \dagger$ indicates the lower
bound requires a conjecture in probability theory.}
\label{table:summary_results}
\end{table*}
\begin{theorem}
Consider a transitive network. The optimal linear control scheme is
given, in Fourier domain $\theta\in\Theta$, by
\begin{align}
p(\theta)  &= q(\theta) = \xi\, \frac{\sqrt{4\beta(\theta)+1}-1}{2}
\, ,
\label{eq:OptFilter1}\\
h(\theta) &= \frac{2\beta(\theta)+1-\sqrt{4\beta(\theta)+1}}{2\beta(\theta)}\, ,\label{eq:OptFilter2}\\
k(\theta)  &= \frac{\sqrt{4\beta(\theta)+1}-1}{2\beta(\theta)}\, ,\label{eq:OptFilter3}
\end{align}
where $\beta(\theta)$ is given by
\begin{eqnarray}
\beta(\theta) & = &\frac{\gamma}{\xi(\gamma+\alpha^2(\theta))}\, .\label{eq:BetaDef}
\end{eqnarray}
\end{theorem}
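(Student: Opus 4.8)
The plan is to recognize the theorem as a collection of decoupled, finite-dimensional minimizations. By Parseval's identity the stationary Lagrangian splits across modes into the scalar objective $\cL(\theta)$ of \eqref{eq:Lagrangian}, so it suffices to fix $\theta$ and minimize, over the four scalars $(h,k,p,q)$, the function obtained by inserting the stationary variances \eqref{eq:VarFilter1}--\eqref{eq:VarFilter3} into $\cL(\theta)$. Writing $D:=1-(1-k)^2=k(2-k)$ for the common denominator and dividing out the overall $\sigma^2$, this is
\begin{equation*}
\cL/\sigma^2 \;=\; p^2+\frac{h^2q^2}{D}+\xi\,\frac{h^2}{D}+\frac{\gamma}{\alpha^2}\Big[(1-h-p)^2+\frac{h^2(k-q)^2}{D}\Big].
\end{equation*}
I would solve this by stationarity, handling the variables in two stages.

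First, for fixed $(h,k)$ the objective is a strictly convex quadratic in $(p,q)$ with no cross term, so partial minimization is legitimate and $\partial_p\cL=\partial_q\cL=0$ are linear. With $c:=\gamma/(\gamma+\alpha^2)$ they give $p=c(1-h)$ and $q=ck$. Substituting back turns $(1-h-p)^2$ into $(1-c)^2(1-h)^2$ and $(k-q)^2$ into $(1-c)^2k^2$, so both $\sigma_W^2$ and $\gamma\sigma_F^2$ acquire the same bracket $(1-h)^2+h^2k^2/D$. The crucial simplification is the identity $c^2+\gamma(1-c)^2/\alpha^2=c$, together with $c=\xi\beta$ coming from the definition \eqref{eq:BetaDef}; these collapse the objective to
\begin{equation*}
g(h,k)\;:=\;\cL/(\xi\sigma^2)\;=\;\beta(1-h)^2+\frac{h^2(\beta k^2+1)}{k(2-k)}.
\end{equation*}

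Second, I would minimize $g$. The condition $\partial_k g=0$, after clearing denominators, simplifies dramatically to the quadratic $\beta k^2+k-1=0$, whose positive root is $k(\theta)=(\sqrt{4\beta+1}-1)/(2\beta)$. Feeding the resulting identity $\beta k^2+1=2-k$ into $\partial_h g=0$, i.e. $\beta(1-h)\,k(2-k)=h(\beta k^2+1)$, cancels the factor $2-k$ and reduces it to $(1-h)(1-k)=hk$, hence $h=1-k$, giving $h(\theta)=(2\beta+1-\sqrt{4\beta+1})/(2\beta)$. Back-substitution of $1-h=k$ into $p=c(1-h)=\xi\beta k$ and $q=ck=\xi\beta k$ then yields $p=q=\xi\beta k=\xi(\sqrt{4\beta+1}-1)/2$, which are exactly \eqref{eq:OptFilter1}--\eqref{eq:OptFilter3}.

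The step I expect to require the most care is certifying that this stationary point is the global minimizer over the admissible region rather than a mere critical point. Stationarity of the recursion \eqref{eq:FilteredEvol1} requires $|1-k|<1$, i.e. $k\in(0,2)$, on which $D>0$ and $g$ is smooth. I would minimize $g$ over $h$ first: the $h^2$-coefficient $\beta+(\beta k^2+1)/D$ is positive, so $g(\cdot,k)$ is strictly convex with minimum value $\beta A/(\beta+A)$, where $A(k)=(\beta k^2+1)/(k(2-k))$. Since $t\mapsto\beta t/(\beta+t)$ is increasing, the problem collapses to minimizing $A(k)$ on $(0,2)$; as $A(k)\to+\infty$ at both endpoints, a minimizer is interior, and $A'(k)=0$ is again $\beta k^2+k-1=0$, which has a unique root in $(0,2)$. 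This reduction both guarantees existence and pins down the minimizer, confirming optimality of the closed-form filter above.
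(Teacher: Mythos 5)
Your proposal is correct and follows essentially the same route as the paper, which decomposes the Lagrangian over $\theta$ via Parseval and then declares the per-mode minimization over $(h,k,p,q)$ a ``lengthy but straightforward calculus exercise'' whose details it omits; your two-stage elimination (first the convex quadratic in $(p,q)$ via the identity $c^2+\gamma(1-c)^2/\alpha^2=c$ with $c=\xi\beta$, then the reduction $\beta k^2+k-1=0$ and $h=1-k$) is precisely that exercise, carried out correctly. Your final step certifying the global minimum --- strict convexity in $h$, monotonicity of $t\mapsto\beta t/(\beta+t)$, and the blow-up of $A(k)$ at the endpoints of $(0,2)$ --- supplies a detail the paper leaves implicit, and is consistent with the paper's stated properties $h+k=1$ and $p=q=\xi\beta k$.
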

It is useful to point out a few analytical properties of these filters:
$(i)$ $\gamma/[\xi(\gamma+ d_{\rm max})]\le \beta\le 1/\xi$ with $d_{\rm
max}$ the maximum degree in $G$; $(ii)$ $0\le k\le 1$ is monotone decreasing as a function of $\beta$,
with $k = 1-\beta +O(\beta^2)$ as $\beta\to 0$ and $k= 1/\sqrt{\beta}
+O(1/\beta)$ as $\beta\to\infty$; $(iii)$
$0\le h\le 1$ is such that $h+k=1$. In particular, it is monotone increasing as a function of $\beta$,
with $h = \beta +O(\beta^2)$ as $\beta\to 0$ and $h= 1-1/\sqrt{\beta}
+O(1/\beta)$ as $\beta\to\infty$; $(iv)$ $p=q=\xi\beta k$.

\begin{theorem}
Consider a transitive network, and assume that the optimal LQ
control is applied. The variances are given as follows in terms of
$k(\theta)$, given in Eq.~(\ref{eq:OptFilter2}):
\begin{eqnarray}
\frac{\sigma_B^2(\theta)}{\sigma^2} & = &
\frac{(1-k(\theta))^2}{1-(1-k(\theta))^2}\, ,
\label{eq:OptSigmaB}\\
\frac{\sigma_F^2(\theta)}{\sigma^2} & = &
\frac{\alpha^2(\theta)}{(\gamma+\alpha^2(\theta))^2}\,
\frac{k^2(\theta)}{1-(1-k(\theta))^2}\, ,\label{eq:OptSigmaF}\\
\frac{\sigma_W^2(\theta)}{\sigma^2} & = &
\frac{\gamma^2}{(\gamma+\alpha^2(\theta))^2}\,
\frac{k^2(\theta)}{1-(1-k(\theta))^2}\, .\label{eq:OptSigmaW}
\end{eqnarray}
\end{theorem}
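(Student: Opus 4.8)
The plan is to obtain all three variances by a direct substitution of the optimal filter coefficients from the preceding theorem into the general variance formulas \eqref{eq:VarFilter1}--\eqref{eq:VarFilter3}, which already hold for an arbitrary diagonal filter $(h,k,p,q)$. No new probabilistic input is needed: those three formulas were derived from the stationary scalar recursions, so once the optimal $h,k,p,q$ are inserted the task reduces to algebra. The facts I would lean on throughout are the identities $h+k=1$ and $p=q=\xi\beta k$ (properties $(iii)$ and $(iv)$ stated after the preceding theorem), together with the definition $\beta=\gamma/[\xi(\gamma+\alpha^2)]$ in \eqref{eq:BetaDef}, which I would rewrite at the outset as $\xi\beta=\gamma/(\gamma+\alpha^2)$ and $1-\xi\beta=\alpha^2/(\gamma+\alpha^2)$.

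First I would dispatch $\sigma_B^2$: substituting $h=1-k$ into \eqref{eq:VarFilter1} gives $\sigma_B^2/\sigma^2=(1-k)^2/[1-(1-k)^2]$, which is exactly \eqref{eq:OptSigmaB} with no further work. For the other two I would first record the single algebraic identity that drives both simplifications, namely $1+\frac{(1-k)^2}{1-(1-k)^2}=\frac{1}{1-(1-k)^2}$, obtained by putting the left side over a common denominator. Then for $\sigma_W^2$ I would use $p=q=\xi\beta k$ and $h=1-k$ in \eqref{eq:VarFilter3} to factor out $(\xi\beta k)^2$, apply the identity, and replace $\xi^2\beta^2$ by $\gamma^2/(\gamma+\alpha^2)^2$, which yields \eqref{eq:OptSigmaW}.

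The term $\sigma_F^2$ is the one place where a small factoring observation is essential: in \eqref{eq:VarFilter2} one checks that $1-h-p=k-p=k(1-\xi\beta)$ and $k-q=k(1-\xi\beta)$ are equal, so both summands carry the common factor $k^2(1-\xi\beta)^2$. Pulling this out and applying the same identity collapses the bracket to $k^2(1-\xi\beta)^2/[1-(1-k)^2]$; dividing by $\alpha^2$ and substituting $(1-\xi\beta)^2=\alpha^4/(\gamma+\alpha^2)^2$ produces the factor $\alpha^2/(\gamma+\alpha^2)^2$ and hence \eqref{eq:OptSigmaF}.

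I do not expect a genuine obstacle here, since the result is essentially bookkeeping; the only step that requires care is recognizing that $1-h-p$ and $k-q$ are equal and equal to $k(1-\xi\beta)$, because it is precisely this coincidence (a consequence of the optimality relation $p=q$) that makes the two summands in \eqref{eq:VarFilter2} combine into a single clean fraction. I would also verify that the common denominator $1-(1-k)^2=k(2-k)$ is strictly positive, which follows from $0\le k\le 1$ (property $(ii)$) together with $k>0$ for every $\theta\in\Theta$ (since $\beta(\theta)>0$), so that all stationary variances are finite and the displayed formulas are well defined.
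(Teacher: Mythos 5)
Your proposal is correct and takes essentially the same route as the paper, which obtains these variances precisely by substituting the optimal filters of the preceding theorem into the general formulas \eqref{eq:VarFilter1}--\eqref{eq:VarFilter3} (the ``lengthy but straightforward calculus exercise'' it alludes to). Your algebra checks out in full: $h=1-k$, $p=q=\xi\beta k$, $\xi\beta=\gamma/(\gamma+\alpha^2(\theta))$, the common factor $k^2(1-\xi\beta)^2$ in the two summands of \eqref{eq:VarFilter2}, and the positivity of $1-(1-k)^2=k(2-k)$ all hold exactly as you state.
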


\section{1-D and 2-D Grids: Overview of Results}
\label{sec:Discussion}

For the rest of the paper, we focus on two specific network
topologies: the infinite one-dimensional grid (line network) and the
infinite two-dimensional grid. We will assume that the net generations
are independent across time and position, with common expectation
$\E Z_i(t) = \mu$, and we will place weak assumptions on the distributions (to be specified precisely later in Section \ref{sec:subgaussian}.)  We will focus on the regime when the
achieved cost is small. In Section \ref{sec:performance} we will
evaluate the performance of the LQ scheme on these topologies. In
Section \ref{sec:lower}. we will derive lower bounds on the
performance of any schemes on these topologies to show that the LQ
scheme is optimal in the small cost regime.  As a result, we
characterize explicitly the asymptotic performance in this regime.
The results are summarized in Table \ref{table:summary_results}.

Although the i.i.d. assumption simplifies significantly our
derivations, we expect that the qualitative features of our results
should not change for a significantly broader class of processes
$\{Z_i(t)\}_{t}$. In particular, we expect our results to generalize
under the weaker assumption that $Z_i(t)$
is stationary but close to independent beyond a time scale $T=O(1)$.

The parameter $\mu$, the mean of the net generation at each node, can be
thought of as a measure of the amount of  {\em over-provisioning}.
Let us first consider that case of a one-dimensional grid and assume
that $\mu$ is vanishing or negligible. In other words, the average
production balances the average load.
Our results imply that a dramatic improvement is achieved by a joint
use of storage and transmission resources. Consider first the case
$C=0$. The network then reduces to a collection of isolated nodes,
each with storage $S$.  It can be shown that the optimal cost decreases only slowly with
the  storage size  $S$, namely as $1/S$. Similarly, when there is only
transmission but no storage, the optimal
cost decreases only slowly with transmission capacity $C$, like
$1/C$. On the other hand, with both storage and transmission, the
optimal costs decreases {\em exponentially} with $\sqrt{CS}$.
Consider now positive over-provisioning  $\mu>0$.
When there is no storage, the only way to drive the cost significantly
down is at the expense of increasing the amount of
over-provisioning beyond $\sigma^2/C$.
The same performance can be achieved with a storage $S$ equalling to this amount of over-provisioning and with the actual amount of over-provisioning exponentially smaller.

\begin{figure}[b]
\centering
\includegraphics[scale = 0.7]{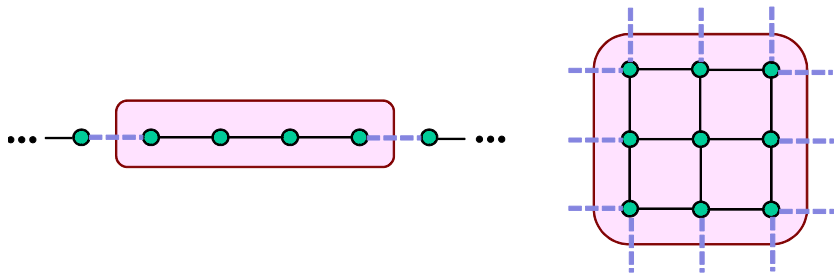}
\caption{Boxes of side $l$ in 1-D and in 2-D. The ratio $\text{boundary}/\sqrt{\text{volume}}$ is $\Theta(1/\sqrt{l})$ in 1-D, and $\Theta(1)$ in 2-D.}
\label{fig:1d_2d_cut}
\end{figure}

%\begin{figure}[ht]
%\begin{minipage}[b]{0.2\linewidth}
%\centering
%\includegraphics[scale=.4]{1dcut.png}
%\caption{default}
%\label{fig:figure1}
%\end{minipage}
%\hspace{0.5cm}
%\begin{minipage}[b]{0.2\linewidth}
%\centering
%\includegraphics[scale=1]{2dcut.png}
%\caption{default}
%\label{fig:figure2}
%\end{minipage}
%\end{figure}

The 2-D grid provides significantly superior performance than the 1-D
grid. For example, the cost exponentially decreases with the
transmission capacity $C$ even without over-provisioning and without
storage. The increased connectivity in a 2-D grid allows much more
spatial averaging of the random net generations than in the 1-D grid.
In order to understand the fundamental reason for this difference,
consider the case of vanishing over-provisioning $\mu=0$ and vanishing
storage $S=0$  (also see Figure \ref{fig:1d_2d_cut}). Consider first a 1-D grid.  The aggregate
net generation inside a segment of $l$ nodes has variance $l \sigma^2$
and hence this quantity is of the order of \ $\sqrt{l} \sigma$. This
random fluctuation has to be compensated by power delivered from the
rest of the grid, but this power can only be delivered through the
two links, one at each end of the segment and each of capacity C.
Hence, successful compensation requires $l \lesssim C^2/\sigma^2$.
One can think of $l_* := C^2/\sigma^2$ as the {\em spatial scale}
over which averaging of the random generations is taking place. Beyond
this spatial scale, the fluctuations will have to be compensated by
fast generation. This fluctuation is of the order of $\sqrt{l_*}
\sigma / l_* = \sigma^2/C$ per node. Note that a limit on the spatial
scale of averaging translates to a large fast generation cost. In
contrast, in the 2-D grid,
$(i)$ the net generation, and $(ii)$ the total link
capacity connecting an $l \times l$  box to the rest of the
grid, both scale up linearly in $l$. This facilitates averaging over a
very large spatial scale $l$, resulting in a much lower fast
generation cost.

There is an interesting parallelism between the results for the 1-D
grid with storage and the 2-D grid without storage. If we set $S=C$,
the results are in fact identical. One can think of storage as
providing an additional dimension for averaging: time (Section
\ref{subsec:perflim_withstorage} formalizes this). Thus,
\emph{a one dimensional grid with storage behaves similarly to a
two-dimensional grid without storage.}
%This observation is particularly relevant for distribution networks where the connectivity is sparse and hence where the role of storage is particularly significant.
%{\bf [Y: Removed this comment because distribution networks are almost never capacity constrained, right?]

%****************************************************
%
\section{Performance of LQ Scheme in Grids}
\label{sec:performance}

In this section we evaluate the performances of the LQ scheme on the
1-D and 2-D grids.  Both are examples of transitive graphs and hence
we will follow the formulation in Section
\ref{subsec:transitive_networks}. For these two examples, the operator $\Delta$ is in fact invariant to spatial shifts so the $\theta$-domain which diagonalizes the operator is simply the (spatial) Fourier domain.

For simplicity, in this section, we consider the case  where $Z_i(t)$ are gaussian. In the next section we show how all our results immediately generalize to a much larger and more realistic class of distributions.

Suppose that $Z_i(t) \sim \normal(\mu, \sigma^2)$ iid across nodes and time. It follows that $B,F,W$ are
Gaussian, and using Eq.~\eqref{eq:slashW_bound}, we get the
following estimates
\begin{eqnarray}
\ve_{\F} \leq 2\sigma_F\Ftail\Big(\frac{C}{\sigma_F}\Big)\, ,\;\;
\ve_{\W} \leq \sigma_B\Ftail\Big(\frac{S}{2\sigma_B}\Big)+\sigma_W\Ftail\Big(\frac{\mu}{\sigma_W}\Big) . \label{eq:Epsilon}
\end{eqnarray}
Here $\Ftail$ is the tail of the Gaussian distribution $\Ftail(z) \equiv \int_{z}^{\infty}\phi(x)\, \de x =
\Phi(-z)$, where $\phi(x) = \exp\{-x^2/2\}/\sqrt{2\pi}$ the
Gaussian density and $\Phi(x) = \int_{-\infty}^x\phi(u)\de u$ is the
Gaussian distribution.

In order to evaluate performances analytically and to obtain
interpretable expressions, we will focus on two specific regimes.
In the first one, no storage is available but large transmission
capacity exists. In the second, large storage and transmission
capacities are available.
%
%****************************************
%
\subsection{No storage}

In order to recover the performance when there is no storage, we let
$\xi\to\infty$, implying $\sigma_B^2\to 0$ by the definition of cost
function (\ref{eq:Lagrangian}). In this limit we have $\beta\to 0$,
cf. Eq.~(\ref{eq:BetaDef}). Using the explicit formulae for the
various kernels, cf. Eqs.~(\ref{eq:OptFilter1}) to
(\ref{eq:OptFilter3}), we get:
\begin{align*}
p,q =  \frac{\gamma}{\gamma+\alpha^2(\theta)} +O(1/\xi)\, ,\;
h  =  O(1/\xi)\, ,\;
k  =  1-O(1/\xi)\, .
\end{align*}
Substituting in Eqs.~(\ref{eq:FilteredEvol1}) to
(\ref{eq:FilteredEvol2})  we obtain
the following prescription for the controlled variables (in matrix notation)
\begin{eqnarray}
Y(t) = 0\, \;\;\;\;\; W(t)  = &  \gamma(-\Delta+\gamma)^{-1}Z(t)\,,
\end{eqnarray}
while the flow and storage satisfy
\begin{eqnarray}
B(t) =  0\, ,\;\;\;\;
F(t) = \nabla(-\Delta+\gamma)^{-1}Z(t)\, ,
\end{eqnarray}
The interpretation of these equations is quite clear. No storage is
retained ($B=0$) and hence no energy is transferred to storage.
The matrix $\gamma(-\Delta+\gamma)^{-1}$ can be interpreted a low-pass
filter and hence  $\gamma(-\Delta+\gamma)^{-1}Z(t)$ is a smoothing  of
$Z(t)$ whereby the smoothing takes place on a length scale $\gamma^{-1/2}$. The
wasted energy is obtained by averaging underproduction over regions
of this size.

Finally, using Eqs.~(\ref{eq:OptSigmaF}) and (\ref{eq:OptSigmaW}),
we obtain the following results for the variances in Fourier space
\begin{eqnarray*}
\frac{\sigma_F(\theta)^2}{\sigma^2}  =
\frac{\alpha^2(\theta)}{(\gamma+\alpha^2(\theta))^2}\, ,\;\;\;\;\;
\frac{\sigma_W(\theta)^2}{\sigma^2}  =
\frac{\gamma^2}{(\gamma+\alpha^2(\theta))^2}\, .
\end{eqnarray*}
%
%**********************************
%
\subsubsection{One-dimensional grid}

\begin{figure}
\centering
\includegraphics[scale = 0.7]{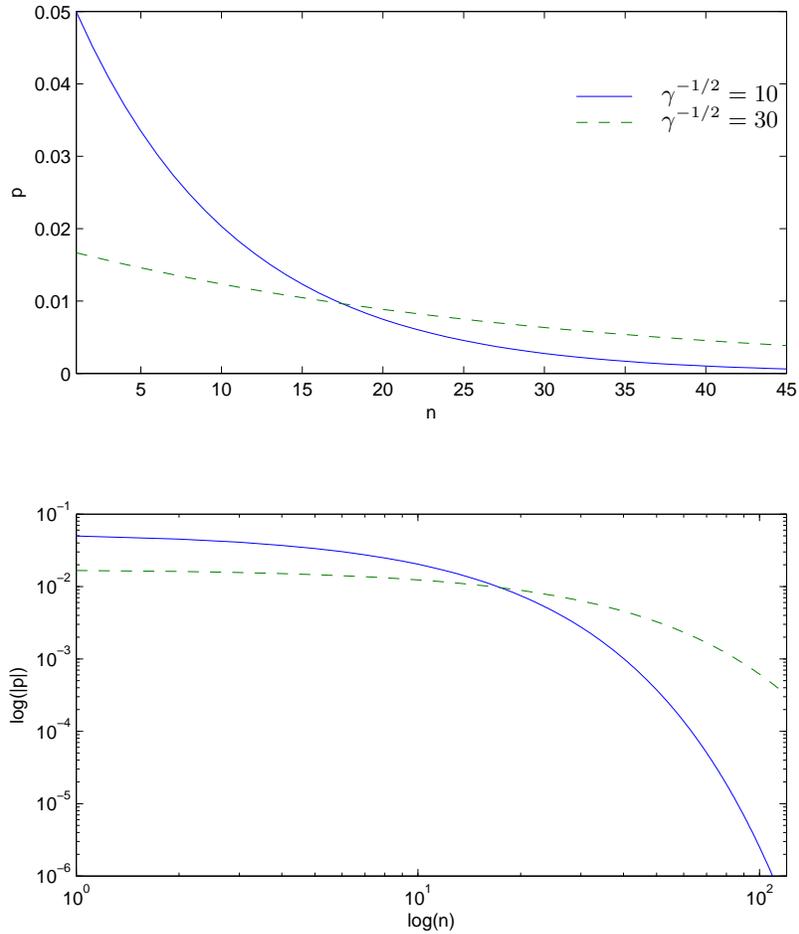}
\put(-80,338){{\footnotesize $\gamma^{-1/2} = 10$}}
\put(-80,328){{\footnotesize $\gamma^{-1/2} = 30$}}
\caption{The filter $\bd{P}$ for a one-dimensional grid $\integers$ in
  the case in which no storage is available. Notice that
  by translation invariance $\bd{P}_{i,j}= P_{i-j}$ for any
  $i,j\in\integers$, and further $P_{n} = P_{-n}$. Here we plot $P_n$
  for two values of the effective length scale $1/\sqrt{\gamma}$.}
  \label{fig:1DNoStorage}
\end{figure}

In this case $\theta\in  [-\pi,\pi]$, and
$\alpha(\theta)^2 = 2-2\cos\theta$ (the Laplacian $\Delta$ is
diagonalized via Fourier transform). The form of the optimal filter
$\bd{P}$ is shown in Figure \ref{fig:1DNoStorage}.

The Parseval integrals can be computed exactly but we shall limit
ourselves to  stating without proof their asymptotic behavior for small $\gamma$.
\begin{lemma}
For the one-dimensional grid, in absence of storage,
as $\gamma\to 0$, the optimal LQ control yields variances
\begin{align*}
\sigma_F^2 =
{\sigma^2} / {4\sqrt{\gamma}}\,\Big\{1+O(\gamma)\Big\}\, ,\;\;\;
\sigma_W^2 =
{\sigma^2\sqrt{\gamma}} / {4}\,\Big\{1+O(\gamma)\Big\}\, .
\end{align*}
\end{lemma}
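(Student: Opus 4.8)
The plan is to evaluate the two Parseval integrals in closed form and then read off the leading behaviour as $\gamma\to 0$. Since the controlled field is stationary on the infinite line, Parseval's identity expresses the per-edge and per-node variances as the averages of the Fourier-space spectra (displayed just above the lemma) over $\theta\in[-\pi,\pi]$:
$$\sigma_F^2=\frac{\sigma^2}{2\pi}\int_{-\pi}^{\pi}\frac{\alpha^2(\theta)}{(\gamma+\alpha^2(\theta))^2}\,\de\theta,\qquad \sigma_W^2=\frac{\sigma^2}{2\pi}\int_{-\pi}^{\pi}\frac{\gamma^2}{(\gamma+\alpha^2(\theta))^2}\,\de\theta.$$
Here I substitute the one-dimensional dispersion relation $\alpha^2(\theta)=2-2\cos\theta$, so that $\gamma+\alpha^2(\theta)=(\gamma+2)-2\cos\theta$ is of the form $a+b\cos\theta$ with $a=\gamma+2$, $b=-2$, and $a^2-b^2=\gamma(\gamma+4)$.

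First I would reduce everything to the two elementary trigonometric integrals
$$I_1\equiv\int_0^{\pi}\frac{\de\theta}{a+b\cos\theta}=\frac{\pi}{\sqrt{a^2-b^2}},\qquad I_2\equiv\int_0^{\pi}\frac{\de\theta}{(a+b\cos\theta)^2}=\frac{\pi a}{(a^2-b^2)^{3/2}},$$
valid for $a>|b|$; the second follows from the first by differentiating in $a$ (equivalently, both are standard residue computations on the unit circle). For $\sigma_W^2$ this is immediate and gives the exact value
$$\sigma_W^2=\frac{\sigma^2\gamma^2}{\pi}\,I_2=\frac{\sigma^2\gamma^2(\gamma+2)}{(\gamma(\gamma+4))^{3/2}}=\frac{\sigma^2\sqrt{\gamma}\,(\gamma+2)}{(\gamma+4)^{3/2}}.$$
For $\sigma_F^2$ the clean route is the algebraic identity $\frac{\alpha^2}{(\gamma+\alpha^2)^2}=\frac{1}{\gamma+\alpha^2}-\frac{\gamma}{(\gamma+\alpha^2)^2}$, which separates the spectrum into a term evaluated by $I_1$ and a term exactly equal to $\sigma_W^2/\gamma$, yielding
$$\sigma_F^2=\frac{\sigma^2}{\sqrt{\gamma(\gamma+4)}}-\frac{\sigma_W^2}{\gamma}=\frac{\sigma^2}{\sqrt{\gamma(\gamma+4)}}-\frac{\sigma^2(\gamma+2)}{\sqrt{\gamma}\,(\gamma+4)^{3/2}}.$$

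With both variances in exact closed form, the stated asymptotics reduce to a routine Taylor expansion in $\gamma$. As $\gamma\to0$ one has $(\gamma+2)=2(1+O(\gamma))$ and $(\gamma+4)^{-3/2}=\tfrac18(1+O(\gamma))$, so $\sigma_W^2=\tfrac{\sigma^2\sqrt{\gamma}}{4}(1+O(\gamma))$. For the flow, the two terms are $\tfrac{\sigma^2}{2\sqrt{\gamma}}(1+O(\gamma))$ and $\tfrac{\sigma^2}{4\sqrt{\gamma}}(1+O(\gamma))$; their leading coefficients $1/2$ and $1/4$ do not cancel, so the difference is $\tfrac{\sigma^2}{4\sqrt{\gamma}}(1+O(\gamma))$, as claimed. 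I do not anticipate a genuine obstacle: the only points requiring care are the correct application of $I_1,I_2$ and keeping the subleading corrections consistent so that the error is honestly $O(\gamma)$. As a robustness check that avoids the closed forms, one can instead observe that as $\gamma\to0$ the integrands concentrate near $\theta=0$, rescale $\theta=\sqrt{\gamma}\,u$ with $\alpha^2(\theta)\approx\gamma u^2$, and evaluate the resulting Lorentzian-type integrals $\int_{-\infty}^{\infty}\tfrac{u^2}{(1+u^2)^2}\,\de u$ and $\int_{-\infty}^{\infty}\tfrac{1}{(1+u^2)^2}\,\de u$, both equal to $\pi/2$, which reproduce the same constant $1/4$.
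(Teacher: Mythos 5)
Your proposal is correct and follows exactly the route the paper indicates (the paper states this lemma without proof, remarking only that ``the Parseval integrals can be computed exactly''): your two closed forms are right and in fact combine neatly to the exact expressions $\sigma_F^2 = 2\sigma^2/\bigl(\sqrt{\gamma}\,(\gamma+4)^{3/2}\bigr)$ and $\sigma_W^2 = \sigma^2\sqrt{\gamma}\,(\gamma+2)/(\gamma+4)^{3/2}$, from which the stated $1+O(\gamma)$ asymptotics follow by inspection. Your rescaling check $\theta=\sqrt{\gamma}\,u$ with the Lorentzian integrals $\int_{-\infty}^{\infty} u^2(1+u^2)^{-2}\,\de u = \int_{-\infty}^{\infty}(1+u^2)^{-2}\,\de u = \pi/2$ is also sound and consistent with the paper's normalization $\frac{1}{2\pi}\int_{-\pi}^{\pi}(\cdot)\,\de\theta$.
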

Using these formulae and the equations (\ref{eq:Epsilon}) for the
performance parameters, we get the following achievability result.
\begin{theorem}\label{thm:1dNoStorage}
For the one-dimensional grid, in absence of storage, the optimal LQ control with Lagrange
parameter $\gamma = \mu^2/C^2$ yields, in the limit $\mu/C\to 0$, $\mu C/\sigma^2\to\infty$:
\begin{eqnarray}
\etot \le \exp\Big\{-\frac{2\mu\,
  C}{\sigma^2}\big(1+o(1)\big)\Big\}\, .\label{eq:1dNoStorage}
\end{eqnarray}
\end{theorem}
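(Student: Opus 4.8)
The plan is to feed the variance asymptotics of the preceding lemma into the Gaussian tail bounds \eqref{eq:Epsilon} and then to tune the Lagrange parameter $\gamma$ so that the two exponential contributions to $\etot$ are balanced. First I would record that, with no storage, the control keeps $B(t)=0$, so $\sigma_B=0$ and the storage-overflow term in the bound for $\ve_{\W}$ vanishes (equivalently, \eqref{eq:slashW_bound} gives $\W_i(t)\ge W_i(t)$ directly). The task therefore collapses to estimating the two one-sided Gaussian tails
\[
\ve_{\F}\le 2\sigma_F\,\Ftail\!\Big(\tfrac{C}{\sigma_F}\Big),\qquad
\ve_{\W}\le \sigma_W\,\Ftail\!\Big(\tfrac{\mu}{\sigma_W}\Big).
\]

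Next I would substitute the lemma's asymptotics, which read $\sigma_F=\sigma/(2\gamma^{1/4})\,(1+o(1))$ and $\sigma_W=\sigma\gamma^{1/4}/2\,(1+o(1))$ and are valid as soon as $\gamma\to 0$; the hypothesis $\mu/C\to 0$ together with the choice $\gamma=\mu^2/C^2$ guarantees exactly this. With these values the two tail arguments become $C/\sigma_F=2C\gamma^{1/4}/\sigma$ and $\mu/\sigma_W=2\mu/(\sigma\gamma^{1/4})$. The role of the prescription $\gamma=\mu^2/C^2$ is precisely to equalize these arguments: both reduce to $2\sqrt{\mu C}/\sigma$. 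This is the value of $\gamma$ that balances the flow-violation and the fast-generation exponents, and hence maximizes the smaller of the two; any other choice would make one of the two tails dominate and worsen the rate.

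To conclude I would invoke the elementary tail asymptotic $\log\Ftail(z)=-\tfrac12 z^2\,(1+o(1))$ as $z\to\infty$, legitimate here because $2\sqrt{\mu C}/\sigma\to\infty$ by the hypothesis $\mu C/\sigma^2\to\infty$. Since both arguments equal $2\sqrt{\mu C}/\sigma$, each of $\ve_{\F}$ and $\ve_{\W}$ is at most $\exp\{-2\mu C/\sigma^2\,(1+o(1))\}$ up to an algebraic prefactor, and summing two terms of the same exponential order leaves the rate unchanged, which yields the claimed bound on $\etot$.

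The step I expect to require the most care---though it is the only non-mechanical one---is checking that every sub-exponential factor genuinely disappears into the $(1+o(1))$ in the exponent. These factors are the algebraic prefactors $\sigma_F,\sigma_W$ and the $1/z$ correction inside $\Ftail$, together with the $O(\gamma)$ relative errors in the variance asymptotics. After taking logarithms the requirement is that terms such as $\log(\sigma^2/\mu)$ and $\log(C/\mu)$ be $o(\mu C/\sigma^2)$, and it is here that the two hypotheses $\mu/C\to 0$ and $\mu C/\sigma^2\to\infty$ must be combined; I would phrase the limiting regime so that these logarithmic corrections are dominated by the leading exponent, which is the natural regime in which the rate $2\mu C/\sigma^2$ is meaningful.
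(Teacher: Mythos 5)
Your proposal is correct and follows essentially the same route the paper intends: substitute the lemma's asymptotics $\sigma_F^2=\sigma^2/(4\sqrt{\gamma})$ and $\sigma_W^2=\sigma^2\sqrt{\gamma}/4$ into the Gaussian tail bounds \eqref{eq:Epsilon} (with the $\sigma_B$ term absent since $B(t)=0$), observe that $\gamma=\mu^2/C^2$ equalizes both tail arguments at $2\sqrt{\mu C}/\sigma$, and conclude via $\log\Ftail(z)=-z^2(1+o(1))/2$, giving the exponent $-2\mu C/\sigma^2$. Your closing remark that the polynomial prefactors (e.g.\ $\sigma_F=\Theta(\sigma\sqrt{C/\mu})$) must satisfy $\log(C/\mu)=o(\mu C/\sigma^2)$ to be absorbed into the $(1+o(1))$ is a legitimate point of care that the paper glosses over, and your handling of it is appropriate.
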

The choice of $\gamma$ given here is dictated by approximately
minimizing the cost. In words, the cost is exponentially small in the
product of the capacity, and overprovisioning $\mu C$. This is
achieved by averaging over a length scale $\gamma^{-1/2} = C/\mu$
that grows only linearly in $C$ and $1/\mu$. Note that the extent of averaging is limited by the transmission capacity $C$: the larger the extent of averaging, the larger the amount of power which has to be transported across the network. Optimal filters $\bd{P}$ for two different values of $\gamma^{-1/2}$ are displayed in Figure \ref{fig:1DNoStorage}.

%%Note that the longer the averaging length scale, the smaller the
%%The reason why the optimal filter from the net supply $Z(t)$ to the fast generation $W(t)$ is a %%low-pass filter is particularly intuitive in the 1-D line network by reasoning in the (spatial) %%Fourier domain. Recall that in the case of no storage, the injection at node $i$ is $Z_i(t)-W_i(t)$. %%The flow from bus $i$ to bus $i+1$ is related to the injection simply as:
%%$$ F_{i,i+1}(t) = \sum_{j < i} Z_j(t),$$
%%i.e. the  flows are obtained from the injections through a "spatial integrator". But the integrator %%has a pole at $f=0$ in the Fourier domain. Hence we want to reduce the low frequency content of the %%(spatial) injection process to keep the flows small. Thus, the filter from $\{Z_i(t)\}_i$ to %%$\{Z_i(t)-W_i(t)$.

%
%**********************************
%
\subsubsection{Two-dimensional grid}

In this case $\theta=(\theta_1,\theta_2)\in  [-\pi,\pi]^2$, and
$\alpha(\theta)^2 = 4-2\cos\theta_1-2\cos\theta_2$.
Again, we evaluate Parseval's integral as $\gamma\to 0$, and present
the result.
\begin{lemma}
For the two-dimensional grid, in absence of storage, as $\gamma\to 0$, the optimal LQ control yields variances
\begin{align*}
\sigma_F^2 =
\frac{\sigma^2}{4\pi}\,\Big\{\log\Big(\frac{1}{e\gamma}\Big)+O(\gamma)\Big\}\, ,\;
\sigma_W^2 =
\frac{\sigma^2\gamma}{4\pi}\,\Big\{1+O(\gamma)\Big\}\, .
\end{align*}
\end{lemma}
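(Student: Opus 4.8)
The plan is to reduce both variances to Parseval integrals over the Brillouin zone and then extract their $\gamma\to0$ asymptotics. Specializing the no-storage per-mode formulas (the $\xi\to\infty$ limit of Eqs.~(\ref{eq:OptSigmaF}) and (\ref{eq:OptSigmaW}), stated just above) to the 2-D grid, where $\alpha^2(\theta)=4-2\cos\theta_1-2\cos\theta_2$ and $\theta\in[-\pi,\pi]^2$, Parseval's identity gives $\sigma_F^2=\sigma^2\int_{[-\pi,\pi]^2}\frac{\alpha^2(\theta)}{(\gamma+\alpha^2(\theta))^2}\frac{d\theta}{(2\pi)^2}$ and $\sigma_W^2=\sigma^2\int_{[-\pi,\pi]^2}\frac{\gamma^2}{(\gamma+\alpha^2(\theta))^2}\frac{d\theta}{(2\pi)^2}$. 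The key simplification I would exploit is that both are governed by the single scalar function $I_1(\gamma)\equiv\int_{[-\pi,\pi]^2}(\gamma+\alpha^2(\theta))^{-1}\frac{d\theta}{(2\pi)^2}$, the lattice Green function of the square lattice: writing $\frac{\alpha^2}{(\gamma+\alpha^2)^2}=\frac{1}{\gamma+\alpha^2}-\frac{\gamma}{(\gamma+\alpha^2)^2}$ and noting $\int(\gamma+\alpha^2)^{-2}\frac{d\theta}{(2\pi)^2}=-I_1'(\gamma)$, one gets $\sigma_W^2=-\sigma^2\gamma^2 I_1'(\gamma)$ and $\sigma_F^2=\sigma^2\big(I_1(\gamma)+\gamma I_1'(\gamma)\big)$, so it suffices to control $I_1$ and $I_1'$ near $\gamma=0$.

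I would dispatch $\sigma_W^2$ first and directly, since it is cleaner and it also pins down the $\gamma I_1'$ term needed for $\sigma_F^2$. The integrand $\gamma^2/(\gamma+\alpha^2)^2$ concentrates near $\theta=0$, where $\alpha^2(\theta)=|\theta|^2+O(|\theta|^4)$. Rescaling $\theta=\sqrt{\gamma}\,u$ and passing to polar coordinates, dominated convergence turns the integral into $\frac{\gamma}{(2\pi)^2}\int_{\reals^2}\frac{du}{(1+|u|^2)^2}=\frac{\gamma}{(2\pi)^2}\pi=\frac{\gamma}{4\pi}$ in the limit; the quartic correction to $\alpha^2$ and the truncation of the domain to $[-\pi,\pi]^2$ each contribute relative errors $O(\gamma)$, giving $\sigma_W^2=\frac{\sigma^2\gamma}{4\pi}\{1+O(\gamma)\}$. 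Equivalently this establishes $-\gamma I_1'(\gamma)=\gamma\int(\gamma+\alpha^2)^{-2}\frac{d\theta}{(2\pi)^2}=\frac{1}{4\pi}+O(\gamma)$, which is exactly the ingredient that supplies the factor $e$ in the claim for $\sigma_F^2$.

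The remaining and genuinely harder step is the logarithmic asymptotics of $I_1(\gamma)$, and this is the main obstacle: the integrand does not concentrate, since in two dimensions the $1/|\theta|^2$ behavior near the origin is only logarithmically divergent, so every scale between $\sqrt{\gamma}$ and $O(1)$ contributes. I would isolate the origin by the convergent subtraction $I_1(\gamma)=\int_{[-\pi,\pi]^2}\big[\tfrac{1}{\gamma+\alpha^2}-\tfrac{1}{\gamma+|\theta|^2}\big]\frac{d\theta}{(2\pi)^2}+\int_{[-\pi,\pi]^2}\frac{d\theta/(2\pi)^2}{\gamma+|\theta|^2}$, where the first term converges to a finite constant as $\gamma\to0$ because $\alpha^2-|\theta|^2=O(|\theta|^4)$, while the second is an explicit radial integral equal to $\frac{1}{4\pi}\log(1/\gamma)+O(1)$; alternatively one can use the heat-kernel representation $I_1(\gamma)=\int_0^\infty e^{-\gamma t}e^{-4t}I_0(2t)^2\,dt$ together with the large-$t$ law $e^{-4t}I_0(2t)^2\sim\frac{1}{4\pi t}$ and an exponential-integral estimate. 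Either route yields $I_1(\gamma)=\frac{1}{4\pi}\log(1/\gamma)+C+O(\gamma\log(1/\gamma))$, and differentiating (justified by monotonicity and analyticity of $I_1$ off the singularity, or by redoing the rescaling argument for $I_1'$ directly) reproduces $\gamma I_1'(\gamma)=-\frac{1}{4\pi}+O(\gamma\log(1/\gamma))$ from the previous paragraph. Combining gives $\sigma_F^2=\sigma^2\big(\frac{1}{4\pi}\log(1/\gamma)-\frac{1}{4\pi}\big)+\cdots=\frac{\sigma^2}{4\pi}\log\frac{1}{e\gamma}+\cdots$. The real work is thus the two-dimensional Green-function estimate: extracting the coefficient $\frac{1}{4\pi}$ (from the quadratic behavior of $\alpha^2$ and the 2-D phase space) and controlling the remainders uniformly in $\gamma$; the additive lattice constant $C$ merely shifts the argument of the logarithm and is immaterial to the leading exponential scalings of Table~\ref{table:summary_results}.
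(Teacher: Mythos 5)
Your proof is correct and follows exactly the route the paper indicates: the paper states this lemma without proof (``we evaluate Parseval's integral as $\gamma\to 0$, and present the result''), and your reduction to the lattice Green function $I_1(\gamma)$, the rescaling argument for $\sigma_W^2$, and the subtraction (or heat-kernel) estimate $I_1(\gamma)=\frac{1}{4\pi}\log(1/\gamma)+C+o(1)$ are precisely the intended Parseval computation. One remark: your derivation in fact shows the constant is nonzero --- for the square lattice $I_1(\gamma)=\frac{1}{4\pi}\log(32/\gamma)+o(1)$, so the lemma's $O(\gamma)$ inside the braces silently absorbs an additive $O(1)$ shift (and the true relative error carries a $\log(1/\gamma)$ factor); as you correctly note, this constant only rescales the argument of the logarithm and is immaterial to the exponential asymptotics in Theorem \ref{thm:2D_noS_largemu} and Table \ref{table:summary_results}.
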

Using these formulae and the equations (\ref{eq:Epsilon}) for the
performance parameters, and approximately optimizing over $\gamma$,
we obtain the following achievability result.
\begin{theorem}
\label{thm:2D_noS_largemu}
For the two-dimensional grid, in absence of storage, the optimal LQ control with Lagrange
parameter $\gamma = (\mu^2/C^2)\log(C^2/\mu^2e)$ yields, in the limit
$\mu/C\to 0$,
$C^2/(\sigma^2 \log(C/\mu))\equiv  M\to \infty$:
\begin{align}
\etot \le
\exp\left\{-\frac{2\pi  C^2}{\sigma^2\log (C^2/\mu^2e)}
\big(1+o(1)\big)\right\}\,\, .\label{eq:2dNoStorage}
\end{align}
\end{theorem}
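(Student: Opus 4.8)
The plan is to feed the variance asymptotics from the preceding Lemma into the Gaussian performance estimates \eqref{eq:Epsilon} and then read off the exponent for the prescribed $\gamma$. First I would note that in the no-storage regime $S=0$ and the virtual storage is frozen, $B(t)\equiv 0$, so $\sigma_B=0$ and the storage-tail term in \eqref{eq:Epsilon} drops out; what remains is $\etot\le 2\sigma_F\Ftail(C/\sigma_F)+\sigma_W\Ftail(\mu/\sigma_W)$. Applying the standard Gaussian tail bound $\Ftail(z)\le e^{-z^2/2}$ turns this into $\etot\le 2\sigma_F\,e^{-C^2/(2\sigma_F^2)}+\sigma_W\,e^{-\mu^2/(2\sigma_W^2)}$, so the entire problem reduces to controlling the two exponents $C^2/(2\sigma_F^2)$ and $\mu^2/(2\sigma_W^2)$ and checking that the algebraic prefactors are harmless.

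Next I would substitute the asymptotics $\sigma_F^2=\frac{\sigma^2}{4\pi}\log(1/(e\gamma))(1+o(1))$ and $\sigma_W^2=\frac{\sigma^2\gamma}{4\pi}(1+o(1))$, giving the flow exponent $C^2/(2\sigma_F^2)=\frac{2\pi C^2}{\sigma^2\log(1/(e\gamma))}(1+o(1))$ and the fast-generation exponent $\mu^2/(2\sigma_W^2)=\frac{2\pi\mu^2}{\sigma^2\gamma}(1+o(1))$. Writing $L\equiv\log(C^2/(\mu^2 e))$ and inserting the prescribed $\gamma=(\mu^2/C^2)L$, I would check the elementary identity $\log(1/(e\gamma))=L-\log L=L(1+o(1))$ as $\mu/C\to0$, so the flow exponent becomes $\frac{2\pi C^2}{\sigma^2 L}(1+o(1))$; simultaneously the fast-generation exponent collapses to $\frac{2\pi\mu^2}{\sigma^2(\mu^2/C^2)L}=\frac{2\pi C^2}{\sigma^2 L}$. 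The whole point of this choice of $\gamma$ is that it equalizes the two competing costs, so both exponents coincide with the target $\frac{2\pi C^2}{\sigma^2 L}$ of \eqref{eq:2dNoStorage}, which diverges because it is a fixed multiple of $M$.

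Finally I would absorb the prefactors: since $\sigma_F=O(\sigma\sqrt{L})$ and $\sigma_W=O(\sigma\sqrt{\gamma})\le\sigma$, they contribute only a factor $\exp\{O(\log L)\}$, negligible against the diverging exponent, so both terms, and hence their sum, are bounded by $\exp\{-\frac{2\pi C^2}{\sigma^2 L}(1+o(1))\}$, which is exactly \eqref{eq:2dNoStorage}. The main obstacle is not any single computation but the asymptotic bookkeeping: one must verify that every $o(1)$ produced by the Lemma's error terms, by the approximation $\log(1/(e\gamma))=L(1+o(1))$, and by the subexponential prefactor $\exp\{O(\log L)\}$ is genuinely swallowed by the exponent $\Theta(M)$, which is precisely what the hypothesis $M\to\infty$ secures; the conceptual (rather than computational) ingredient is recognizing that the optimizing $\gamma$ is the one balancing the flow-violation and fast-generation exponents.
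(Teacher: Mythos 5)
Your proposal is correct and takes essentially the same approach the paper intends: feed the lemma's asymptotics $\sigma_F^2\sim\frac{\sigma^2}{4\pi}\log\bigl(1/(e\gamma)\bigr)$ and $\sigma_W^2\sim\frac{\sigma^2\gamma}{4\pi}$ into Eq.~\eqref{eq:Epsilon} (with the storage term absent since $B(t)\equiv 0$), and check that the prescribed $\gamma=(\mu^2/C^2)\log(C^2/\mu^2 e)$ balances the flow and fast-generation exponents at the common value $\frac{2\pi C^2}{\sigma^2\log(C^2/\mu^2 e)}$. Your identity $\log\bigl(1/(e\gamma)\bigr)=L-\log L=L(1+o(1))$ and the absorption of the subexponential prefactors are exactly the ``approximately optimizing over $\gamma$'' step the paper states without detail.
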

Notice the striking difference with respect to the one-dimensional
case, cf. Theorem \ref{thm:1dNoStorage}. The cost  goes
exponentially to $0$, but now overprovisioning plays a significantly
smaller role. For instance, if we fix the link capacity $C$ to be
the same, the exponents in Eq.~(\ref{eq:1dNoStorage}) are matched if
$\mu_{\rm 2d}\approx \exp(-\pi C/2\mu_{\rm 1d})\}$, i.e. an
exponentially smaller overprovisioning is sufficient.
%
%****************************************************
%
\vspace{-0.2cm}
\subsection{With Storage}

In this section we consider the case in which storage is available.
%% but
%%overprovisioning $\mu$ is small (precise asymptotic conditions will be
%%stated formally below).
Again we focus on the regime where the optimal cost is small. Within our LQ formulation we want therefore to penalize $\sigma_W$ much more than $\sigma_B$ and $\sigma_F$.
This corresponds to the asymptotics
$\gamma\to 0$, $\xi \equiv \gamma/s\to 0$ (the ratio $s$ need not to be
fixed).
It turns out that the relevant behavior is obtained by considering
$\alpha^2=\Theta(\gamma)$ and hence $\beta\to\infty$. The linear
filters are given in this regime by
\begin{align*}
p(\theta)=q(\theta)& =({\gamma}/{\sqrt{s}})\,
\big(\gamma+\alpha(\theta)^2\big)^{-1/2}\, ,\\
k(\theta) &\approx
({1}/{\sqrt{s}})\big(\gamma+\alpha(\theta)^2\big)^{1/2}\,
,\;\;\;\; h(\theta) \approx 1\, .
\end{align*}
Using these filters we obtain
\begin{eqnarray*}
\frac{\sigma_B(\theta)^2}{\sigma^2} &\approx &
\frac{1}{2}\left(\frac{s}{\gamma+\alpha(\theta)^2}\right)^{1/2}\, ,\\
\frac{\sigma_F(\theta)^2}{\sigma^2} &\approx &\frac{\alpha^2(\theta)}{2\sqrt{s}}
\left(\frac{1}{\gamma+\alpha(\theta)^2}\right)^{3/2}\, ,\\
\frac{\sigma_W(\theta)^2}{\sigma^2} &\approx &\frac{\gamma^2}{2\sqrt{s}}
\left(\frac{1}{\gamma+\alpha(\theta)^2}\right)^{3/2}\, .
\end{eqnarray*}

\subsubsection{One-dimensional grid}

\begin{figure}
\hspace{-1cm}
\centering
\vspace{-10pt}
\phantom{AAIi}\includegraphics[scale = 0.7]{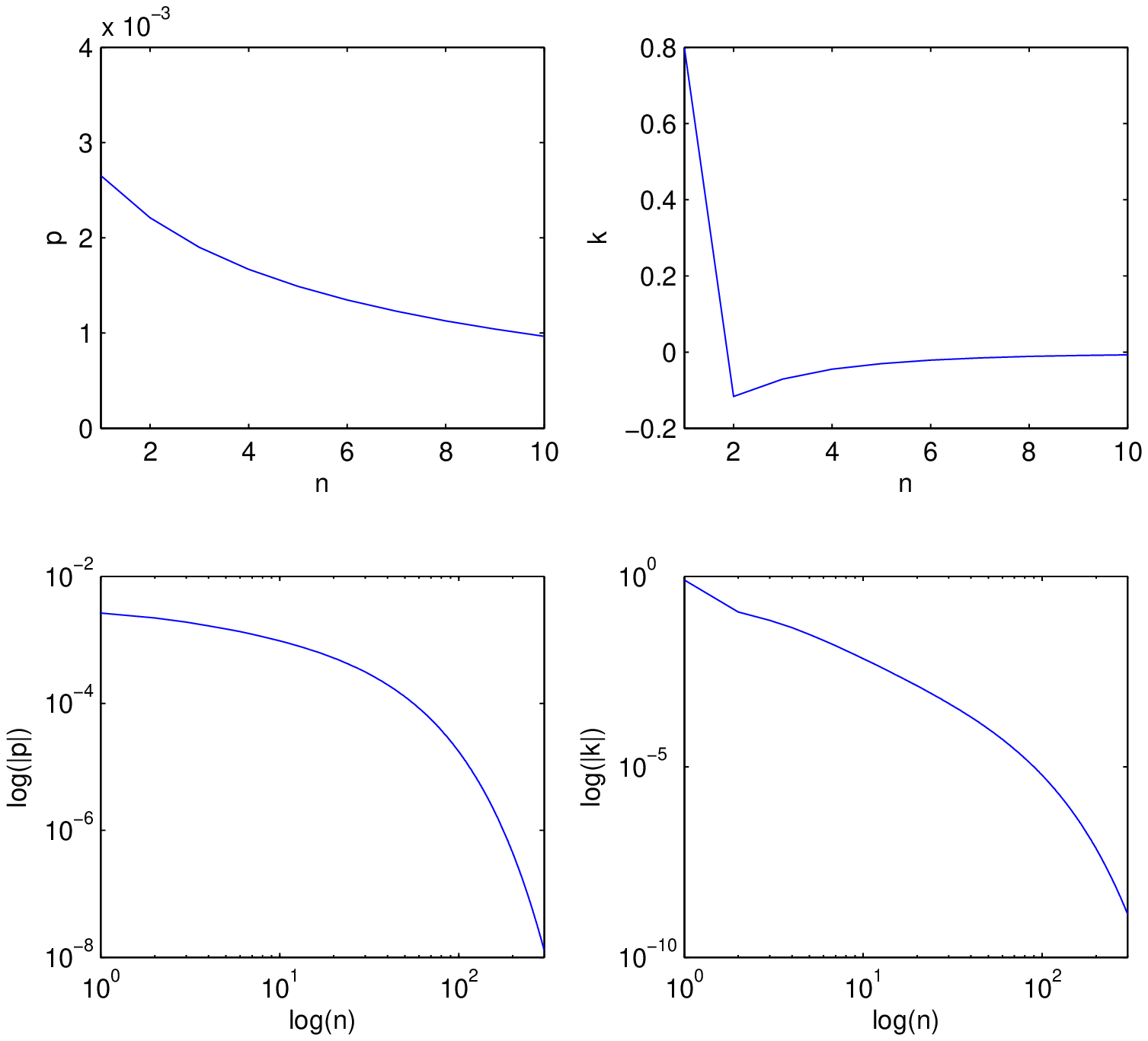}
\vspace{-15pt}
\includegraphics[scale = 0.7]{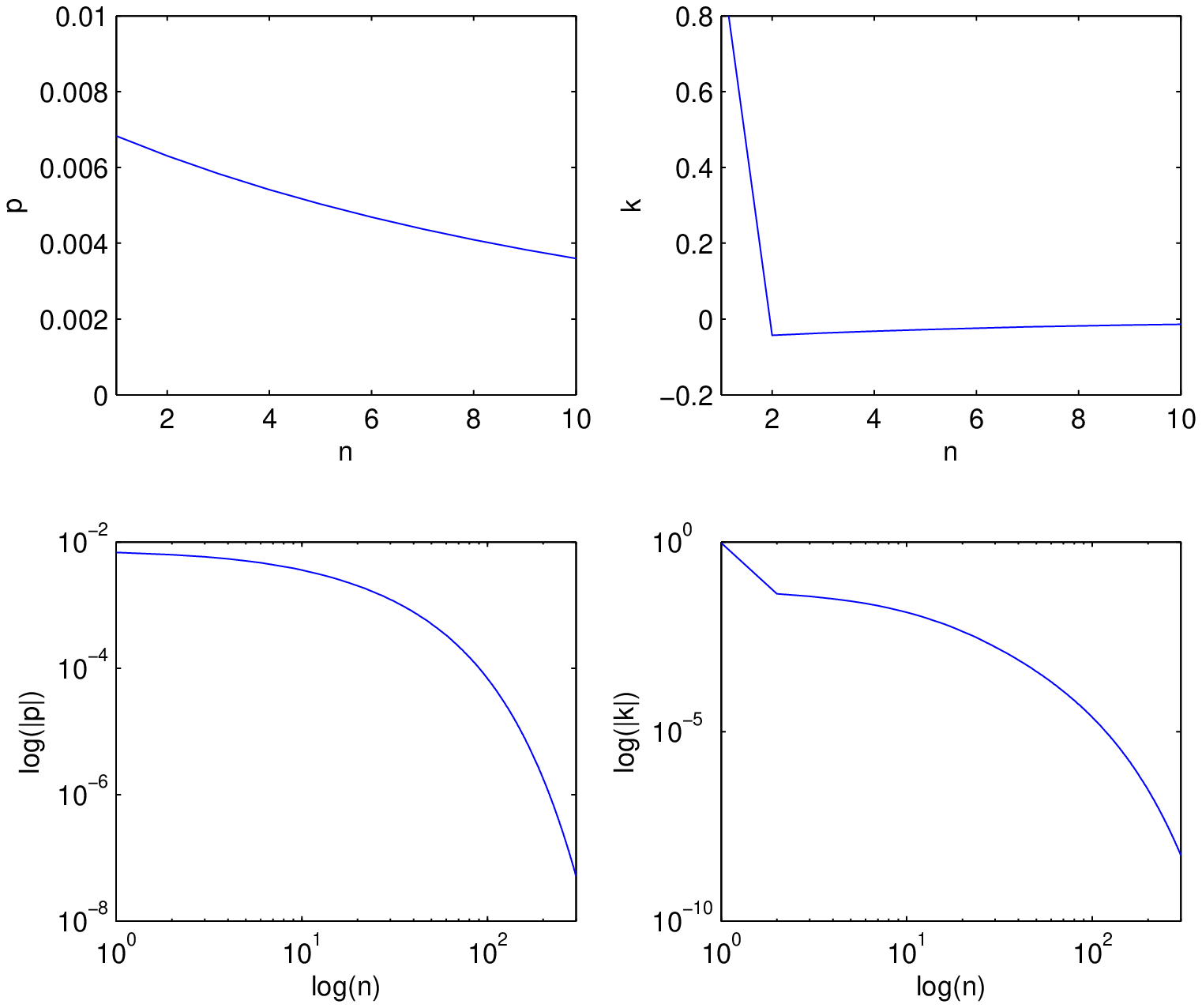}
\caption{The filters $\bd{P}$, $\bd{K}$ for a one-dimensional grid $\integers$ in
  the case in which storage is available. Notice that
  by translation invariance $\bd{P}_{i,j}= P_{i-j}$, $\bd{K}_{i,j}= K_{i-j}$ for any
  $i,j\in\integers$, and further $P_{n} = P_{-n}$, $K_{n} =
  K_{-n}$. Here we plot the filters for effective
  length scale: $1/\sqrt{\gamma} = 30$ and $S=C$ (top panels) or
  $S=C/4$ (lower panels).}
  \label{fig:1DStorage}
\end{figure}

The variances are obtained by Parseval's identity, integrating
$\sigma_{B,W,F}^2(\theta)$ over $\theta\in [-\pi,\pi]$. The form of
the optimal filters $\bd{P}$ and $\bd{K}$ is presented in Figure
\ref{fig:1DStorage}. We obtain the following asymptotic results.
\begin{lemma}
Consider a one-dimensional grid, subject to the LQ optimal control.
For $\gamma\to 0$ and $\xi=\gamma/s\to 0$
\begin{align*}
\frac{\sigma_B^2}{\sigma^2} & =
\frac{\sqrt{s}}{4\pi}\, \log\frac{1}{\gamma} +O(1) \, ,\;
\frac{\sigma_F^2}{\sigma^2}  =
\frac{1}{4\pi\sqrt{s}}\, \log\frac{1}{\gamma} +O(1,s^{-1}) \, ,\\
\frac{\sigma_W^2}{\sigma^2} & =
\frac{\Omega_1}{2\sqrt{s}}\, \gamma + O(\gamma^2,\gamma^{3/2}/s)\, ,
\end{align*}
where $\Omega_d$ is the integral
(here $\de^du\equiv \de u_1\times\cdots\times\de u_d$)
\begin{equation}
\Omega_d \equiv 1/(2\pi)^d
\int_{\reals^d}1/(1+\|u\|^2)^{3/2}\, \de^d u\, .\label{eq:OmegaDef}
\end{equation}
\end{lemma}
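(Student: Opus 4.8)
The plan is to start from the Fourier-domain variances of the LQ-controlled processes and integrate them over $\theta$ by Parseval's identity. For the one-dimensional grid $\Theta=[-\pi,\pi]$ carries the normalized measure $\de\theta/2\pi$ and the dispersion is $\alpha^2(\theta)=2-2\cos\theta=4\sin^2(\theta/2)$, so that for each $\bullet\in\{B,F,W\}$ the stationary variance is
\[
\frac{\sigma_\bullet^2}{\sigma^2}=\frac{1}{2\pi}\int_{-\pi}^{\pi}\frac{\sigma_\bullet^2(\theta)}{\sigma^2}\,\de\theta,
\]
with $\sigma_\bullet^2(\theta)$ given exactly by \eqref{eq:OptSigmaB}--\eqref{eq:OptSigmaW}. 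In the regime $\gamma\to0$, $\xi=\gamma/s\to0$ one has $\beta(\theta)=s/(\gamma+\alpha^2(\theta))$, which is large precisely where $\alpha^2(\theta)=O(\gamma)$, i.e. $|\theta|\lesssim\sqrt\gamma$; there $k=1/\sqrt\beta+O(1/\beta)$ and $h=1-k\to1$ by properties $(ii)$--$(iii)$, and the exact expressions collapse to the approximate kernels $\sigma_B^2(\theta)/\sigma^2\approx\tfrac12\,(s/(\gamma+\alpha^2))^{1/2}$, $\sigma_F^2(\theta)/\sigma^2\approx(\alpha^2/2\sqrt s)(\gamma+\alpha^2)^{-3/2}$, $\sigma_W^2(\theta)/\sigma^2\approx(\gamma^2/2\sqrt s)(\gamma+\alpha^2)^{-3/2}$ recorded just above. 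Since each integrand is peaked or singular at $\theta=0$ as $\gamma\to0$, I would localize: split $[-\pi,\pi]$ into an inner window $|\theta|\le\delta$ (with $\sqrt\gamma\ll\delta\ll1$) and its complement, use $2-2\cos\theta=\theta^2(1+O(\theta^2))$ on the inner window, and bound the outer part crudely.

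For $\sigma_W^2$ the inner window dominates: the substitution $\theta=\sqrt\gamma\,u$ gives
\[
\frac{\gamma^2}{2\sqrt s}\,\frac{1}{2\pi}\int\frac{\sqrt\gamma\,\de u}{(\gamma+\gamma u^2)^{3/2}}=\frac{\gamma}{2\sqrt s}\,\frac{1}{2\pi}\int\frac{\de u}{(1+u^2)^{3/2}},
\]
and extending the range to all of $\reals$ produces exactly $(\Omega_1/2\sqrt s)\,\gamma$ with $\Omega_1$ as in \eqref{eq:OmegaDef}, the range extension and the $O(\theta^2)$ cosine correction supplying the stated $O(\gamma^2,\gamma^{3/2}/s)$ remainder (note $\gamma/s=\xi\to0$, so these are genuinely lower order). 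For $\sigma_B^2$ and $\sigma_F^2$ the integrals instead grow logarithmically: on $\sqrt\gamma\lesssim|\theta|\lesssim\delta$ the integrands behave like $(\sqrt s/2)\,|\theta|^{-1}$ and $(2\sqrt s)^{-1}|\theta|^{-1}$ respectively, so that $\tfrac1{2\pi}\int_{\sqrt\gamma}^{\delta}\tfrac{\de\theta}{\theta}$ supplies the coefficient, yielding $\sigma_B^2/\sigma^2=(\sqrt s/4\pi)\log(1/\gamma)+O(1)$ and $\sigma_F^2/\sigma^2=(1/4\pi\sqrt s)\log(1/\gamma)+O(1,s^{-1})$, with the constant and the $\log\delta$ pieces absorbed into the remainder.

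The main obstacle is the error bookkeeping, not the leading-order evaluation. One must justify replacing the exact kernels \eqref{eq:OptSigmaB}--\eqref{eq:OptSigmaW} by their $\beta\to\infty$ forms over the whole range that feeds the logarithm --- which for $\sigma_B,\sigma_F$ reaches out to $|\theta|=\Theta(1)$ --- even though $\beta(\theta)$ stops being large once $|\theta|\gtrsim\sqrt s$. The clean route is to keep the exact expressions on the outer region and observe that there $k(\theta)$ is bounded away from $0$, so $\sigma_B^2(\theta)$ and $\sigma_F^2(\theta)$ are bounded and integrate to $O(1)$ (respectively $O(s^{-1})$), matching the claimed remainders; only the inner window carries the $\log(1/\gamma)$, whose coefficient is insensitive to both $\delta$ and the cosine-versus-square replacement. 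For $\sigma_W^2$ the analogous step is easier, since the outer integrand decays like $\gamma^2|\theta|^{-3}$ and contributes only to the lower-order remainder. Pushing these bounds to the exact orders $O(\gamma^2)$, $O(\gamma^{3/2}/s)$ is the one genuinely delicate estimate, but it is routine Parseval/Laplace-type work once the localization is set up.
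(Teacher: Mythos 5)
Your proposal is correct and takes essentially the same route as the paper, which states the lemma as the result of integrating the large-$\beta$ Fourier-domain kernels $\sigma^2_{B,F,W}(\theta)$ over $\theta\in[-\pi,\pi]$ by Parseval's identity --- precisely your localization argument, with the substitution $\theta=\sqrt{\gamma}\,u$ producing $\Omega_1$ for $\sigma_W^2$ and the window $\sqrt{\gamma}\lesssim|\theta|\lesssim 1$ producing the $\log(1/\gamma)$ terms for $\sigma_B^2$ and $\sigma_F^2$. Your additional care on the outer region $|\theta|\gtrsim\sqrt{s}$, where $\beta(\theta)=s/(\gamma+\alpha^2(\theta))$ is no longer large and the exact kernels \eqref{eq:OptSigmaB}--\eqref{eq:OptSigmaW} must be used, correctly supplies the error bookkeeping behind the stated remainders, which the paper leaves implicit.
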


Using Eqs.~(\ref{eq:Epsilon}) to estimate the total cost $\etot$ and
minimizing it over $\gamma$ we obtain the following.
\begin{theorem}\label{thm:1dStorage}
Consider a one-dimensional grid and assume $CS/\sigma^2\to \infty$.
The optimal LQ scheme achieves the following performance:
\begin{align*}
\mu = e^{-\omega\Big(\sqrt{\frac{CS}{\sigma^2}}\Big)}
&\Rightarrow \etot\le \exp\Big\{-\sqrt{\frac{\pi
    CS}{2\sigma^2}} \big(1+o(1)\big)\Big\},\\
\mu = e^{-o\Big(\sqrt{\frac{CS}{\sigma^2}}\Big)}
&\Rightarrow  \etot\le \exp\Big\{-\frac{\pi
    CS(1+o(1))}{2\sigma^2\log C/\mu}\Big\} ,
\end{align*}
under the further assumption $\sqrt{\pi CS/2\sigma^2}
-\log(C/S)\to\infty$ (in the first case) and $\mu^2\log(C/\mu)/\min(C,S)^2\to
0$ (in the second).
In the first case the claimed behavior is achieved by $s=S^2/4C^2$, and
$\gamma = \exp\{-(2\pi C S/\sigma^2)^{1/2}\}$. In the second by
letting $s=S^2/4C^2$, and $\gamma = \mu^2\log (C/\mu)/(\pi\Omega_1 C^2)$.
\end{theorem}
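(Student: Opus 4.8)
The plan is to obtain $\etot$ by specializing the Gaussian tail bounds in Eq.~(\ref{eq:Epsilon}) to the asymptotic variances from the preceding Lemma, and then to optimize the free parameter $\gamma$ (with $s=S^2/4C^2$ fixed by a separate balancing argument). The starting point is the three variance formulas $\sigma_B^2/\sigma^2 = (\sqrt{s}/4\pi)\log(1/\gamma)+O(1)$, $\sigma_F^2/\sigma^2 = (1/4\pi\sqrt{s})\log(1/\gamma)+O(1,s^{-1})$, and $\sigma_W^2/\sigma^2 = (\Omega_1/2\sqrt{s})\,\gamma+O(\gamma^2,\gamma^{3/2}/s)$. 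Substituting into the bound $\ve_\F \le 2\sigma_F\,\Ftail(C/\sigma_F)$ and $\ve_\W \le \sigma_B\,\Ftail(S/2\sigma_B)+\sigma_W\,\Ftail(\mu/\sigma_W)$ and using the Gaussian tail estimate $\Ftail(z)=\exp\{-z^2/2(1+o(1))\}$ (valid for $z\to\infty$), the leading exponential behavior of each of the three contributions becomes, up to $1+o(1)$ factors in the exponent,
\begin{align*}
\ve_\F &\sim \exp\Big\{-\tfrac{C^2}{2\sigma_F^2}\Big\}
= \exp\Big\{-\tfrac{2\pi\sqrt{s}\,C^2}{\sigma^2\log(1/\gamma)}\Big\},\\
\sigma_B\,\Ftail(S/2\sigma_B) &\sim \exp\Big\{-\tfrac{S^2}{8\sigma_B^2}\Big\}
= \exp\Big\{-\tfrac{\pi S^2}{2\sqrt{s}\,\sigma^2\log(1/\gamma)}\Big\},\\
\sigma_W\,\Ftail(\mu/\sigma_W) &\sim \exp\Big\{-\tfrac{\mu^2}{2\sigma_W^2}\Big\}.
\end{align*}

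Next I would fix $s$ by balancing the first two exponents, which both scale as $1/\log(1/\gamma)$: setting $\sqrt{s}\,C^2 = S^2/\sqrt{s}$, i.e. $s=S^2/C^2$ up to the constant, and accounting for the factors of $2$ and $8$ in the two tail arguments yields the stated $s=S^2/4C^2$. With this choice the flow and storage exponents coincide and both equal $\exp\{-\pi CS/(\sigma^2\log(1/\gamma))\}$ up to constants. The remaining task is a one-variable optimization over $\gamma$, trading the $\sigma_W$ term (which improves as $\gamma\to0$, since $\sigma_W^2\propto\gamma$) against the flow/storage terms (which worsen as $\gamma\to0$, since they carry $1/\log(1/\gamma)$). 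In the first regime $\mu = e^{-\omega(\sqrt{CS/\sigma^2})}$ the $\sigma_W$ term is negligibly small for a wide range of $\gamma$, so one simply pushes $\gamma$ as small as the flow/storage exponent permits while keeping $\log(1/\gamma)$ large enough; equating $\log(1/\gamma)$ to the target exponent scale $\sqrt{CS/\sigma^2}$ gives $\gamma = \exp\{-(2\pi CS/\sigma^2)^{1/2}\}$ and the exponent $\sqrt{\pi CS/2\sigma^2}$. In the second regime, $\mu$ is large enough that the $\sigma_W$ term is the binding constraint, and genuinely balancing $\mu^2/(2\sigma_W^2)$ against $\pi CS/(\sigma^2\log(1/\gamma))$ produces the explicit choice $\gamma=\mu^2\log(C/\mu)/(\pi\Omega_1 C^2)$ and the exponent $\pi CS/(2\sigma^2\log(C/\mu))$.

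The \textbf{main obstacle} I anticipate is controlling the polynomial prefactors and the $o(1)$ corrections carefully enough to justify that the stated exponents are tight, rather than just the crude leading term. Two points require care. First, the $\Ftail(z)=\exp\{-z^2/2(1+o(1))\}$ approximation absorbs a polynomial-in-$z$ prefactor and a $\sqrt{\log}$ factor from $\sigma_F,\sigma_B$; one must verify these only contribute to the $1+o(1)$ in the exponent, which is where the side conditions enter. Second, the technical hypotheses---$\sqrt{\pi CS/2\sigma^2}-\log(C/S)\to\infty$ in the first case and $\mu^2\log(C/\mu)/\min(C,S)^2\to0$ in the second---are precisely the conditions ensuring that (a) the chosen $\gamma$ is in the asymptotic regime $\gamma\to0$, $\xi=\gamma/s\to0$ where the variance Lemma applies, and (b) the neglected term among the three is genuinely subdominant. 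I would therefore devote most of the write-up to verifying these consistency conditions: checking that the optimizing $\gamma$ indeed satisfies $\gamma\to0$ and that $\log(1/\gamma)$ has the claimed order, and confirming that the $O(1)$ and $O(1,s^{-1})$ corrections to $\sigma_B^2,\sigma_F^2$ do not disturb the exponent once divided by $C^2$ or $S^2$. The actual minimization over $\gamma$ and the substitution are routine calculus once these bookkeeping facts are in place.
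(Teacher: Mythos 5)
Your proposal is correct and follows essentially the same route as the paper, which proves this theorem exactly by substituting the variance asymptotics of the preceding lemma into the Gaussian tail bounds of Eq.~(\ref{eq:Epsilon}) and minimizing over $\gamma$ (with $s=S^2/4C^2$ balancing the flow and storage exponents); your computed exponents, the balancing yielding $s$, and both choices of $\gamma$ all match the paper's. Your reading of the two side conditions --- as guaranteeing the validity of the regime $\gamma\to 0$, $\xi=\gamma/s\to 0$ and the subdominance of prefactors --- is also the role they play in the paper's argument.
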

This theorem points at a striking threshold phenomenon. If
overprovisioning is extremely small, or vanishing, then the cost is
exponentially small in $\sqrt{CS}$. On the other hand, even a modest
overprovisioning changes this behavior leading to a decrease that is
exponential in $CS$ (barring exponential factors). Overprovisioning
also reduces dramatically the effective averaging length scale
$\gamma^{-1/2}$. It also instructive to compare the second case in
Theorem \ref{thm:1dStorage} with its analogue in the case of no
storage, cf. Eq.~(\ref{eq:1dNoStorage}): storage seem to replace
overprovisioning.

\subsubsection{Two-dimensional grid}

As done in the previous cases, the variances of
$B$, $F$, $W$ are obtained by integrating
$\sigma^2_{B,F,W}(\theta)$ over  $\theta=(\theta_1,\theta_2)\in [-\pi,\pi]^2$.
\begin{lemma}
Consider a two-dimensional grid, subject to the LQ optimal control.
For $\gamma\to 0$ and $s=\Theta(1)$, we have
\begin{align*}
\frac{\sigma_B^2}{\sigma^2} & = G_B(s)
+O(1,\sqrt{\gamma}) \, ,\;\;\;\;\;\;
\frac{\sigma_F^2}{\sigma^2}  =  G_F(s)+O(\sqrt{\gamma}) \, ,\\
\frac{\sigma_W^2}{\sigma^2}  &=
\frac{\Omega_2}{2\sqrt{s}}\, \gamma^{3/2} + O(\gamma^{2})\, ,
\end{align*}
where $\Omega_2$ is the constant defined as per
Eq.~(\ref{eq:OmegaDef}), and $G_B(s)$, $G_F(s)$ are strictly positive
and bounded for $s$ bounded. Further, as $s\to\infty$
$G_B(s) = \K_2\sqrt{s}/2+O(1)$, $G_F(s) = \K_2/(2\sqrt{s})+O(1/s)$, where
$\K_2\equiv \int_{[-\pi,\pi]^2} \frac{1}{|\alpha(\theta)|}\;
\de\theta$.
\end{lemma}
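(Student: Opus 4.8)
The plan is to reduce each stationary variance to a Parseval integral of the corresponding exact per-mode variance and then extract its $\gamma\to0$ behaviour. By Parseval's identity, $\sigma_\bullet^2=(2\pi)^{-2}\int_{[-\pi,\pi]^2}\sigma_\bullet^2(\theta)\,\de\theta$ for $\bullet\in\{B,F,W\}$, where $\sigma_\bullet^2(\theta)$ is given by Eqs.~(\ref{eq:OptSigmaB})--(\ref{eq:OptSigmaW}) with $\alpha^2(\theta)=4-2\cos\theta_1-2\cos\theta_2$ and $k(\theta)$ determined through $\beta(\theta)=s/(\gamma+\alpha^2(\theta))$, cf.~(\ref{eq:OptFilter3}) and (\ref{eq:BetaDef}). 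I would stress from the outset that the closed-form filter expressions quoted in the ``with storage'' preamble are only large-$\beta$ approximations, valid near the origin; for $B$ and $F$ in two dimensions the bulk of the domain (where $\beta=\Theta(1)$) contributes at leading order, so the exact integrands must be retained and $G_B(s),G_F(s)$ are defined as genuine $\gamma\to0$ limits rather than closed forms.

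For $\sigma_W^2$ the integrand is concentrated near $\theta=0$. Writing $\sigma_W^2(\theta)/\sigma^2=\frac{\gamma^2}{(\gamma+\alpha^2)^2}\,\frac{k^2}{1-(1-k)^2}$ and using $\frac{k^2}{1-(1-k)^2}=\tfrac12\sqrt{(\gamma+\alpha^2)/s}\,(1+o(1))$ where $\beta$ is large, the leading integrand is $\frac{\gamma^2}{2\sqrt{s}}(\gamma+\alpha^2)^{-3/2}$. Rescaling $\theta=\sqrt{\gamma}\,u$, replacing $\alpha^2(\theta)$ by $\gamma\|u\|^2$, and letting the rescaled domain expand to $\reals^2$, I obtain $\frac{\gamma^{3/2}}{2\sqrt{s}}\,(2\pi)^{-2}\int_{\reals^2}(1+\|u\|^2)^{-3/2}\de^2u=\frac{\Omega_2}{2\sqrt{s}}\gamma^{3/2}$, cf.~(\ref{eq:OmegaDef}); the convergence of this integral in dimension two is what turns the answer into a clean power of $\gamma$. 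The complementary region $\alpha^2=\Theta(1)$ contributes $O(\gamma^2)$, since there $\sigma_W^2(\theta)/\sigma^2=O(\gamma^2)$, and the curvature error $\alpha^2-\|\theta\|^2=O(\|\theta\|^4)$ together with the domain truncation is of higher order; this yields the stated $O(\gamma^2)$ remainder.

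For $\sigma_B^2$ and $\sigma_F^2$ the strategy is different. At $\gamma=0$ the integrands $\frac{(1-k)^2}{1-(1-k)^2}$ and $\frac{1}{\alpha^2}\frac{k}{2-k}$, evaluated at $\beta=s/\alpha^2$, behave near the origin like $\tfrac12\sqrt{s}\,\alpha^{-1}$ and $\tfrac12(\sqrt{s}\,\alpha)^{-1}$, i.e.~like $\|\theta\|^{-1}$; in two dimensions the radial measure supplies a factor $r\,\de r$, so this singularity is integrable and the $\gamma=0$ integrals converge to finite, strictly positive $G_B(s),G_F(s)$. To justify passing to the limit I would invoke dominated convergence: using the two-sided bound $k(\beta)\asymp\min(1,\beta^{-1/2})$ (from monotonicity and property $(ii)$) together with $\gamma+\alpha^2\ge\alpha^2$, one checks that both integrands are bounded, uniformly in $\gamma$, by a constant multiple of $\alpha(\theta)^{-1}$, which is integrable. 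The regularisation at the origin, controlled by the elementary estimate $\int_0^\delta[(\gamma+r^2)^{-1/2}-r^{-1}]\,r\,\de r=O(\sqrt{\gamma})$, produces the $O(\sqrt{\gamma})$ correction. For the large-$s$ behaviour I would note that $\beta=s/(\gamma+\alpha^2)$ is then large uniformly over the domain, so the expansion $\frac{(1-k)^2}{1-(1-k)^2}=\tfrac12\sqrt{\beta}-\tfrac12+O(\beta^{-1/2})$ holds everywhere; integrating the leading term $\tfrac12\sqrt{\beta}=\tfrac{\sqrt{s}}{2}\alpha^{-1}$ reproduces $\K_2\sqrt{s}/2$ while the constant $-\tfrac12$ integrates to the stated $O(1)$, and the analogous expansion $\frac{1}{\alpha^2}\frac{k}{2-k}=\tfrac12(\sqrt{s}\,\alpha)^{-1}+O(1/s)$ gives $G_F(s)=\K_2/(2\sqrt{s})+O(1/s)$.

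The main obstacle I anticipate is the uniform control needed for the $B$ and $F$ integrals: since $k(\cdot)$ has no elementary closed form and $\beta(\theta)$ ranges from $\Theta(1)$ in the bulk to $s/\gamma$ at the origin, one must exhibit an explicit $\gamma$-independent integrable majorant and verify the pointwise limit at the same time. This is exactly where the planar geometry is essential: the identical computation in one dimension meets a non-integrable $\alpha^{-1}$ singularity and instead generates the $\log(1/\gamma)$ divergence of the one-dimensional lemma, so the argument must make transparent that the extra factor $r$ in the two-dimensional measure tames the singularity. The remaining work---the explicit large-$\beta$ expansion of $k$ and the bookkeeping of the $O(\sqrt{\gamma})$, $O(1)$ and $O(1/s)$ remainders---is routine once the dominating function is in place.
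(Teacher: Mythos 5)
Your proposal is correct and follows essentially the route the paper itself takes (and largely leaves implicit): Parseval's identity over $\theta\in[-\pi,\pi]^2$ applied to the exact per-mode variances of Eqs.~(\ref{eq:OptSigmaB})--(\ref{eq:OptSigmaW}), with the rescaling $\theta=\sqrt{\gamma}\,u$ producing $\Omega_2\gamma^{3/2}/(2\sqrt{s})$ for $\sigma_W^2$, dominated convergence with the integrable majorant $\propto\alpha(\theta)^{-1}$ (the decisive two-dimensional feature) defining $G_B,G_F$ as $\gamma\to 0$ limits, and the uniform large-$\beta$ expansion yielding the $\K_2$ asymptotics as $s\to\infty$. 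One small simplification you may wish to use: the closed form $k=2/(1+\sqrt{4\beta+1})$ gives the exact identities $k/(2-k)=(4\beta+1)^{-1/2}$ and $(1-k)^2/[1-(1-k)^2]=(2\beta+1-\sqrt{4\beta+1})/(2\sqrt{4\beta+1})$, which make your majorants and the $O(\gamma^2)$ remainder for $\sigma_W^2$ immediate, avoiding any logarithmic loss from expanding $k$ only to first order in the intermediate region.
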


Minimizing the total outage over $s, \gamma$, we obtain:
\begin{theorem}\label{thm:2DStorage}
Assume $CS/\sigma^2\to \infty$ and $C/S=\Theta(1)$. The optimal cost
for scheme a memory-one linear scheme on the two-dimensional grid
network then behaves as follows
\begin{eqnarray}
\etot \le \exp\Big\{-\frac{
    CS}{2\sigma^2\Gamma(S/C)} \big(1+o(1)\big)\Big\}\,  . \label{eq:2dStorage}
\end{eqnarray}
Here $u\mapsto \Gamma(u)$ is a function which is strictly positive
and bounded for $u$ bounded away from $0$ and $\infty$.
In particular, $\Gamma(u)\to \K_2$ as $u\to \infty$, and $\Gamma(u) =
\Gamma_0 u+o(u)$ as $u\to 0$ ($\Gamma_0>0$).

The claimed behavior is achieved by selecting $s=f(S/C)$, and
$\gamma$ as follows.
If $\mu =\exp\{-o(CS/\sigma^2) \}$ then $\gamma =\tilde{f}(S/C)(\mu^2/CS)^{2/3}$.
If instead $\mu = \exp\{-\omega(CS/\sigma^2) \}$, then
$\gamma =\exp\{-2CS/(3\Gamma(S/C)\sigma^2) \}$,
for suitable functions $f, \tilde{f}$
(In the first case, we also assume $\mu/C\to 0$.)
\end{theorem}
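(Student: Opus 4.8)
\emph{Proof strategy.} The plan is to start from the preceding Lemma, which for $\gamma\to 0$ and $s=\Theta(1)$ supplies the stationary variances $\sigma_B^2/\sigma^2 = G_B(s)+O(1,\sqrt{\gamma})$, $\sigma_F^2/\sigma^2 = G_F(s)+O(\sqrt{\gamma})$ and $\sigma_W^2/\sigma^2 = (\Omega_2/2\sqrt{s})\,\gamma^{3/2}+O(\gamma^2)$, and to feed these into the achievability bound \eqref{eq:Epsilon}. Using the Gaussian tail bound $\Ftail(z)\le e^{-z^2/2}$ (with the matching lower bound for the $(1+o(1))$), the three contributions to $\etot$ are controlled by the exponents
\[
\frac{C^2}{2\sigma^2 G_F(s)}\ \textrm{(flow)},\qquad \frac{S^2}{8\sigma^2 G_B(s)}\ \textrm{(storage leakage)},\qquad \frac{\mu^2\sqrt{s}}{\sigma^2\Omega_2\,\gamma^{3/2}}\ \textrm{(fast generation)},
\]
the slowly-varying prefactors $2\sigma_F,\sigma_B,\sigma_W$ being absorbed into the exponent's $(1+o(1))$ in all but one case treated below. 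The structural observation driving the whole argument is that the first two exponents depend on the design only through $s$, whereas the third is the only one that feels $\gamma$ and the over-provisioning $\mu$.

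First I would optimize over $s$. Writing $u=S/C$, the flow and storage-leakage exponents are equalized by taking $s=f(u)$ to solve $G_B(s)/G_F(s)=u^2/4$; since $G_B$ is increasing and $G_F$ decreasing in $s$ the ratio is monotone, so the root is unique, and for $u=\Theta(1)$ one checks $f(u)=\Theta(1)$, so that the Lemma applies verbatim. Defining $\Gamma(u):=u\,G_F(f(u))=4G_B(f(u))/u$, the common value of these two exponents is exactly $CS/(2\sigma^2\Gamma(u))$, which is the claimed rate.

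It then remains to pick $\gamma$ so that the fast-generation term does not dominate, and here the two regimes separate. When $\mu=\exp\{-o(CS/\sigma^2)\}$ I would set $\gamma=\tilde f(u)\,(\mu^2/CS)^{2/3}$, which makes the fast-generation exponent equal to $\sqrt{s}\,CS/(\sigma^2\Omega_2\tilde f^{3/2})$, again of order $CS/\sigma^2$; choosing $\tilde f$ so that this is at least $CS/(2\sigma^2\Gamma(u))$ balances all three terms, and $\gamma\to 0$ is guaranteed by $\mu/C\to 0$ together with $CS/\sigma^2\to\infty$. When instead $\mu=\exp\{-\omega(CS/\sigma^2)\}$ the ratio $\mu/\sigma_W$ can no longer be trusted to be large, so I would take $\gamma=\exp\{-2CS/(3\Gamma(u)\sigma^2)\}$; then $\gamma^{3/2}=\exp\{-CS/(\Gamma(u)\sigma^2)\}$, so the \emph{prefactor} $\sigma_W\approx\sigma\exp\{-CS/(2\Gamma(u)\sigma^2)\}$ already matches the target rate after bounding $\Ftail(\mu/\sigma_W)\le\tfrac12$, while the flow and storage exponents are untouched because their $O(\sqrt{\gamma})$ corrections still vanish. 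Collecting the three bounds gives $\etot\le\exp\{-CS/(2\sigma^2\Gamma(u))(1+o(1))\}$ in both cases.

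The main obstacle is the analysis of the profile $u\mapsto\Gamma(u)$, especially its boundary behaviour. The limit $u\to\infty$ is the easy one: the balance forces $f(u)\to\infty$, and inserting the large-$s$ asymptotics $G_B(s)\sim\K_2\sqrt{s}/2$, $G_F(s)\sim\K_2/(2\sqrt{s})$ from the Lemma into $\Gamma=u\,G_F(f(u))$ yields $f(u)\sim u^2/4$ and hence $\Gamma(u)\to\K_2$. The limit $u\to 0$ is delicate precisely because it drives $f(u)\to 0$, \emph{outside} the regime $s=\Theta(1)$ in which the Lemma's variance formulae (derived under $\beta\to\infty$) hold; establishing $\Gamma(u)=\Gamma_0 u+o(u)$ requires re-deriving the small-$s$ behaviour of $G_B,G_F$ directly from the exact filters and matching it to the two-dimensional no-storage analysis of Theorem \ref{thm:2D_noS_largemu}, consistent with the physical picture that vanishing storage recovers the pure-transmission rate of order $C^2/\sigma^2$. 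The remaining work is the uniform bookkeeping of the various $o(1)$ and $O(\sqrt{\gamma})$ error terms through the optimization as $CS/\sigma^2\to\infty$.
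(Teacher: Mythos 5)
Your proposal is correct and follows essentially the same route the paper intends: plug the lemma's variance asymptotics into the tail bounds of Eq.~\eqref{eq:Epsilon}, choose $s=f(S/C)$ to equalize the flow and storage exponents (yielding $\Gamma(u)=u\,G_F(f(u))=4G_B(f(u))/u$ and the common rate $CS/(2\sigma^2\Gamma)$), and pick $\gamma$ separately in the two over-provisioning regimes, relying on the exponentially small prefactor $\sigma_W$ when $\mu=\exp\{-\omega(CS/\sigma^2)\}$. The paper explicitly omits the analytic characterization of $\Gamma, f, \tilde f$ ``for the sake of brevity,'' and your reconstruction of it --- including the check $\Gamma(u)\to\K_2$ via $f(u)\sim u^2/4$ and your correct caveat that the $u\to 0$ asymptotics lie outside the lemma's $s=\Theta(1)$ regime --- matches the intended argument.
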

The functions $\Gamma, f, \tilde{f}$ in the last statement can be
characterized analytically, but we omit such characterization for
the sake of brevity. As seen by comparing with Theorem
\ref{thm:1dStorage}, the greater connectivity implied by a two
dimensional grid leads to a faster decay of the cost.

\section{Extension to a larger class of distributions}
\label{sec:subgaussian}

We find that our results from Section \ref{sec:performance} immediately generalize to a much broader class of  distributions for $Z_i(t)$ than Gaussian.

To define this class, first we provide the definition of {\em sub-Gaussian} random variables.
(See, for instance, \cite{Vershynin} for more details.)
\begin{definition}
A random variable $X$ is  \emph{sub-Gaussian} with  tail parameter
$s^2$ if, for any $\lambda\in\reals$,
\begin{eqnarray}
\E\big\{e^{\lambda(X-\E X)}\big\}\le \, e^{\lambda^2 s^2/2}\, .\label{eq:SubGaussianDef}
\end{eqnarray}
\end{definition}

Two important examples of sub-Gaussian random variables are:

\begin{enumerate}
\item Gaussian random variables with variance $\sigma^2$ are sub-Gaussian with tail parameter $s^2 = \sigma^2$.
\item Random variables with bounded support on $[a,b]$ are sub-Gaussian with tail parameter $s^2 = (b-a)^2/4$.
\end{enumerate}

Notice that the tail parameter is always an upper bound on the
variance $\sigma^2$, namely $\sigma^2\le s^2$ (this follows by Taylor
expansion of Eq.~(\ref{eq:SubGaussianDef}) for small $\lambda$).
We will consider the class of distributions for the net production $Z_i(t)$ to be sub-Gaussian  with tail parameter of the same order as the variance. More precisely:
\begin{definition}
Fix constant $\kappa > 0 $. Let $\S(\kappa)$ be the class of distributions such that the sub-Gaussian tail parameter $s^2$ and the standard deviation $\sigma^2$ satisfy:
$$ 1\le  \frac{s^2}{\sigma^2} < \kappa.$$
\end{definition}

This is a natural class of distributions for modeling renewable power production. The power generated by wind turbines, for example, are bounded between $0$ and a upper power limit, with significant probability that the power is near $0$ or capped at the upper limit. Hence, it is of bounded support with the range comparable to the standard deviation. It is sub-Gaussian with tail parameter of the same order as the variance.

We will now argue that all the results we derived in Section \ref{sec:performance} for Gaussian net productions extend to this class of distribution. The only fact we used to connect variances with the costs, where we used the Gaussianity assumption, is Eq.~\eqref{eq:Epsilon}. This equation implies that $\ve_{\F}$ decreases exponentially with $(C/\sigma_F)^2$, and $\ve_{\W}$ decreases exponentially with $(S/\sigma_B)^2$ and with $(\mu/\sigma_W)^2$, which in turns leads to Theorems \ref{thm:1dNoStorage},
\ref{thm:2D_noS_largemu}, \ref{thm:1dStorage}, \ref{thm:2DStorage}. We will show that these exponential dependencies  hold for distributions in $\S(\kappa)$ as well, and a similar versions of these theorems hold for these distributions.

First we need some elementary properties of sub-Gaussian random
variables.
The first property follows by elementary manipulations with moment
generating functions.
\begin{lemma}\label{lemma:CombiningSubGauss}
Assume $X_1$ and $X_2$ to be independent random variables with tail
parameters $s_1^2$ and $s_2^2$. Then, for any
$a_1,a_2\in\reals$, $X=a_1X_1+a_2X_2$ is sub-Gaussian with tail
parameter $(a_1^2 s_1^2+a_2^2 s_2^2)$.
\end{lemma}
Notice that by this lemma, the parameters of sub-Gaussian random
variables behave exactly as variances (as far as linear operations are involved).
In particular, it implies that the class $\S(\kappa)$ is closed under linear operations.

%%In the following, we shall sometime refer to the tail parameter $\sigma^2$ as
%%to the \emph{pseudo-variance}.

The second property is a well known consequence of Markov
inequality, and shows that the tail of a sub-Gaussian random variable
is dominated by the tail of a Gaussian with the same parameter.
\begin{lemma}\label{lemma:GaussianTails}
If $X$ is a sub-Gaussian random variable with parameter $s^2$,
then, for any $a\ge 0$ $\prob\{X\ge a+\E X \},\prob\{X\le - a + \E X
\}\le \exp\{-a^2/(2 s^2)\}$.

%%Viceversa, there exists an absolute numerical constant $K$, such that,
%%if   $\prob\{X\ge a+\E X \},\prob\{X\le - a + \E X
%%\}\le \exp\{-a^2/(2\sigma^2)\}$ for all $a>0$, then $X$ has
%%subgaussian tails with parameter $K\sigma^2$.
\end{lemma}

Now suppose the net productions $Z_i(t)$'s have distributions in $\S(\kappa)$.
The LQ scheme developed in Section \ref{sec:LQG_design} implies that
the controlled variables $B_i(t)$, $F_e(t)$, $W_i(t)$
are linear functions of the net productions $Z_i(t)$, and hence it follows that  $B_i(t)$, $F_e(t)$,
$W_i(t)$ are in $\S(\kappa)$. Now, if we let $F_e$ be the flow at edge $e$ at steady-state, with sub-Gaussian tail parameter $s_F^2$, then
\begin{eqnarray*}
\ve_{\F} & = & \E\{(\F_e-C)_++ (-C-\F_e)_+\} \\
& = & \int_{C}^\infty \prob\{F_e > a\} \de a + \int_{-\infty}^{-C}
\prob\{F_e < a\} \,\de a \\
& \le &  \int_{C}^\infty\exp\{-a^2/(2s_F^2)\}\, \de a +
\int_{-\infty}^{-C} \exp\{-a^2/(2s_F^2)\}\,\de a \\
& \le &  2 \int_{C}^\infty  \exp\{-a^2/(2 \kappa \sigma_F^2)\} \,\de a \\
& = & 2 \sqrt{2\pi} \sigma_F \,\Ftail\Big(\frac{C}{\sqrt{\kappa} \sigma_F}\Big).
\end{eqnarray*}
Here, as in Eq.~\eqref{eq:Epsilon}, $\Ftail(\,\cdot\, )$ is the
complementary cumulative distribution function of the standard
Gaussian random variable: $\Ftail(x) = 1-\Phi(x)$.  Similarly, one can show that:
$$ \ve_{\W} \le \sqrt{2\pi} \sigma_B\Ftail\Big(\frac{S}{2\sqrt{\kappa} \sigma_B}\Big)+ \sqrt{2\pi} \sigma_W\Ftail\Big(\frac{\mu}{\sqrt{\kappa} \sigma_W}\Big).$$

Thus, $\ve_{\F}$ decreases exponentially with $(C/\sigma_F)^2$, and $\ve_{\W}$ decreases exponentially with $(S/\sigma_B)^2$ and with $(\mu/\sigma_W)^2$, as in the Gaussian case, except for an additional factor of $1/\kappa$ in the exponent. This means that analogues of Theorems \ref{thm:1dNoStorage},
\ref{thm:2D_noS_largemu}, \ref{thm:1dStorage}, \ref{thm:2DStorage} also hold for distributions in $\S(\kappa)$ with an additional factor of $1/\kappa$ in the exponent of $\epsilon_{\rm tot}$. Note that all these exponents are proportional to $1/\sigma^2$, where $\sigma^2$ is the variance of the Gaussian distributed $Z_i(t)$. Therefore, equivalently, one can say that the performance under a sub-Gaussian distributed $Z_i(t)$ with tail parameter $s^2$ is at least as good as if $Z_i(t)$ were Gaussian with variance $s^2$. 

\section{Performance limits}
\label{sec:lower}

In this section, we prove general lower bounds on the outage $\etot = \ve_{\W}+ \ve_{\F}$
of \emph{any scheme}, on the 1-D and 2-D grids. Our proofs use cutset
type arguments.
Throughout this section, we will assume the $Z_i(t)$ to be i.i.d.
random variables, with  $Z_i(t)\sim\normal(\mu,\sigma^2)$, with the
exception of the case of a one-dimensional grid without storage,
cf. Theorem~\ref{thm:1dNoStorageLB}.
In this case, we will make the weaker assumption that the $Z_i(t)$ are
i.i.d. sub-Gaussian.

\subsection{No storage}

\subsubsection{One-dimensional grid}

\begin{theorem}\label{thm:1dNoStorageLB}
Consider a one-dimensional grid without storage, and assume the net
productions $Z_i(t)$ to be i.i.d. sub-Gaussian random variables in the
class $\S(\kappa)$. (In particular this assumption holds if
$Z_i(t)\sim\normal(\mu,\sigma^2)$.)

There exist finite constants $\const_0,\const_1,\const_3>0$ dependent
only on $\const$, such that the following happens.
For $\mu < \const_0\sigma$ and $\sigma< \const_0C$, we have
\begin{align}
\etot \geq
\left \{ \begin{array}{ll}
\const_1\sigma^2/C & \mbox{if } \mu < \sigma^2/C \, ,\\
\mu \exp \left\{ - \const_2 \mu C/\sigma^2 \right \}
& \mbox{otherwise.}
\end{array} \right .
\end{align}
\end{theorem}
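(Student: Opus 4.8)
The plan is to use a cutset argument that isolates a finite segment of the one-dimensional grid and tracks the energy imbalance that must be absorbed either by the two boundary links or by fast generation. Fix a segment $I=\{1,\dots,\ell\}$ of $\ell$ consecutive nodes, where $\ell$ will be chosen as a function of the parameters. The aggregate net production inside $I$ over a single time slot is $\mathcal Z_I(t):=\sum_{i\in I}Z_i(t)$, which has mean $\ell\mu$ and variance $\ell\sigma^2$. Since there is no storage, at every time slot the total injection into $I$ must leave through exactly the two edges at the ends of the segment (capacities $C$ each) or be covered by fast generation inside $I$. Summing the energy-conservation identity $\mathcal Z_I(t)=\sum_{i\in I}\mathcal W_i(t)+(\text{net outflow across the cut})$ and bounding the net outflow by the two boundary flows, I would obtain the deterministic inequality
\begin{equation}
\sum_{i\in I}\big(\mathcal W_i(t)\big)_-\ \ge\ \big(\mathcal Z_I(t)-\mathcal F_{e_L}(t)-\mathcal F_{e_R}(t)\big)_-\, ,
\end{equation}
where $e_L,e_R$ are the left and right boundary edges. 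Taking expectations and normalizing by $|V|$, any scheme must pay for $\mathcal Z_I(t)$ exceeding in magnitude what the two links can carry.

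The next step is to convert this into a lower bound on $\etot$. The key point is that $\etot$ simultaneously penalizes flow-capacity violations and fast-generation shortfalls, so I would argue that \emph{regardless of how the scheme splits the burden}, a fixed fraction of the fluctuation of $\mathcal Z_I(t)$ beyond the cut capacity $2C$ must be paid. More precisely, conditioning on the (typical) event that $\mathcal Z_I(t)$ falls below $\ell\mu-c\sqrt{\ell}\,\sigma$ for a suitable constant $c$, the shortfall $\mathcal Z_I(t)-2C$ is of order $\sqrt{\ell}\,\sigma-2C-\ell\mu$ in magnitude, and this must be covered by fast generation since the flows are capped at $\pm C$ up to the flow-violation penalty. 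For the regime $\mu<\sigma^2/C$ I would take $\ell\asymp C^2/\sigma^2$ so that the standard deviation $\sqrt{\ell}\,\sigma\asymp C$ of $\mathcal Z_I(t)$ is comparable to the cut capacity while the mean contribution $\ell\mu\asymp \mu C^2/\sigma^2\lesssim C$ stays subdominant; a constant-probability downward fluctuation then forces $\Omega(\sqrt{\ell}\,\sigma)=\Omega(C)$ units of fast generation shortfall spread over $\ell$ nodes, giving a per-node cost $\Omega(\sigma^2/C)$, which is the first branch.

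For the second regime $\mu\ge\sigma^2/C$, the mean overproduction now dominates, so a shortfall in $I$ requires a large-deviation downward fluctuation of $\mathcal Z_I(t)$ below $2C$. Here I would instead take $\ell\asymp C/\mu$ so that $\ell\mu\asymp C$, and estimate $\prob\{\mathcal Z_I(t)\le 2C\}$; since $\mathcal Z_I(t)$ is a sum of $\ell$ i.i.d.\ sub-Gaussian variables with mean $\ell\mu$ and tail parameter at most $\kappa\sigma^2$ per summand, Lemma~\ref{lemma:CombiningSubGauss} gives that $\mathcal Z_I(t)$ is sub-Gaussian with parameter $\ell\kappa\sigma^2$, and a lower-tail bound yields $\prob\{\mathcal Z_I(t)\le 2C\}=\exp\{-\Omega(\mu C/\sigma^2)\}$ after substituting $\ell\asymp C/\mu$. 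On this event a fixed fraction of $\ell\mu\asymp C$ must be supplied by fast generation, producing per-node cost $\mu\exp\{-\const_2\mu C/\sigma^2\}$, matching the second branch. The main obstacle I anticipate is the lower-tail estimate: the sub-Gaussian definition in Eq.~\eqref{eq:SubGaussianDef} controls only the \emph{upper} tail via the stated moment generating function bound, and a genuine lower bound on $\prob\{\mathcal Z_I(t)\le 2C\}$ of the claimed exponential order does not follow from sub-Gaussianity alone—it requires a matching lower bound on the probability of a moderate deviation. I would handle this by restricting to the class $\S(\kappa)$ and using that these distributions are nondegenerate with variance $\sigma^2$ comparable to $s^2$, so a central-limit/Cramér-type lower bound on the downward deviation is available; making this quantitative and uniform over $\S(\kappa)$ is the delicate part, and it is presumably why the weaker sub-Gaussian hypothesis suffices only for the upper branch while the sharper lower-deviation control is needed for the exponential branch.
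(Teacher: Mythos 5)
Your proposal is correct and follows essentially the same route as the paper: a cutset argument on a segment of length $\ell$, with $\ell \asymp C^2/\sigma^2$ when $\mu < \sigma^2/C$ and $\ell \asymp C/\mu$ otherwise (the paper takes $\ell = \min(C/\mu,\, C^2/\sigma^2)$), charging any shortfall of the segment beyond the cut capacity to $\ve_{\W}$ or $\ve_{\F}$. The lower-tail probability lower bound you rightly identify as the delicate point is precisely the paper's Lemma~\ref{lemma:Tail} in the appendix, proved uniformly over $\S(\kappa)$ by exponential tilting plus the Berry--Esseen theorem---and note it is invoked for both branches (even the constant-deviation regime needs it for general sub-Gaussian laws), not only the large-deviation branch as your closing remark suggests.
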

\begin{proof}
Consider a segment of length $\ell$. Let $\Ev$ be the event that the
segment has net demand at least $3C$. Then we have
\begin{align}
\prob [\Ev] \geq \const_3 \exp\Big\{ -\frac{\const_4 (3C+\ell\mu)^2}{2\sigma^2 \ell} \Big\}\, ,
\end{align}
for some $\const_3,\const_4>0$. (This inequality is immediate for
$Z_i(t)\sim\normal(\mu,\sigma^2)$ and follows from Lemma
\ref{lemma:Tail} proved in the appendix for general random variables
in $\S(\kappa)$.)

If $\Ev$ occurs at some time $t$, this leads to a shortfall of at least $C$ in the segment of length $\ell$. This shortfall contributes either to $\ve_{\W}$ or to $2\ve_{\F}$, yielding
\begin{align}
2\etot \geq \ve_{\W}+2\ve_{\F} \geq \frac{\const_3 C}{\ell} \exp\Big\{
-\frac{\const_4 (3C+\ell\mu)^2}{\sigma^2 \ell} \Big\}\, .
\end{align}
Choosing $\ell = \min \left ( C/\mu, C^2/\sigma^2 \right )$, we obtain the result.
\end{proof}

Note that the lower bound is tight both  for $\mu \ge \sigma^2/C$ (by
Theorem \ref{thm:1dNoStorage}) and  $\mu <\sigma^2/C$ (by a simple
generalization of the same theorem that we omit).

\subsubsection{Two-dimensional grid}

We prove a lower bound almost matching the upper bound proved in Theorem \ref{thm:2D_noS_largemu}.

\begin{theorem}
There exists $\const < \infty$
such that, for $C\ge \min(\mu, \sigma)$,
$$
\etot \geq \sigma \exp\Big\{- \const C^2/\sigma^2\Big\}\, .
$$
\end{theorem}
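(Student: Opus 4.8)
The plan is to run a cutset argument on an $\ell\times\ell$ box, in direct analogy with the one-dimensional proof of Theorem \ref{thm:1dNoStorageLB}, exploiting the scaling highlighted in Figure \ref{fig:1d_2d_cut}. Fix a box $B$ of side $\ell$, containing $\ell^2$ nodes and bounded by $m=4\ell$ edges. Energy conservation forces that, at any time $t$, the total power entering $B$ through its boundary is at most $mC+\sum_{e\in\partial B}\big((\F_e(t)-C)_++(-C-\F_e(t))_+\big)$, i.e. the nominal boundary capacity plus the flow violations on those edges. Hence, if $\Ev$ denotes the event that the aggregate net demand $-\sum_{i\in B}Z_i(t)$ exceeds $2mC$, then under $\Ev$ an uncompensated shortfall of at least $mC$ must be absorbed by local fast generation inside $B$ or by boundary flow violations. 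Summing the corresponding contributions and taking expectations, stationarity gives $\E[(\W_i(t))_-]=\ve_{\W}$ at each node and $\E[(\F_e(t)-C)_++(-C-\F_e(t))_+]=\ve_{\F}$ at each edge, so that
\[
\ell^2\,\ve_{\W}+4\ell\,\ve_{\F}\ \geq\ 4\ell C\,\prob[\Ev].
\]
Dividing by $\ell^2$ and using $\ve_{\F}\geq(4/\ell)\ve_{\F}$ for $\ell\geq 4$ yields $\etot=\ve_{\W}+\ve_{\F}\gtrsim (C/\ell)\,\prob[\Ev]$.

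It remains to lower bound $\prob[\Ev]$ and to choose $\ell$. Since the $Z_i(t)$ are i.i.d.\ $\normal(\mu,\sigma^2)$, the sum $\sum_{i\in B}Z_i(t)$ is $\normal(\ell^2\mu,\ell^2\sigma^2)$, so
\[
\prob[\Ev]=\Ftail\!\Big(\frac{2mC+\ell^2\mu}{\ell\sigma}\Big)=\Ftail\!\Big(\frac{8C+\ell\mu}{\sigma}\Big).
\]
This is the crux of the two-dimensional phenomenon: the boundary capacity $mC=4\ell C$ and the standard deviation $\ell\sigma$ of the aggregate net demand grow at the same linear rate in $\ell$, so the argument of $\Ftail$, and hence the large-deviation exponent, is essentially independent of $\ell$, in contrast with the $1/\sqrt{\ell}$ ratio of the one-dimensional cut. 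I would then take $\ell\asymp C/\sigma$ so that the prefactor $C/\ell\asymp\sigma$, obtaining, via the Gaussian tail estimate $\Ftail(x)\gtrsim x^{-1}e^{-x^2/2}$,
\[
\etot\ \gtrsim\ \sigma\,\exp\Big\{-\const\,\frac{(C+\ell\mu)^2}{\sigma^2}\Big\}\ \gtrsim\ \sigma\,\exp\Big\{-\const'\,\frac{C^2}{\sigma^2}\Big\},
\]
where the polynomial prefactor from the Gaussian tail is absorbed into the exponential constant, and the hypothesis $C\geq\min(\mu,\sigma)$ is used to guarantee $\ell\mu\lesssim C$ so that the mean term does not inflate the exponent. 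In the complementary regime $C\lesssim\sigma$, where $\exp\{-\const C^2/\sigma^2\}\asymp 1$, I would instead apply the same inequality to a single node ($\ell=1$): the event $\{-Z_i(t)\geq 8C\}$ then has probability $\Theta(1)$, directly yielding $\etot\gtrsim\sigma$.

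The main obstacle is the bookkeeping that produces the $C/\ell$ prefactor and, in particular, the control of the mean term $\ell\mu$ in the exponent; this is exactly where the constraint $C\geq\min(\mu,\sigma)$ is indispensable, and choosing $\ell$ so as to simultaneously tune the prefactor to $\sigma$ and keep $\ell\mu=O(C)$ requires care across the sub-regimes $\mu\leq\sigma$ and $\mu>\sigma$. A second, more routine, point is the passage from $\Ftail(x)\gtrsim x^{-1}e^{-x^2/2}$ to a clean exponential bound, which costs only a redefinition of $\const$ since $x\asymp C/\sigma$ is polynomially bounded. Everything else, namely energy conservation across the cut, the stationarity identities for $\ve_{\W}$ and $\ve_{\F}$, and the counting of boundary edges, parallels the one-dimensional argument and is straightforward.
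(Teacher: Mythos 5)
Your proposal is, at its core, the same argument as the paper's: the paper proves this theorem by ``a single node cutset bound,'' which is exactly your argument specialized to $\ell=1$. Indeed, your own key observation --- that in 2-D the boundary capacity $4\ell C$ and the standard deviation $\ell\sigma$ of the aggregate demand grow at the same rate, so the large-deviation exponent is $\ell$-independent --- is precisely why the paper needs no box: with $\ell=1$ the bound reads $\etot \gtrsim C\,\Ftail\big(\Theta(C/\sigma)\big) \gtrsim C\cdot(\sigma/C)\,e^{-\const C^2/\sigma^2} = \sigma\, e^{-\const C^2/\sigma^2}$, i.e.\ the $\sigma$ prefactor emerges from the polynomial factor in the Gaussian tail itself, with no need to tune $\ell \asymp C/\sigma$. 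Your box version is correct bookkeeping but buys nothing here (it pays back the same $\sigma/C$ factor by absorbing it into the exponent).

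Two details in your writeup do need repair. First, in the regime $C \lesssim \sigma$ your fallback event $\{-Z_i(t)\geq 8C\}$ at a single node produces a shortfall of only $4C$, hence $\etot \gtrsim C$ --- which does \emph{not} imply the claimed $\etot \gtrsim \sigma$ when $C = o(\sigma)$ (the target bound tends to $\Theta(\sigma)$ there for any fixed $\const$). The fix is to raise the demand threshold to $4C+\sigma$, making the shortfall at least $\sigma$ while the probability remains $\Theta(1)$, since $\mu \leq C \leq \sigma$ in this regime keeps the argument of $\Ftail$ bounded. Second, your claim that the hypothesis $C \geq \min(\mu,\sigma)$ guarantees $\ell\mu \lesssim C$ is false in the sub-regime $\mu > \sigma$: the hypothesis then yields only $C \geq \sigma$ and leaves $\mu$ unconstrained, so with $\ell \asymp C/\sigma$ the exponent $(8C+\ell\mu)^2/\sigma^2$ is not $O(C^2/\sigma^2)$. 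Taking $\ell = 1$ (the paper's choice) weakens the requirement to $\mu = O(C)$, but even that is not implied by the stated hypothesis; this gap is latent in the theorem statement itself --- for $\mu \gg C$ the bound actually fails, since the trivial control $\W_i(t)=Z_i(t)$ already gives $\ve_{\W} \approx \sigma\,\Ftail(\mu/\sigma) \ll \sigma\, e^{-\const C^2/\sigma^2}$ --- so the paper's one-line proof is implicitly confined to the regime $\mu = O(C)$, and your proof should state that restriction explicitly rather than attribute it to the stated hypothesis.
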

\begin{proof}
Follows from a single node cutset bound.
\end{proof}
We next make  a conjecture in probability theory, which, if true, leads to a
significantly stronger lower bound for small $\mu$.
For any set of vertices $\cA$ of the two-dimensional grid, we denote
by $\partial \cA$ the \emph{boundary}  of $\cA$, i.e., the set of
edges in the grid that have one endpoint in $\cA$ and the other in
$\cA^{c}$.
\begin{conjecture}
There exists $\delta>0$ such that the following occurs for all $\ell \in
\naturals$.
Let $(X_v)_{v\in\cS}$ be a collection of i.i.d. $\normal(0,1)$ random
variables
indexed by $\cS=\{1,\dots,\ell\}\times\{1,\dots,\ell\}\subseteq \integers^2$.
Then
\begin{align}
\E \Big [\, \max_{\substack{\cA \subseteq \cS \textup{ s.t.}\\ |\partial \cA| \leq 4l}}\ \sum_{v \in \cA} X_{v}  \, \Big ] \geq \delta l \log l \, .
%\prob \Bigg [\, \max_{\substack{\cA \subseteq \cS \textup{ s.t.}\\|\partial \cA| \leq 4l}}\ \sum_{v \in \cA} X_{v}  \, \, \geq \,\delta l \log l \, \Bigg ] \geq \delta \, .
\label{eq:conj_symmetric}
\end{align}
\label{conj:large_sum_symmetric}
\end{conjecture}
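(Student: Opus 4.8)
The plan is to prove the lower bound by discarding most of the combinatorial freedom, reducing to a rich but tractable sub-family of regions, and then extracting the full logarithm by a \emph{multiscale} rather than single-scale argument. I would keep only the \emph{interface} regions $\cA_h = \{(x,y)\in\cS : y\le h_x\}$, where the profile $h=(h_x)_{x=1}^{\ell}$ is $1$-Lipschitz with $1\le h_x\le \ell/2$. A direct boundary count gives $|\partial\cA_h| = \ell + \ell + h_1 + h_\ell + \sum_x|h_{x+1}-h_x| \le 4\ell$, so these are admissible, and $Y_h := \sum_{v\in\cA_h}X_v = \sum_x g_x(h_x)$ with $g_x(m)=\sum_{y\le m}X_{x,y}$ a family of independent random walks. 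Thus it suffices to lower bound $\E\max_h Y_h$ for the centered Gaussian process $(Y_h)$, whose canonical metric is $d(h,h')=\sqrt{\|h-h'\|_{1}}$. The structural fact I would establish first is that this metric is \emph{critical}: $\log N(\epsilon)\asymp \ell^2/\epsilon^2$, which follows from the sup-norm entropy $\Theta(\ell/\eta)$ of Lipschitz profiles together with $\|\cdot\|_1\le \ell\|\cdot\|_\infty$. Consequently Dudley's entropy integral yields the matching \emph{upper} bound $\E\max_h Y_h = O(\ell\log\ell)$, whereas a single application of Sudakov minoration gives only $\Omega(\ell)$. The conjectured rate is exactly the Dudley rate, so the entire content is that chaining is tight at this critical exponent.

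For the lower bound I would build a hierarchical, well-separated tree of profiles indexed by the $J=\log_2\ell$ dyadic scales. At scale $j$, partition the columns into $2^{\,j}$ blocks of width $\ell/2^{\,j}$ and, in each block independently, ramp the interface up or down by $\Theta(\ell/2^{\,j})$ (consistent with the $1$-Lipschitz constraint), choosing the sign so as to capture the favorable fluctuation of the corresponding band. The increments produced at distinct scales have essentially disjoint supports, hence are nearly independent, and each scale contributes a $d$-separation of order $\Omega(\ell)$. This realizes an approximately ultrametric tree of depth $J$ on which $(Y_h)$ restricts to an almost branching random walk with $\Theta(\ell)$ increments per level. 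The target is then a Gaussian lower bound for the maximum of such a branching structure, via the majorizing-measure / generic-chaining lower bound, or equivalently a truncated second-moment computation, giving $\E\max \gtrsim J\cdot\Theta(\ell)=\Theta(\ell\log\ell)$: a \emph{coherent} sum over all $\log\ell$ scales.

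The hard part is precisely the gap between the coherent rate $\ell\log\ell$ and the incoherent rate $\ell\sqrt{\log\ell}$ that the obvious constructions produce, and I expect this to be the main obstacle. Two difficulties drive it. First, at the critical exponent the plain second-moment method fails, since the upper tail of $\max_h Y_h$ is too heavy; as for the maximum of branching random walk and of the two-dimensional Gaussian free field, one must run a \emph{truncated} second moment, restricting to profiles whose partial multiscale gains stay below a linear-in-scale barrier (the barrier method of Bramson and of Ding--Zeitouni). Second, and specific to this problem, the perimeter constraint is a \emph{global} total-variation budget $\sum_x|h_{x+1}-h_x|\le 2\ell$ that \emph{couples} the scales: one admissible profile can afford $\Theta(\ell)$ of shifting at a single scale, or split it across scales, but not full shifts at all $\log\ell$ scales at once. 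Indeed a single interface, even optimized across scales under this budget, delivers only $\ell\sqrt{\log\ell}$, so the enhancement must come genuinely from the \emph{maximum over the tree}, not from any one profile. Reconciling the branching lower bound with this budget coupling --- showing that the barrier-constrained tree still carries enough well-separated leaves to force the full $\log\ell$ --- is where I expect the difficulty to concentrate, and is presumably why the statement is left as a conjecture.
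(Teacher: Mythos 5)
First, a point of calibration: the paper itself contains \emph{no proof} of this statement --- it is stated as a conjecture, arrived at via an unpublished divide-and-conquer heuristic and supported only by numerics, in which the maximum over regions cut by oriented paths (essentially your Lipschitz-interface class) is computed exactly by dynamic programming. Against that backdrop, the sound parts of your proposal are genuinely sound and sharpen the paper's heuristic: the restriction to interface regions $\cA_h$ with the boundary count $\le 4\ell$ is admissible; the canonical metric $d(h,h')=\sqrt{\|h-h'\|_1}$ is correct; the entropy estimate $\log N(\epsilon)\asymp \ell^2/\epsilon^2$ follows as you say from sup-norm entropy $\Theta(\ell/\eta)$ of Lipschitz profiles; and the resulting diagnosis --- Dudley gives $O(\ell\log\ell)$, Sudakov only $\Omega(\ell)$, so the conjecture asserts chaining is tight at the critical exponent --- is a correct and useful framing. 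But your proposal does not prove the statement, and you concede as much in your last paragraph; what remains is a program whose decisive step is exactly the open problem.

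The concrete gap is that your hierarchical tree, as specified, provably cannot certify $\delta\,\ell\log\ell$, and the quantitative claim ``each scale contributes a $d$-separation of order $\Omega(\ell)$'' is false for it. With $w_j=\ell 2^{-j}$ and shifts $a_j\le w_j$ (forced by the Lipschitz constraint within a block), the scale-$j$ separation is $\sqrt{2^j a_j w_j}\le \ell\,2^{-j/2}$, geometrically decaying rather than $\Theta(\ell)$ per level; and full-amplitude shifts $a_j\asymp w_j$ cost TV $\asymp\ell$ \emph{per scale}, so no single admissible leaf carries them at more than $O(1)$ scales --- your own ``budget coupling'' observation, which contradicts the earlier assertion that the tree restricts to ``an almost branching random walk with $\Theta(\ell)$ increments per level.'' Worse, within the rigid ansatz (fixed amplitudes, a $\pm$ sign per block), the maximum over all leaves is attained by optimizing each block's sign separately --- the signs decouple across blocks --- so the ensemble maximum \emph{equals} the greedy value and concentrates at $\asymp\sum_j 2^j w_j/\sqrt{J}=\ell\sqrt{\log\ell}$ after the budget-mandated amplitude reduction $a_j\asymp w_j/J$; no truncated second moment or Bramson/Ding--Zeitouni barrier can extract more from that ensemble, and that machinery in any case presupposes increments of comparable variance along every root-leaf path, which the budget forbids here. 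Even allowing per-block amplitude optimization at Lagrange price $\lambda$ for TV, each scale returns profit $\asymp\ell/\lambda$ at cost $\asymp\ell/\lambda^2$, and the budget forces $\lambda\asymp\sqrt{J}$: again $\ell\sqrt{\log\ell}$. So the $\sqrt{\log\ell}$-to-$\log\ell$ enhancement, if real (the paper's dynamic program computes the true maximum and so is blind to mechanism), must come from amplitude allocations across scales that are themselves adapted to the field in a way none of your ensembles capture --- identifying and controlling such an ensemble is the missing idea, and is presumably why the authors left the statement as a conjecture.
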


It is not hard to see that this conjecture implies a tight lower bound.
\begin{theorem}
Consider the two-dimensional grid without storage, and assume Conjecture \ref{conj:large_sum_symmetric}. Then there exists $\const< \infty$ such that for any $\mu \leq \sigma\exp(- \const C/\sigma)$ and $C > \sigma$ we have
\begin{align}
\etot \geq \sigma \, \exp\Big\{-\const C/ \sigma\Big\} \, .
\end{align}
\label{theorem:2D_Conly_usingconj}
\end{theorem}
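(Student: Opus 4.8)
The plan is to establish the matching lower bound via a cutset argument that exploits Conjecture \ref{conj:large_sum_symmetric} to control the worst-case deficit over regions with small boundary. First I would fix a box $\cS = \{1,\dots,\ell\}^2 \subseteq \integers^2$ of side $\ell$ and consider, at a single time slot $t$, the net demand $-\sum_{v\in\cA} Z_v(t)$ inside a sub-region $\cA \subseteq \cS$. By the cutset principle, any feasible scheme must compensate the total shortfall inside $\cA$ either through fast generation (contributing to $\ve_{\W}$) or through flow across the boundary $\partial\cA$, which can carry at most $C|\partial\cA|$ units of power. Writing $Z_v(t) = \mu + \sigma X_v$ with $X_v \sim \normal(0,1)$ i.i.d., the shortfall in $\cA$ is $\sigma\sum_{v\in\cA}(-X_v) - \mu|\cA|$, so restricting to regions with $|\partial\cA|\le 4\ell$ and applying the conjecture (to the sign-flipped field, which has the same law) gives an expected deficit of at least $\delta\sigma\ell\log\ell - \mu|\cA|$. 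Since $|\cA|\le \ell^2$ and $\mu \le \sigma\exp(-\const C/\sigma)$ is extremely small, the $\mu|\cA|$ term is negligible, leaving an unavoidable deficit of order $\sigma\ell\log\ell$ on a boundary of size $O(\ell)$.

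Next I would convert this deficit into a lower bound on $\etot$. The flow across $\partial\cA$ can absorb at most $C|\partial\cA| \le 4C\ell$ of the deficit; choosing $\ell$ of order $\exp(\Theta(C/\sigma))$ so that $\delta\sigma\ell\log\ell$ comfortably exceeds $4C\ell$ (this requires $\sigma\log\ell \gg C$, i.e. $\log\ell \gg C/\sigma$, which fixes the scale of $\ell$), forces the excess deficit to be covered by fast generation. The residual deficit per unit of the $\ell^2$ nodes then yields a contribution to $\ve_{\W}$ of the desired exponential order $\sigma\exp(-\const C/\sigma)$. I would make this quantitative by bounding below the probability that the worst-case sub-region deficit is large: the conjecture controls the expectation of the maximum, and a second-moment or concentration argument (the maximum over exponentially many regions is a Lipschitz function of the Gaussian field, hence concentrated) transfers the expectation bound to a lower bound on $\prob$ of a large deficit event, which is then multiplied by the per-occurrence cost to bound $\ve_{\W}$.

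The main obstacle is the step that turns the expectation bound of Conjecture \ref{conj:large_sum_symmetric} into a usable lower bound on the steady-state cost $\ve_{\W}+\ve_{\F}$. The subtlety is that $\etot$ is defined through expectations of the positive parts $(\W_i)_-$ and $(\F_e - C)_+$, so I cannot simply assert that a large deficit occurs with constant probability; I need the expected uncovered shortfall, aggregated correctly. The clean way is to argue pathwise: in every realization there exists some $\cA$ with $|\partial\cA|\le 4\ell$ achieving the maximal deficit, and the total cost (fast generation inside $\cA$ plus flow across $\partial\cA$) must be at least (deficit $- 4C\ell$)$_+$; taking expectations and invoking the conjecture bounds $\E[\text{cost in }\cS]$ from below, and dividing by $|\cS| = \ell^2$ gives the per-node bound. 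Care is needed because the optimal $\cA$ is random and the cost assignment must not double-count flow, but the cutset structure ensures each boundary edge is charged at most twice, exactly as in Theorem \ref{thm:1dNoStorageLB}.

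Finally, I would optimize the choice of $\ell$: the tension is that a larger $\ell$ makes the conjectured $\delta\sigma\ell\log\ell$ deficit grow faster than the $4C\ell$ flow budget, sharpening the bound, but the per-node normalization divides by $\ell^2$. Balancing these yields the optimal $\log\ell = \Theta(C/\sigma)$, producing $\etot \ge \sigma\exp\{-\const C/\sigma\}$ and matching Theorem \ref{thm:2D_noS_largemu} up to the constant $\const$. The condition $C > \sigma$ guarantees $\ell$ is a valid growing integer, and $\mu \le \sigma\exp(-\const C/\sigma)$ ensures the over-provisioning term never fills the deficit, so the lower bound is governed entirely by the spatial fluctuations captured by the conjecture.
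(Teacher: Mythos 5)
Your proposal is correct and follows essentially the same route as the paper's proof: a cutset argument on a square of side $\ell = \exp(\Theta(C/\sigma))$, invoking Conjecture \ref{conj:large_sum_symmetric} to exhibit a sub-region of boundary capacity at most $4C\ell$ with expected deficit $\delta\sigma\ell\log\ell - \mu\ell^2$, then charging the uncovered shortfall to $\ve_{\W}$ and $\ve_{\F}$ and normalizing by $\ell^2$. The pathwise-expectation step you single out as the clean way to use the conjecture is exactly what the paper does, yielding $2\etot \geq \ve_{\W} + 2\ve_{\F} \geq \left(\delta\sigma\ell\log\ell - 4C\ell - \mu\ell^2\right)/\ell^2$, so the concentration detour you contemplate is unnecessary.
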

\begin{proof}
Consider a square of side $\ell$. Conjecture
\ref{conj:large_sum_symmetric} yields that we can find a subset of
vertices in the square with a boundary capacity no more than $4C\ell$,
but with a net demand of at least $\delta \sigma \ell \log \ell
- \mu \ell^2$.  This yields
\begin{align}
2\etot \geq \ve_{\W} + 2 \ve_{\F} \geq \frac{\delta \sigma \ell\log \ell - 4C\ell- \mu  \ell^2}{\ell^2} \; .
\end{align}
Choosing $\ell = \exp (\const C/\sigma)$ with an appropriate choice of $\const$, we obtain the result.
\end{proof}
\vspace{-0.3cm}

Our conjecture was arrived at based on a heuristic divide-and-conquer argument. We validated our conjecture numerically as follows:
We obtain a lower bound to the left hand side of Eq.~\eqref{eq:conj_symmetric}, by maximizing over a restricted class of subsets $\mathbb{S}_{\rm op}$, consisting of subsets that can be formed by dividing the square into two using an oriented path (each step on such a path is either upwards or to the right). It is easy to see that if $\cS \in \mathbb{S}_{\rm op}$, then $|\partial \cS | \leq 4 l$. Define
\begin{align}
G (l) \equiv \max_{\cS \in \, \mathbb{S}_{\rm op}} \  \sum_{v \in S} X_v \, ,
\label{eq:Gl_defn}
\end{align}
where $\mathbb{S}_{\rm op}$ is implicitly a function of $l$.
The advantage of considering this quantity is that $G(l)$ can be computed using a simple dynamic program of quadratic complexity.
Numerical evidence, plotted in Figure \ref{fig:conj_test}, suggests that $\E[G(l)] =\Omega(l \log l)$, which implies our conjecture.

\begin{figure}
\label{fig:conj_test}
\centering
\includegraphics[scale = 0.7]{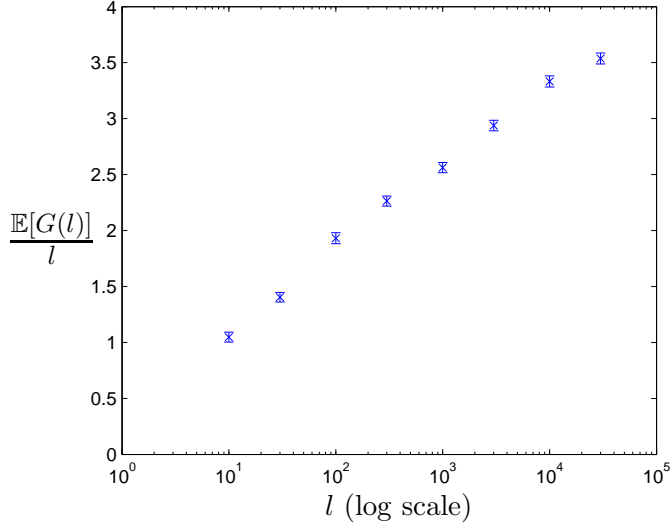}
\put(-150,0){$l$ (log scale)}
\put(-270,100){\Large $\frac{\E[G(l)]}{l}$}
\caption{Numerical evidence for Conjecture \ref{conj:large_sum_symmetric}: The approximate straight line in the plot (cf. definition of $G(l)$ in Eq.~\eqref{eq:Gl_defn}) suggests the validity of our conjecture.}
\end{figure}

\subsection{With storage}
\label{subsec:perflim_withstorage}
\subsubsection{One-dimensional grid}

Our approach involves mapping the time evolution of a control scheme in a one-dimensional grid, to a feasible (one-time) flow in a two-dimensional grid. One of the dimensions represents `space' in the original grid, whereas the other dimension represents time.

Consider the one-dimensional grid, with vertex set $\integers$.
%Let the set of edges be $E_1$.
We construct a two-dimensional `space-time' grid $(\hV, \hE)$
consisting of copies of each $v \in V$, one for each time
$t \in \integers$: define $\hV \equiv \{(v,t): v \in \integers, t \in \integers \}$.
The edge set $\hE$ consists of `space-edges' $\Esp$ and `time-edges' $\Et$.
\begin{align*}
\hE &\equiv \Esp \cup \Et\\
\Esp &\equiv \{((v,t), (v+1,t)): v \in \integers, t \in \integers \}\\
\Et &\equiv \{((v,t), (v,t+1)): v \in \integers, t \in \integers \}
\end{align*}
Edges are undirected. Denote by $\hC_e$ the capacity of $e \in \hE$. We define
$\hC_e \equiv  C$ for $e \in \Esp$ and $\hC_e=S/2$ for  $e \in \Et$.

Given a control scheme for the 1-D grid with storage, we define the
flows in the space-time grid as
\begin{align*}
\hF_{e} &\equiv \F_{(v, v+1)}(t) \qquad \qquad \mbox{for }e = ((v,t), (v+1,t)) \in \Esp\\
\hF_{e} &\equiv \B_v(t+1) - S/2 \qquad  \mbox{for }e = ((v,t), (v,t+1)) \in \Et
\end{align*}
Notice that these flows are not subject to Kirchoff constraints,
but the following energy balance equation is satisfied at each node $(v,t) \in \hV$,
\begin{align}
\Z_i(t) - \W_i(t) - \Y_{i}(t) = \sum_{(v',t') \in \partial (v,t)} \hF_{(v,t), (v',t')}
\end{align}
We use performance parameters as before (this definition applies to
finite networks and must be suitably modified for infinite graphs):
\begin{eqnarray*}
\ve_{\hF} &\! \equiv \!&\frac{1}{|\hE|}\sum_{e\in \hE} \E\{(\hF_e(t)-\hC_e)_++ (\hC_e-\hF_e(t))_+\}\, ,\\
\ve_{\W} &\! \equiv \!& \frac{1}{|\hV|}\sum_{(i,t)\in \hV}\E\{\big(W_i(t)\big)_-\}\,
.
\end{eqnarray*}
Notice that $\ve_\W$ is unchanged, and $\ve_\hF = \ve_\F$, in our mapping from the 1-D grid with storage to the 2-D space-time grid.

Our first theorem provides a rigorous lower bound which is almost tight
for the case $\mu = e^{-o(\sqrt{CS/\sigma^2})}$  (cf. Theorem \ref{thm:1dStorage}). It is proved by considering a rectangular region in the space-time grid of side $l = \max(C/S, 1)$ in space and $T= \max(1, S/C)$ in time.
\begin{theorem}
Suppose $\mu \leq \min(C, S)$, $CS/\sigma^2 > \max(\log(\sigma/\min(C,S)), 1)$. There exists $\const< \infty$ such that
\begin{align}
\etot \geq \sigma \exp(- \const CS/\sigma^2) \, .
\end{align}
\label{theorem:1D_CS_largemu_lb}
\end{theorem}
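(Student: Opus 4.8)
The plan is to run a cutset argument on the space-time grid $(\hV,\hE)$ constructed above, exactly in the spirit of the single-node bound behind Theorem~\ref{thm:2D_noS_largemu}, but now applied to a rectangular block whose two pairs of sides carry different capacities. I would fix a rectangle $\cR\subseteq\hV$ consisting of $l$ sites in the space direction and $T$ sites in the time direction, with $l=\max(C/S,1)$ and $T=\max(1,S/C)$, so that $lT=\max(C,S)/\min(C,S)$. Its boundary consists of $2T$ space-edges (each of capacity $C$) and $2l$ time-edges (each of capacity $S/2$), so the total boundary capacity is $2TC+lS=3\max(C,S)$ in both regimes $C\ge S$ and $S\ge C$. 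The net block demand $\sum_{\cR}(-\Z_v(t))$ is what must be supplied either across $\partial\cR$ or by local fast generation.

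Summing the space-time energy balance over $(v,t)\in\cR$, the internal space-flows cancel and the storage increments $\Y_v(t)$ telescope in time into the top/bottom time-edge flows; bounding the net inflow across $\partial\cR$ by the total boundary capacity plus the boundary violations, and using $-\W\le(\W)_-$, I obtain
\begin{equation}
\sum_{(v,t)\in\cR}\big(-\Z_v(t)\big)\ \le\ (2TC+lS)+\sum_{(v,t)\in\cR}\big(\W_v(t)\big)_-+\sum_{e\in\partial\cR\cap\Esp}\big(|\hF_e|-C\big)_+\, .
\end{equation}
The structural point I would emphasize is that the time-edges never overflow: on a time-edge $\hF_e=\B_v(t+1)-S/2$, and the \emph{actual} storage obeys the hard limit $\B_v(t+1)\in[0,S]$, so $|\hF_e|\le S/2=\hC_e$ always. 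Hence the violation sum is supported on the $2T$ boundary space-edges, and $\ve_\hF=\ve_\F$. Taking expectations and using stationarity ($\E[\sum_\cR(\W_v(t))_-]=lT\,\ve_\W$ and $\E[\sum_{e\in\partial\cR\cap\Esp}(|\hF_e|-C)_+]= 2T\,\ve_\F$) yields $lT\,\ve_\W+2T\,\ve_\F\ge\E[(\sum_\cR(-\Z_v(t))-(2TC+lS))_+]$.

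Next I would lower-bound the right-hand side by restricting to the event $\Ev$ that the block demand exceeds twice the boundary capacity, on which the bracket is at least $2TC+lS=3\max(C,S)$. Since $\sum_\cR\Z_v(t)\sim\normal(\mu\,lT,\sigma^2 lT)$ and the hypothesis $\mu\le\min(C,S)$ forces $\mu\,lT\le\max(C,S)$, the relevant Gaussian tail has argument $\Theta(\max(C,S))/(\sigma\sqrt{lT})=\Theta(\sqrt{CS}/\sigma)$, so $\prob[\Ev]\ge\Ftail(\Theta(\sqrt{CS}/\sigma))=\exp\{-\Theta(CS/\sigma^2)\}$. Using $\ve_\W,\ve_\F\le\etot$, together with $2T+lT=\Theta(lT)$ in both regimes, gives
\begin{equation}
\etot\ \ge\ \frac{2TC+lS}{\,lT+2T\,}\,\prob[\Ev]\ =\ \Theta\big(\min(C,S)\big)\,\exp\big\{-\Theta(CS/\sigma^2)\big\}\, .
\end{equation}

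Finally I would trade the prefactor $\min(C,S)$ for $\sigma$: if $\min(C,S)\ge\sigma$ this is immediate, while if $\min(C,S)<\sigma$ I write $\min(C,S)/\sigma=\exp\{-\log(\sigma/\min(C,S))\}$ and invoke the hypothesis $CS/\sigma^2>\log(\sigma/\min(C,S))$ to absorb this logarithm into the exponent at the cost of enlarging $\const$ (the hypothesis $CS/\sigma^2>1$ making the exponent dominant). This produces $\etot\ge\sigma\exp\{-\const\,CS/\sigma^2\}$. The genuinely delicate points, to my mind, are not computational: they are (i) getting the cutset bookkeeping exactly right so that storage capacity enters \emph{only} through the time-edge capacity $S/2$ and not into the flow-violation budget (this is precisely the ``time-edges never overflow'' observation, which is what makes $\ve_\hF=\ve_\F$ and keeps the space-violation count at $2T$), and (ii) checking that the single choice $l=\max(C/S,1)$, $T=\max(1,S/C)$ handles the two regimes $C\gtrless S$ uniformly and simultaneously balances the space- and time-boundary capacities against the block volume $lT$; once these are in place the tail estimate and the prefactor conversion are routine.
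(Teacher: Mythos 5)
Your proof is correct and takes essentially the same route as the paper's: the identical space-time rectangle with $l=\max(C/S,1)$, $T=\max(1,S/C)$, the same cutset argument bounding the boundary capacity by $\Theta(\max(C,S))$, the same Gaussian tail estimate $\prob[\Ev]\ge\exp\{-\Theta(CS/\sigma^2)\}$ using $\mu\, lT\le\max(C,S)$, and the same final conversion of the $\min(C,S)$ prefactor to $\sigma$ via the hypothesis $CS/\sigma^2>\log(\sigma/\min(C,S))$. If anything, your bookkeeping is slightly more careful than the paper's (you correctly use the demand-side event, which matters since only $(\W_i(t))_-$ is penalized, and you make explicit why time-edges never violate their capacity $S/2$), but these are refinements of the same argument, not a different approach.
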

\begin{proof}
Consider a segment of length $\ell = \max(C/S, 1)$ and a sequence of $T = \ell S/C$ consecutive time slots. (Rounding errors are
easily dealt with.) The number of nodes in the corresponding region $\cR$ in the space-time grid is
$$n\equiv \ell T = \max(C,S)/\min(C,S).$$
The cut, i.e., the connection between $\cR$ and the rest of the grid, is of size $2(lS +TC) = 4 \max(C,S)$. The net generation inside $\cR$ is $\normal( n \mu, \sigma^2 n )$. Now $\mu \leq C$ by assumption, implying $n \mu \leq \max(C, S)$. Let $\Ev$ be the event that the net generation inside $\cR$ is at least $5 \max(C, S)$.
We have
$$
\prob[\Ev] \geq \exp\left(-  \frac{\const_1(\max(C,S))^2}{\sigma^2 n}\right) \geq \exp\left(- \frac{\const_1CS}{\sigma^2}\right)
$$
for some $\const_1< \infty$. Moreover, $\Ev$ leads to a shortfall of at least $\max(C,S)$ over $n$ nodes in the space-time grid. It follows that
$$
\etot \geq \big ( \max(C, S)/n \big ) \exp\left(- \frac{\const_1CS}{\sigma^2}\right)
= \min(C, S) \exp\left(- \frac{\const_1CS}{\sigma^2}\right)\, ,
$$
which yields the result, using $CS/\sigma^2 > \log(\sigma/\min(C,S))$.

%Note that $\mu \ell\leq C$ since $\mu \leq S$. Let $\Ev_t$ be the event that the demand of the segment exceeds $6C$ at time $t$. Note that since $C> \sigma$, we have
%\begin{align}
%\prob[\Ev_t] \geq \exp(-\const_1 C^2/(\sigma^2\ell)) \, ,
%\end{align}
%for some $\const_1< \infty$. If $\Ev_t$ occurs, at most $2C$ of this
%demand can be met from flow incoming from outside the segment, the
%remaining $4C$ must contribute to shortfall. Choose $T = \ell S/C$. Let $\Ev = \cap_{t=1}^T \Ev_t$. We have
%\begin{align}
%\prob[\Ev] \geq \exp(-\const_1 CS/\sigma^2)\, .
%\label{eq:1DCS_Csmall}
%\end{align}
%If $S>C$, we have $\ell=1$, and $T = S/C$.
%Now $\Ev$ leads to a shortfall of at least $2S$ over $T$ time steps, even if the storage was full (at level $S$) to begin with. Thus, for $S>C$,
%\begin{align}
%\etot \geq C \exp(-\const_1 CS/\sigma^2) \, .
%\end{align}
%If $C\geq S$, we have $\ell=C/S$, and $T = 1$.
%Now $\Ev$ leads to a shortfall of at least $2S$ over $1$ time step and $\ell$ nodes, even if all the storages were full to begin with. Thus, for $C\geq S$,
%\begin{align}
%\etot \geq (S^2/C) \exp(-\const_1 CS/\sigma^2) \, .
%\label{eq:1DCS_Ssmall}
%\end{align}
%The result follows from Eqs.~\eqref{eq:1DCS_Csmall} and \eqref{eq:1DCS_Ssmall}.
\end{proof}

Next we provide a sharp lower bound for
small $\mu$ using Conjecture \ref{conj:large_sum_symmetric}.
%Indeed, we show that Conjecture \ref{conj:large_sum_symmetric}
%implies a lower bound that roughly matches the small $\mu$ case
%of Theorem \ref{thm:1dStorage} for $C=S$.
Recall Theorem \ref{theorem:2D_Conly_usingconj} and notice that its
proof does not make any use of Kirchoff
flow constraints (encoded in Eq.~(\ref{eq:nablaphi})). Thus, the same result holds for a 2-D space-time grid. We immediately obtain the following result, suggesting that the  upper bound in Theorem \ref{thm:1dStorage} for small $\mu $ is tight.
\begin{theorem}
There exists $\const<\infty$ such that the following occurs if we assume that Conjecture \ref{conj:large_sum_symmetric} is valid. Consider the one-dimensional grid with parameters $C= S > \sigma$, and $\mu \leq \exp(-\const C/\sigma)$. We have
\begin{align}
\etot \geq \sigma \exp \Big \{ -\const \sqrt{CS/\sigma^2} \Big\} \, .
\end{align}
\end{theorem}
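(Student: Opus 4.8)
The plan is to reduce this statement directly to Theorem~\ref{theorem:2D_Conly_usingconj} via the space-time mapping introduced in Section~\ref{subsec:perflim_withstorage}. The crucial observation, already flagged in the remark preceding the theorem, is that the proof of Theorem~\ref{theorem:2D_Conly_usingconj} is purely a cutset argument: it invokes Conjecture~\ref{conj:large_sum_symmetric} to exhibit a subset $\cA$ of a square of side $\ell$ with $|\partial\cA|\le 4\ell$ boundary edges and net demand at least $\delta\sigma\ell\log\ell-\mu\ell^2$, and then bounds the uncompensated shortfall by subtracting the total boundary capacity. No Kirchhoff (flow) constraint enters. Hence the same argument applies verbatim to any planar grid whose per-node net generation is i.i.d.\ $\normal(\mu,\sigma^2)$ and whose edge capacities are all $\Theta(C)$.

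First I would recall that, under the mapping of Section~\ref{subsec:perflim_withstorage}, a control scheme on the 1-D grid with storage induces a feasible (one-time) set of flows on the 2-D space-time grid $(\hV,\hE)$, with the per-node net generation $Z_v(t)\sim\normal(\mu,\sigma^2)$ i.i.d.\ over both the space index $v$ and the time index $t$, and with the outage preserved: $\ve_{\hF}=\ve_{\F}$ and $\ve_{\W}$ unchanged, so that the space-time outage equals $\etot$. The space-edges carry capacity $C$ and the time-edges capacity $S/2$. Setting $C=S$ makes every edge capacity lie in $[C/2,C]=\Theta(C)$, so a subset with $|\partial\cA|\le 4\ell$ edges has total boundary capacity at most $4C\ell$, exactly as in the original square-grid proof.

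With these two ingredients in place, I would run the proof of Theorem~\ref{theorem:2D_Conly_usingconj} on the space-time grid. Fixing a space-time square of side $\ell$, Conjecture~\ref{conj:large_sum_symmetric} produces a subset $\cA$ with $|\partial\cA|\le 4\ell$ and expected net demand at least $\delta\sigma\ell\log\ell-\mu\ell^2$; the shortfall not absorbable across the cut is then at least $\delta\sigma\ell\log\ell-4C\ell-\mu\ell^2$, which must be covered by fast generation or by flow violations, yielding $2\etot\ge(\delta\sigma\ell\log\ell-4C\ell-\mu\ell^2)/\ell^2$. Choosing $\ell=\exp(\const C/\sigma)$ and using $\mu\le\exp(-\const C/\sigma)$ to kill the $\mu\ell^2$ term gives $\etot\ge\sigma\exp\{-\const C/\sigma\}$. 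Finally, since $C=S$ we have $C/\sigma=\sqrt{CS/\sigma^2}$, which is precisely the claimed bound.

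The argument is genuinely short, so there is no deep obstacle; the only points requiring care are $(i)$ checking that the space-time net generation is i.i.d.\ $\normal(\mu,\sigma^2)$ over the full two-dimensional region (it is, because the original $Z_i(t)$ are i.i.d.\ over both time and space, so the sum over any space-time set has the same Gaussian law as a spatial sum over a planar set of the same size), and $(ii)$ confirming that the mild anisotropy of the capacities (space $C$ versus time $S/2$) is harmless once $C=S$, so that the $4C\ell$ boundary-capacity bound used in Theorem~\ref{theorem:2D_Conly_usingconj} is recovered up to an absorbable constant. The constant shift between the hypothesis $\mu\le\exp(-\const C/\sigma)$ here and $\mu\le\sigma\exp(-\const C/\sigma)$ in Theorem~\ref{theorem:2D_Conly_usingconj} is likewise absorbed into $\const$.
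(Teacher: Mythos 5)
Your proposal is correct and follows exactly the paper's own route: the paper derives this theorem from the observation that the proof of Theorem~\ref{theorem:2D_Conly_usingconj} is a pure cutset argument making no use of the Kirchhoff constraints, so it transfers verbatim to the 2-D space-time grid of Section~\ref{subsec:perflim_withstorage}, with $C=S$ giving $C/\sigma=\sqrt{CS/\sigma^2}$. Your additional checks (i.i.d.\ space-time net generation, the harmless capacity anisotropy $C$ versus $S/2$, and absorbing the constant shift in the hypothesis on $\mu$) simply make explicit details the paper leaves implicit.
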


We remark that the requirement $C=S$ can be relaxed if we assume a generalization of Conjecture \ref{conj:large_sum_symmetric} to rectangular regions in the two-dimensional grid.
%Based on this stronger conjecture, we can obtain a lower bound of $\eps \geq \sigma \exp \left \{ -\frac{\const \sqrt{CS}}{\sigma} \right \}$ without the requirement $C = S$ (
%We only need $CS > \sigma^2$ and $C> \sigma$.

\subsubsection{Two-dimensional grid}
\begin{theorem}\label{theorem:2D_CS_lb}
There exists a constant $\const< \infty$ such that on the two-dimensional grid,
\begin{align}
\etot \, \geq  \sigma \exp\left\{  - \frac{\const C \max(C,S)}{\sigma_i^2} \right \}\, .
\end{align}
\end{theorem}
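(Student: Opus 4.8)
The plan is to mirror the space-time cutset argument of Theorem~\ref{theorem:1D_CS_largemu_lb}, but now lifting the two-dimensional grid with storage to a \emph{three}-dimensional space-time grid, two spatial dimensions plus one time dimension. As in the one-dimensional construction, I would introduce vertices $(v,t)$ with $v\in\integers^2$ and $t\in\integers$, space-edges (within a fixed time slot, joining spatial neighbours) of capacity $C$, and time-edges (joining $(v,t)$ to $(v,t+1)$) of capacity $S/2$ carrying the storage flow $\B_v(t+1)-S/2$. The energy-balance identity at each node carries over verbatim, so that for any region $\cR$ the aggregate net demand inside $\cR$ must be met by fast generation plus net inflow across the boundary $\partial\cR$, with any excess over the boundary capacity charged to $\ve_\W+2\ve_\F$. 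No Kirchhoff constraint is used, so the three-dimensional mapping is as legitimate here as the two-dimensional one was for the line.

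Next I would choose $\cR$ and estimate its cut. For an $\ell\times\ell$ spatial box held over $T$ consecutive time slots, the number of nodes is $n=\ell^2 T$, the space-boundary contributes $4\ell T$ edges of capacity $C$, and the top and bottom time-boundaries contribute $2\ell^2$ edges of capacity $S/2$; hence $\mathrm{cut}=4\ell T C+\ell^2 S$. The key computation is
\begin{align}
\frac{\mathrm{cut}^2}{n}=\frac{(4\ell TC+\ell^2 S)^2}{\ell^2 T}=\frac{(4TC+\ell S)^2}{T}\,,
\end{align}
which is increasing in $\ell$ and therefore minimized at $\ell=1$, a single storage column extended in time. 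Optimizing the remaining scalar expression $(4TC+S)^2/T$ over $T\ge 1$ yields $T=\max(1,\,S/4C)$, and a short case check ($S\le 4C$ versus $S>4C$) shows that both regimes glue to $\mathrm{cut}^2/n=\Theta\big(C\max(C,S)\big)$ and $\mathrm{cut}/n=\Theta(C)$. The crucial structural observation here is that, because the time-boundary cut grows like the spatial \emph{area} $\ell^2 S$, spatial enlargement never helps in this regular cutset bound; this is exactly why the exponent is $C\max(C,S)$ rather than anything smaller, and why the sharper bound of Theorem~\ref{theorem:2D_Conly_usingconj} needed Conjecture~\ref{conj:large_sum_symmetric} and irregular regions.

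With $\cR$ fixed, the tail estimate is routine. The net generation inside $\cR$ is $\normal(n\mu,n\sigma^2)$, so under the hypothesis $\mu\le\const_0\min(C,S)$ (which forces $n\mu\le\mathrm{cut}$) the event $\Ev$ that the net demand in $\cR$ exceeds $2\,\mathrm{cut}$ has probability at least $\exp\{-\const_1\,\mathrm{cut}^2/(n\sigma^2)\}=\exp\{-\const\,C\max(C,S)/\sigma^2\}$. On $\Ev$ a shortfall of at least $\mathrm{cut}$ cannot be absorbed by transmission and storage and must be charged to $\ve_\W+2\ve_\F$; spreading this over the $n$ nodes of $\cR$ and using $\mathrm{cut}/n=\Theta(C)$ together with $C\gtrsim\sigma$,
\begin{align}
\etot\ \ge\ \frac{\mathrm{cut}}{2n}\,\prob[\Ev]\ \ge\ \sigma\,\exp\Big\{-\const\,\frac{C\max(C,S)}{\sigma^2}\Big\}\,,
\end{align}
after absorbing constants. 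Rounding of $T$ to an integer is handled exactly as in Theorem~\ref{theorem:1D_CS_largemu_lb}.

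I expect the main obstacle to be bookkeeping rather than conceptual: verifying that the three-dimensional space-time balance charges the unmet demand correctly to $\ve_\W+2\ve_\F$, so that storage depletion is genuinely captured by the time-edge capacities and not double-counted, and confirming that the two regimes of the scalar optimization of $(4TC+S)^2/T$ indeed collapse to the single scale $C\max(C,S)$. The only other point to watch is the mild assumption on $\mu$ and the $C\gtrsim\sigma$ condition needed to extract the leading $\sigma$ prefactor, both direct analogues of the hypotheses of Theorem~\ref{theorem:1D_CS_largemu_lb}; no new probabilistic input is required.
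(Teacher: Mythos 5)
Your proposal is correct and takes essentially the same route as the paper: the paper proves Theorem~\ref{theorem:2D_CS_lb} by precisely the space-time cutset argument you describe, applied to a single spatial node extended over $T=\Theta(\max(1,S/C))$ time slots (cut capacity $4TC+S$), in direct analogy with the proof of Theorem~\ref{theorem:1D_CS_largemu_lb}, yielding the exponent $\Theta(C\max(C,S)/\sigma^2)$. Your extra optimization over the box side $\ell$ --- noting that $(4TC+\ell S)^2/T$ is increasing in $\ell$, so $\ell=1$ is optimal --- is a useful justification of why the paper may restrict to a single node, but otherwise the argument, including the tail estimate, the charging of the shortfall to $\ve_\W+2\ve_\F$, and the mild hypotheses on $\mu$ and $C/\sigma$ needed for the prefactor, coincides with the paper's.
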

The theorem is proved by considering a single node, using a cutset type argument, similar to the proof of Theorem \ref{theorem:1D_CS_largemu_lb}. It implies that the upper bound in Theorem \ref{thm:2DStorage} is tight up to constants in the exponent.

\subsection*{Acknowledgements} This work was partially supported
by NSF grants CCF-0743978, CCF-0915145 and CCF-0830796.

\appendix

\section{A probabilistic lemma}
\label{sec:appendix}

\begin{lemma}
\label{lemma:Tail}
Let $\{X_1,X_2,\dots,X_n,\dots\}$ be a collection of
i.i.d. sub-Gaussian random variables in $\S(\kappa)$
with $\E X_1=0$, $\E\{X_1^2\}=\sigma^2$.

Then there exists finite constants $\kappa_1=\kappa_1(\kappa)>0$,
$\kappa_2=\kappa_2(\kappa)>0$,$n_0=n_0(\kappa)$ depending uniquely on
$\kappa$ such that, for all  $n\ge n_0$, $0\le\gamma\le
\kappa_1\sigma$, we have
\begin{eqnarray}
\prob\Big\{\sum_{i=1}^nX_i\ge \gamma n\Big\}\ge \frac{1}{4}\,
\exp\Big\{-\frac{n\gamma^2}{\kappa_2\sigma^2}\Big\}\, .
\end{eqnarray}
\end{lemma}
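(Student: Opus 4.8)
The plan is to establish this moderate/large-deviation lower bound by the classical change-of-measure (exponential tilting) method, supplemented by a Berry--Esseen estimate to cover deviations on the central-limit scale. After rescaling so that $\sigma=1$ (replacing $X_i$ by $X_i/\sigma$ and $\gamma$ by $\gamma/\sigma$, which leaves the class $\S(\kappa)$ invariant), I would split the range $0\le\gamma\le\kappa_1$ according to whether the deviation $\gamma n$ is of central-limit scale or of genuine large-deviation scale, i.e.\ according to whether $\gamma\sqrt n\le T_0$ or $\gamma\sqrt n>T_0$ for a fixed threshold $T_0=T_0(\kappa)$.

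In the central-limit regime $\gamma\sqrt n\le T_0$, I would invoke Berry--Esseen: every $X_i\in\S(\kappa)$ has third absolute moment bounded by a constant depending only on $\kappa$, so $S_n/\sqrt n$ is within $C(\kappa)/\sqrt n$ of a standard Gaussian in Kolmogorov distance. Hence $\prob\{\sum_i X_i\ge\gamma n\}=\prob\{S_n/\sqrt n\ge\gamma\sqrt n\}\ge 1-\Phi(T_0)-C(\kappa)/\sqrt n$, which exceeds a positive constant once $n\ge n_0$. As the target right-hand side is at most $\tfrac14$, this regime is handled by choosing $\kappa_2$ small enough (equivalently $1/\kappa_2$ large enough) that $\tfrac14 e^{-n\gamma^2/\kappa_2}\le\tfrac14 e^{-T_0^2/\kappa_2}$ falls below that constant.

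For the large-deviation regime, let $\psi(\lambda)=\log\E\{e^{\lambda X_1}\}$ and let $\prob_\lambda$ be the tilt $\de\prob_\lambda\propto e^{\lambda x}\de\prob$, with $\lambda=\lambda(\gamma)$ chosen so that $\psi'(\lambda)=\gamma$. Restricting to the window $\{\gamma n\le S_n\le\gamma n+\sqrt n\}$ and changing measure gives
\[
\prob\Big\{\textstyle\sum_i X_i\ge\gamma n\Big\}\ge e^{-nI(\gamma)-\lambda\sqrt n}\,\prob_\lambda\{0\le S_n-\gamma n\le\sqrt n\},\qquad I(\gamma)=\lambda\gamma-\psi(\lambda).
\]
Under $\prob_\lambda$ the summands are i.i.d.\ with mean $\gamma$ and variance $\psi''(\lambda)$, so a second Berry--Esseen estimate bounds the window probability below by a constant $p_0=p_0(\kappa)>0$. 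Since this regime is defined by $\gamma\sqrt n>T_0$, we have $\gamma\sqrt n\le n\gamma^2/T_0$, so the correction $\lambda\sqrt n\le C\gamma\sqrt n$ is absorbed into the main exponent; after enlarging $1/\kappa_2$ the bound becomes $\tfrac14 e^{-n\gamma^2/\kappa_2}$, consistently with the central-limit regime since both regimes demand only that $1/\kappa_2$ be large enough.

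The crux, and the step I expect to be the main obstacle, is controlling the rate function uniformly over the whole class $\S(\kappa)$: I must show $I(\gamma)\le\gamma^2/\kappa_2$ with $\kappa_2$ depending only on $\kappa$. Because $\psi\ge 0$ (as $\psi(0)=\psi'(0)=0$ and $\psi$ is convex), it suffices to prove $\lambda(\gamma)\le C(\kappa)\,\gamma$, which amounts to a uniform lower bound $\psi''(\lambda)\ge c(\kappa)>0$ on a neighbourhood of $0$. This follows from the sub-Gaussian moment bounds: writing $\psi''(\lambda)=\E_\lambda[X^2]-(\E_\lambda X)^2$, the uniform integrability of $X^2e^{\lambda X}$ together with the controlled third moment give $\E_\lambda[X^2]\ge 1-C|\lambda|$ and $(\E_\lambda X)^2\le C\lambda^2$, whence $\psi''(\lambda)\ge\tfrac12$ for $|\lambda|\le\lambda_1(\kappa)$. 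This pins $\lambda(\gamma)\asymp\gamma$ and dictates that $\kappa_1$ be chosen small (depending on $\kappa$) so that $\gamma\le\kappa_1$ keeps $\lambda(\gamma)$ inside the good neighbourhood. The same uniform moment control, applied to the tilted laws, is precisely what legitimizes both Berry--Esseen applications with constants depending on $\kappa$ alone. The probabilistic scheme is standard; the genuine work is forcing every constant to depend only on $\kappa$.
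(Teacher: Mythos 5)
Your route is sound and genuinely different from the paper's. The paper also tilts, choosing $\lambda=g^{-1}(\gamma)$ where $g(\lambda)=\E_\lambda X_1$, but it avoids your window estimate and your regime split entirely: it applies Cauchy--Schwarz to $Z(\lambda)=e^{\lambda\sum_i X_i}$ to get $\prob\{\Ev\}\ge \prob_\lambda\{\Ev\}^2\,(\E Z)^2/\E(Z^2)=\prob_\lambda\{\Ev\}^2\,H(\gamma)^n$, where $H(\gamma)=h(g^{-1}(\gamma))$ and $h(\lambda)=(\E e^{\lambda X_1})^2/\E e^{2\lambda X_1}\ge 1-\kappa''\lambda^2$ near $0$. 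Since the tilt makes $\{\sum_i X_i\ge \gamma n\}$ a probability-$\approx 1/2$ event under $\prob_\lambda$ (Berry--Esseen), this yields $\tfrac14 e^{-n\gamma^2/\kappa_2}$ in one stroke for all $0\le\gamma\le\kappa_1\sigma$, with no $\lambda\sqrt n$ window correction to absorb and hence no threshold $T_0$. Your version is the textbook Cram\'er-type lower bound (tilt, restrict to a window of width $\sqrt n$, Berry--Esseen under the tilt), which costs the $e^{-\lambda\sqrt n}$ factor and forces the split at $\gamma\sqrt n=T_0$. Both arguments rest on exactly the uniform ingredient you correctly identify as the crux: for $0\le\lambda\le\lambda_1(\kappa)$, $c(\kappa)\le\psi''(\lambda)=\Var_\lambda(X_1)\le C(\kappa)$ and $\E_\lambda|X_1|^3\le C(\kappa)$ uniformly over $\S(\kappa)$; the paper states the same bounds ($1/\kappa''\le\Var_\lambda(X_1)\le\kappa''$, hence $\lambda/\kappa''\le g(\lambda)\le\kappa''\lambda$) and uses them in the same two places you do, namely to pin $\lambda(\gamma)\asymp\gamma$ (so the exponent is $\asymp n\gamma^2$) and to run Berry--Esseen under the tilted law.

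One step as written fails: your disposal of the central-limit regime. For $\gamma\sqrt n\le T_0$ one has $n\gamma^2\le T_0^2$, so the target $\tfrac14 e^{-n\gamma^2/\kappa_2}$ lies \emph{between} $\tfrac14 e^{-T_0^2/\kappa_2}$ and $\tfrac14$ --- your displayed inequality is reversed --- and at $\gamma=0$ it equals $\tfrac14$ for \emph{every} choice of $\kappa_2$. Hence shrinking $\kappa_2$ (enlarging $1/\kappa_2$) cannot push the target below the Berry--Esseen constant $1-\Phi(T_0)-C(\kappa)/\sqrt n$; that device only helps where $n\gamma^2$ is bounded below, i.e.\ in the large-deviation regime. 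The repair is simple and stays inside your scheme: choose $T_0$ below the Gaussian upper quartile, $T_0<\Phi^{-1}(3/4)\approx 0.674$ (e.g.\ $T_0=1/2$, giving $1-\Phi(T_0)\approx 0.31$), and take $n_0=n_0(\kappa)$ large enough that $1-\Phi(T_0)-C(\kappa)/\sqrt{n_0}>\tfrac14$; then the central regime is closed unconditionally, and the large-deviation regime is unaffected since there $\lambda\sqrt n\le C\gamma\sqrt n\le (C/T_0)\,n\gamma^2$ is still absorbed by a $\kappa$-dependent choice of $\kappa_2$. With this local fix your proof is complete and correct.
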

\begin{proof}
By scaling, we will assume, without loss of generality, $\sigma^2=1$.
Throughout the proof $\kappa',\kappa'',\dots$ denote constants
depending uniquely on $\kappa$. We  will use the same symbol even if
the constants have to be redefined  in the course of the proof.

For any $\lambda\in\reals$, let $\prob_{\lambda}$, $\E_{\lambda}$ denote
probability and expectation with respect to the measure defined
implicitly by
\begin{eqnarray}
\E_{\lambda}\{f(X_1,\dots,X_n)\}  \equiv
\frac{ \E\{f(X_1,\dots,X_n)\,
  e^{\lambda\sum_{i=1}^nX_i} \} }
{ \E\{e^{\lambda\sum_{i=1}^nX_i}\} }\, ,
\end{eqnarray}
for all measurable functions $f$. Notice that this measure is well
defined for all $\lambda$ by sub-Gaussianity.

Let $g(\lambda)\equiv \E_{\lambda}X_1$. Then $\lambda\mapsto
g(\lambda)$ is continuous, monotone increasing with $g(0)=0$,
$g'(\lambda) =\Var_{\lambda}(X_1)$, $g''(\lambda) = \E_{\lambda}
X_1^3-3\E_{\lambda}X_1\E_{\lambda} X_1^3$.
Bounding these quantities by sub-Gaussianity, it follows that, for $0\le
\lambda\le \kappa'$, we have
$1/\kappa''\le \Var_{\lambda}(X_1)\le \kappa''$, and hence
\begin{eqnarray}
\frac{\lambda}{\kappa''}\le g(\lambda)\le \kappa''\lambda\, .
\end{eqnarray}
Define $\gamma_{+/-}\equiv\lim_{\lambda\to\pm\infty} g(\lambda)$. Notice that
$\gamma_-<0<\gamma_+$ and that $g^{-1}$ (the inverse function of $g$)
is well defined on the interval $(\gamma_-,\gamma_+)$.

Define
\begin{eqnarray}
h(\lambda) \equiv\frac{(\E e^{\lambda X_1})^2}{\E (e^{2\lambda X_1})}\, .
\end{eqnarray}
By Taylor expansion, we get $h(\lambda) =
1-\lambda^2\E(X_1^2)+O(\lambda^3) =1-\lambda^2+O(\lambda^3)$.
Proceeding as above, it is not hard to prove that
$h(\lambda)\ge 1-\kappa''\lambda^2$ for all $0\le \lambda\le \kappa'$
for some finite constants $\const',\const''>0$ (eventually different
from above).
Finally, for $\gamma\in(\gamma_-,\gamma_+)$ we define
\begin{eqnarray}
H(\gamma)\equiv h(g^{-1}(\gamma))\, .
\end{eqnarray}
Combining the above, we have $H(\gamma) = 1-\kappa'' \gamma^2$ for
all $\gamma\in [0,\kappa']$, and  therefore
\begin{eqnarray}
H(\gamma)\ge e^{-\gamma^2/\kappa_2}\, \;\;\;\;\;\mbox{ for all }\gamma\in[0,\kappa_1]\, .
\end{eqnarray}

Now, for $\gamma\in [0,\kappa_1]$, let $\Ev=\Ev(\gamma)$ be the event that
$X_1+\dots+X_n\ge n\gamma$. Take $\lambda = g^{-1}(\gamma)$ and define
$Z(\lambda) \equiv \exp\{\lambda\sum_{i=1}^nX_i\}$. By Cauchy-Schwarz
inequality
\begin{eqnarray*}
\prob\{\Ev\}&\ge
&\frac{\E\{\ind_{\Ev}Z(\lambda)\}^2\}}{\E\{Z(\lambda)^2} \\
& = &\prob_{\lambda}\{\Ev\}^2 \;\frac{\{\E Z(\lambda)\}^2}{\E\{Z(\lambda)^2\}}\\
& = & \prob_{\lambda}\{\Ev\}^2 \, H(\gamma)^n\\
&\ge & \prob_{\lambda}\{\Ev\}^2\,
\exp\Big\{-\frac{n\gamma^2}{\kappa_2}\Big\}\, .
\end{eqnarray*}
The proof is completed by noting that $\prob_{\lambda}\{\Ev\}^2\ge 1/4$
for all $n\ge n_0(\kappa)$, by Berry-Esseen central limit theorem
(note indeed that, under $\prob_{\lambda}$, $X_1$,\dots,$X_n$ have mean $\gamma$, variance lower
bounded by $\Var_{\lambda}(X_i)\ge\kappa'>0$ and $\E_{\lambda}(|X_i|^3)\le \kappa''<\infty$).
\end{proof}

\bibliographystyle{IEEEtran}
%\bibliography{smartgrid}
%\bibliography{IEEEabrv,mybibfile}

\begin{thebibliography}{10}
\providecommand{\url}[1]{#1}
\csname url@rmstyle\endcsname
\providecommand{\newblock}{\relax}
\providecommand{\bibinfo}[2]{#2}
\providecommand\BIBentrySTDinterwordspacing{\spaceskip=0pt\relax}
\providecommand\BIBentryALTinterwordstretchfactor{4}
\providecommand\BIBentryALTinterwordspacing{\spaceskip=\fontdimen2\font plus
\BIBentryALTinterwordstretchfactor\fontdimen3\font minus
  \fontdimen4\font\relax}
\providecommand\BIBforeignlanguage[2]{{%
\expandafter\ifx\csname l@#1\endcsname\relax
\typeout{** WARNING: IEEEtran.bst: No hyphenation pattern has been}%
\typeout{** loaded for the language `#1'. Using the pattern for}%
\typeout{** the default language instead.}%
\else
\language=\csname l@#1\endcsname
\fi
#2}}

\bibitem{MacKayBook}
D.~MacKay, \emph{Sustainable energy without the hot air}.\hskip 1em plus 0.5em
  minus 0.4em\relax UIT Cambgridge, 2009.

\bibitem{Korpaas2003a}
M.~Korpaas, A.~Holen, and R.~Hildrum, ``Operation and sizing of energy storage
  for wind power plants in a market system,'' \emph{International Journal of
  Electrical Power and Energy Systems}, vol.~25, no.~8, pp. 599--606, October
  2003.

\bibitem{Korpaas2003b}
M.~Korpaas, R.~Hildrum, and A.~Holen, ``Optimal operation of hydrogen storage
  for energy sources with stochastic input,'' in \emph{IEEE Power Tech
  Conference Proceedings}, 2003.

\bibitem{Jukka2005}
J.~Paatero and P.~Lund, ``Effect of energy storage on variations in wind
  power,'' \emph{Wind Energy}, vol.~8, no.~4, pp. 421--441, 2005.

\bibitem{Brown2008}
P.~Brown, J.~P. Lopes, and M.~Matos, ``Optimization of pumped storage capacity
  in an isolated power system with large renewable penetration,'' \emph{IEEE
  Transactions on Power Systems}, vol.~23, no.~2, pp. 523 -- 531, May 2008.

\bibitem{Nyamdasha2010}
{B.~Nyamdasha and E.~Dennya and M.~O'Malley}, ``The viability of balancing wind
  generation with large scale energy storage,'' \emph{Energy Policy}, vol.~38,
  no.~11, pp. 7200--7208, November 2010.

\bibitem{Su2011}
H.~Su and A.~E. Gamal, ``Modeling and analysis of the role of fast-response
  energy storage in the smart grid,'' in \emph{Proc. Allerton Conference},
  2011.

\bibitem{DCpowerflow}
B.~Stott, J.~Jardim, and O.~Alsac, ``{DC power flow revisited},'' \emph{IEEE
  Trans. Power Systems}, vol.~24, no.~3, pp. 1290--1300, Aug. 2009.

\bibitem{Kwong91}
R.~H. Kwong, ``On the linear quadratic gaussian problem with correlated noise
  and its relation to minimum variance control,'' \emph{SIAM J. Control and
  Optimization}, 1991.

\bibitem{Vershynin}
R.~Vershynin, ``{Introduction to the non-asymptotic theory of random
  matrices},'' in \emph{Compressed Sensing, Theory and Applications}, Y.~Eldar
  and G.~Kutyniok, Eds.\hskip 1em plus 0.5em minus 0.4em\relax Cambridge
  University Press, 2012, pp. 210--268.

\end{thebibliography}

\end{document}